\newtheorem{theorem}{Theorem}[section]
\newtheorem{lemma}{Lemma}[section]
\newtheorem{corollary}{Corollary}[theorem]
\newtheorem{algorithm}{Algorithm}
\theoremstyle{definition}
\newtheorem{definition}{Definition}[section]
\newcommand{\E}{\mathbb{E}}
\newcommand{\X}{\mathbf{X}}
\newcommand{\sd}[1]{\mbox{sd}{\left(#1\right)}}
\newcommand{\proj}[1]{\big|_{#1}}
\newcommand{\norm}[1]{\left\| #1 \right\|}
\renewcommand{\O}[1]{O\left(#1\right)}
\newcommand{\Ome}[1]{\Omega\left( {#1} \right)}
\newcommand{\Omep}[1]{\Omega_p\left( {#1} \right)}
\newcommand{\Op}[1]{O_p\left(#1\right)}
\newcommand{\changed}[1]{{#1}}
\newcommand{\chgx}[1]{{#1}}
\newcommand{\deleted}[1]{}
\newcommand{\blued}[1]{#1}
\newcommand{\bluedd}[1]{{#1}}
\newcommand{\gjh}[1]{{#1}}
\begin{document}

\title{Boulevard: Regularized Stochastic Gradient Boosted Trees and Their Limiting Distribution}

\author{Yichen Zhou, Giles Hooker \\ Department of Statistics and Data Science, Cornell University}
\maketitle
\begin{abstract}
This paper examines a novel gradient boosting framework for regression. We regularize gradient boosted trees by introducing subsampling and employ a modified shrinkage algorithm so that at every boosting stage the estimate is given by an average of trees. The resulting algorithm, titled ``Boulevard'', is shown to converge as the number of trees grows. We also demonstrate a central limit theorem for this limit, allowing a characterization of uncertainty for predictions. A simulation study and real world examples provide support for both the predictive accuracy of the model and its limiting behavior.
\end{abstract}

\section{Introduction}

This paper presents a theoretical study of gradient boosted trees \citep[GBM or GBT:][]{friedman2001greedy}. Machine learning methods for prediction have generally been thought of as trading off both intelligibility and statistical uncertainty quantification in favor of accuracy. Recent results have started to provide a statistical understanding of methods based on ensembles of decision trees \citep{breiman1984classification}. In particular, the consistency of methods related to random forests \citep[RF:][]{breiman2001random} has been demonstrated in \citet{biau2012analysis,scornet2015} while
\citet{wager2014confidence, mentch2016quantifying, wager2017estimation} and \citet{wager2016generalized} prove central limit theorems for RF predictions. These have then been used for tests of variable importance and nonparametric interactions in \citet{doi:10.1080/10618600.2016.1256817}.

In this paper, we extend this analysis to GBT. Analyses of RF have relied on a subsampling structure to express the estimator in the form of a U-statistic from which central limit theorems can be derived. By contrast, GBT produces trees sequentially with the current tree depending on the values in those built previously, requiring a different analytical approach. While the algorithm proposed in \citet{friedman2001greedy} is intended to be generally applicable to any loss function, in this paper we focus specifically on nonparametric regression \citep{stone1977consistent, stone1982optimal}. Given a sample of $n$ observations $(x_1, y_1), \dots, (x_n, y_n) \in [0,1]^d \times \mathbb{R}$, assume they follow the relation
$$
X \sim \mu, \quad Y = f(X) + \epsilon
$$
which satisfies the following:
\begin{enumerate}[(M1)]
\item The density, $\mu$, is bounded from above and below, i.e. $\exists 0 < c_1 < c_2$ s.t. $c_1 \leq \mu \leq c_2$.
\item $f$ is bounded Lipschitsz, i.e. $|f(x)| \leq M_f < \infty$, and $\exists \alpha > 0$ s.t. $|f(x_1) - f(x_2)| \leq \alpha \norm{x_1 - x_2}, \forall x_1, x_2\in [0,1]^d$.
\item $\epsilon$ is sub-Gaussian error with $\E[\epsilon] =0$, $\E[\epsilon^2] = \sigma^2_{\epsilon}$, $\E[\epsilon^4] < \infty.$
\end{enumerate}

To fit this sample, GBT builds correlated trees sequentially to perform gradient descent in functional space \citep{friedman2000additive}. \blued{With $L^2$ loss, the process resembles the Robbins-Monro algorithm \citep{robbins1951stochastic}, applying trees to  iteratively fit residuals.} The procedure is given as follows.

\begin{itemize}
\item Start with the initial estimate $\hat{f}_0 = 0$.
\item For $b \geq 1,$ given the current functional estimate $\hat{f}_b$, calculate the negative gradient
$$
z_i \triangleq - \frac{\partial}{\partial u_i} \sum_{i=1}^n \frac{1}{2}\left(u_i - y_i\right)^2 \Big|_{u_i = \hat{f}_{b}(x_i)}= y_i - \hat{f}_b(x_i).
$$
\item Construct a tree regressor $t_b(\cdot)$ on $(x_1, z_1),\dots, (x_n, z_n)$.
\item Update by a small learning rate $\lambda > 0$,
$$
\hat{f}_{b+1} = \hat{f}_b + \lambda t_b.
$$
\end{itemize}

Gradient boosting is initially developed from attempts to understand Adaboost \citep{freund1999short} in \citet{friedman2000additive}. \citet{mallat1993matching} studied the Robbins-Monro algorithm and demonstrated convergence when the regressors are taken from a Hilbert space. \blued{Their result was extended by \citet{buhlmann2002consistency} to show the consistency of decision tree boosting under the settings of $L^2$ norm and early stopping.} From a broader point of view, discussions of consistency and convergence of general $L^2$ boosting framework can be found in \citet{buhlmann2003boosting}, \citet{zhang2005boosting} and \citet{buhlmann2007boosting}.

Besides the original implementation, there are a number of variations on the algorithm presented above. \citet{friedman2002stochastic} incorporated subsampling in each iteration and empirically showed significant improvement in predictive accuracy. \citet{rashmi2015dart} argued that GBT is sensitive towards the initial trees, requiring lots of subsequent trees to make an impact. Their solution incorporates the idea of dropout \citep{wager2013dropout, srivastava2014dropout} to train each new iteration with a subset of the existing ensemble to alleviate the imbalance. Further, \citet{rogozhnikov2017infiniteboost} suggested to downscale the learning rate and studied the convergence of the boosting path when the learning rate is small enough to guarantee contraction.

In this paper we intend to provide a universal framework combining \blued{the factors mentioned above} into GBT to study its asymptotic behavior. Our method is particularly inspired by the recent development of the RF inferential framework \citep{mentch2016quantifying, wager2017estimation, doi:10.1080/10618600.2016.1256817}, in which the averaging structure of random forests enables analyses based on U-statistics and projections leading to asymptotic normality. Similarly, among classic stochastic gradient methods, Ruppert-Polyak \citep{polyak1992acceleration, ruppert1988efficient} averaging implies asymptotic normality for model parameter estimators by averaging the gradient descent history. The boosting framework we present also produces models that exhibit this averaging structure which we can therefore leverage. We name this algorithm {\em Boulevard} boosting and summarize it in the following Algorithm \ref{alg:boulevard}.

\begin{algorithm}[Boulevard] \label{alg:boulevard} \hspace{1cm}
\begin{itemize}
\item Start with the initial estimate $\hat{f}_0 = 0$.
\item Given the current functional functional estimate $\hat{f}_b$, calculate the negative gradient
\begin{gather}
z_i \triangleq - \frac{\partial}{\partial u_i} \sum_{i=1}^n \frac{1}{2}\left(u_i - y_i\right)^2 \Big|_{u_i = \hat{f}_{b}(x_i)}= y_i - \hat{f}_b(x_i).
\label{fml:res}
\end{gather}
\item \blued{If needed,} generate a subsample $w \subseteq \{1,2,\dots, n\}$ \blued{(otherwise let $w = \{1,2,\dots, n\}$)}.
\item Construct a tree regressor $t_b(\cdot)$ on $\{ (x_i, z_i),i \in w\}$.
\item Update by learning rate $1 > \lambda > 0$, \blued{presumably closer to $1$}, that
\begin{gather}
\hat{f}_{b+1} = \frac{b-1}{b}\hat{f}_b + \frac{\lambda}{b} t_b =  \frac{\lambda}{b} \sum_{i=1}^{b} t_i.
\end{gather}
\item \blued{When boosting ends after the $b^*$-th iteration with $\hat{f}_{b^*}$, rescale it by $\frac{1+\lambda}{\lambda}$ to obtain the final prediction $\frac{1+\lambda}{\lambda} \hat{f}_{b^*}$.}
\end{itemize}
\end{algorithm}

\blued{As indicated by the last two steps, Boulevard and GBT differ in how they update the ensemble. The name {\em Boulevard} re-imagines the random forest stretched out along a row, with the averaging structure diminishing the influence of early trees as the algorithm proceeds.} The practical benefit of Boulevard \blued{update} is \blued{threefold}. First, shrinkage makes the ensemble less sensitive to any particular tree, particularly those at the start of the process. Second, subsampling reduces overfitting. \blued{Third, it is free of early stopping rules because there is always signal in the gradients.} As a result, the final form of the predictor sits between an ordinary GBT and a RF. \blued{Moreover, Boulevard provides us with theoretical guarantees regarding its limiting behavior. If we write $\hat{f}_{b, n}$ to be the functional estimate after $b$ iterations for a sample of size $n$, we can obtain the following results in their simplified forms.
\begin{enumerate}
\item Finite sample convergence. Conditioned on any finite sample of size $n$, should we keep boosting using Boulevard, the boosting path converges almost surely. In other words,  $\exists \hat{f}_{\infty, n}: \mathbb{R}^d \to \mathbb{R}$ relying on the sample such that for $a.s.$ $x$,
$$
\hat{f}_{b,n}(x) \to \hat{f}_{\infty, n}(x), b\to\infty.
$$
\item Consistency. Under several conditions regarding the construction of regression trees, this limit aligns with the truth after rescaling by $\frac{1+\lambda}{\lambda}$. For $a.s.$ $x$,
$$
\hat{f}_{\infty, n}(x) \xrightarrow{P} \frac{\lambda}{1+\lambda} f(x), n\to\infty,
$$
where the convergence in probability is with respect to sample variability.
\item Asymptotical normality. Under the same conditions for consistency, we can prove that the prediction is asymptotically normal. For $a.s.$ $x$,
$$
\frac{\hat{f}_{\infty, n}(x) - \frac{\lambda}{1+\lambda}f(x)}{\sd{\hat{f}_{\infty, n}(x)}} \xrightarrow{d} N(0,1), n \to \infty.
$$
\end{enumerate}
 }
\blued{We will demonstrate in detail these results along with their required conditions in later sections.} So far as we are aware, these represent the first results on a distributional limit for GBT and hence the potential for inference using this framework; we hope that they inspire further refinements. It is worth noticing that Bayesian Additive Regression Trees (BART) \citet{chipman2010} were also motivated by GBT and allow the development of Bayesian credible intervals. However, the training procedure for BART resembles backfitting a finite number of trees, resulting in a somewhat different model class.

The remainder of the paper is organized as follows: \blued{In Section 2, we introduce {\em structure-value isolation}, a counterpart of honesty for boosted trees and {\em non-adaptivity} to ensure that the distribution of tree structures stabilizes. In Section 3, we show finite sample Boulevard convergence to a fixed point. In Section 4,} we further prove the limiting distribution. \blued{In Section 5, we focus on non-adaptivity and discuss how to achieve it either manually or spontaneously.} In Section 6, we present our empirical study.


\section{\bluedd{Non-adaptivity}}
The sequential process employed in constructing decision trees renders them challenging to describe mathematically and more tractable variants of the original implementation in \citet{breiman1984classification} are often employed. In particular, separating the structure of the tree from the observed responses -- which may still used to determine leaf values -- has been particularly useful.  Early analyses employed completely randomized trees in which the structure was determined independently of observed responses: see \citet{buhlmann2002analyzing,breiman2004consistency,biau2008consistency,biau2012analysis,arlot2014analysis} that, in particular, generate a connection to kernel methods observed in \citet{davies2014random} and \citet{scornet2016random}. \gjh{ See Section 3.1 of \citet{biau2016random} for an overview of these developments. }

More recent results in \citet{wager2017estimation} employ sample splitting so that the structure of a tree is determined by part of the data while the values in the leaves are decided by an independent set to create a property that is termed {\em honesty} -- independence between tree structure and leaf values. \citet{wager2017estimation} provide some evidence for the practical, as well as theoretical, utility of this construction. However, they also rely on a condition of {\em regularity}: requiring the sample probability of selecting each covariate to be bounded away from zero. This is important in controlling the size of leaves and therefore the bias of the resulting estimate.  This condition removes the variable selection properties of trees \citep[see][]{biau2012analysis} and we are unaware of practical implementations that guarantee it in practice.

In this paper, we rely on two restrictions on the tree building process: an extension of honesty that we call {\em structure-value isolation} to require the structure of trees independent of leaf values for the whole ensemble rather than individual trees,  and {\em non-adaptivity} that requires the distribution of tree structures (conditional on the sample) to be constant over boosting iterations. Non-adaptivity allows us to demonstrate the convergence of the Boulevard algorithm for a fixed sample and to show that its limit has the form of kernel ridge regression.  Structure-value isolation then allows us to provide distributional results by separating the form of the kernel from the response.

We recognize that these place restrictive conditions on the tree building process and note parallels between these and the early developments in the theory of random forests referenced above. Completely randomized trees satisfy both conditions but are not strictly necessary. Non-adaptivity can be replaced by an asymptotic version which  we argue may arise spontaneously or can be enforced in various ways. Structure-value isolation can be achieved, for example,  by global subsample splitting -- replacing the leaf values of trees by a held-out subset as Boulevard progresses, although note that our results do not cover uncertainty in the resulting kernel structure. Further details of these conditions are pursued below; we speculate in Section \ref{sec:alternative} that our distributional results may also be obtained via alternative representations, but developing these is beyond the scope of this paper.

\subsection{Structure-Value Isolation}
A decision tree \citep{breiman1984classification} predicts by iteratively segmenting the covariate space into disjoint subsets (i.e. leaves) within each of which a terminal leaf value is chosen for making predictions. Therefore we can separate the construction of a decision tree into two stages: deciding the tree structure, and deciding the values in the terminal nodes. Traditional greedy tree building algorithms use the same sample points for both of these two steps. One drawback of these algorithms is the difficulty of providing mathematical guarantees about isolating sample points with large observation errors, i.e. outliers, thereby de-stabilizing the resulting predicted values. We think of this behavior as ``chasing order statistics''. As a result, a plethora of conclusions on trees and tree ensembles rely on randomization, for example, using completely randomized trees or applying honesty as noted above.

In the context of Boulevard, the sequential dependence of trees requires us to employ a stronger condition. We define {\em structure-value isolation} as the requirement that leaf values are independent of tree structures across the entire ensemble, rather than on a tree-by-tree basis.  To make this precise, here we introduce our decision tree notations. A regression tree $t_n(\cdot)$ segments the covariate space $\Omega$ into disjoint leaves $\Omega = \bigsqcup_{j=1}^m A_j$ representing the tree structure. $\Omega = [0,1]^d$ and $\{A_j\}_{j=1}^m$ are hyper-rectangles. A traditional regression tree $t_n(\cdot)$ prediction can be explicitly expressed as
$$
t_n(x) = \sum_{i=1}^n s_{n,i}(x) y_i,
$$
where, given $x \in A_j$ for some $j$,
$$
s_{n,k}(x) = \frac{I(x_k \in A_j)}{\sum_{i=1}^n I(x_i \in A_j)},
$$
describing the influence of $x_j$ on predicting the value at $x$. Slight changes are required when a subsample is used instead of the full sample to calculate the leaf values. For given subsample $w \subset \{1,\dots, n\}$, we write
$$
t_n(x;w) = \sum_{i=1}^n s_{n,i}(x;w) y_i.
$$
In this case, for any $x \in A_j$,
$$
s_{n,k}(x)  = s_{n,k}(x; w) =\frac{I(x_k \in A_j)I(k \in w)}{\sum_{i=1}^n I(x_i \in A_j)I(i \in w)} = \frac{I(x_k \in A_j)I(k \in w)}{\sum_{x_i \in A_j} I(i \in w)}.
$$
We define $s_n(x) = (s_{n,1}(x), \dots, s_{n,n}(x))^T$ the (column) {\em structure vector} of x, and
$$
S_n =  \begin{bmatrix}
s_{n,1}(x_1) & \dots & s_{n,n}(x_1) \\
\vdots & \ddots & \\
s_{n,1}(x_n) & \dots & s_{n,n}(x_n)
\end{bmatrix} = \begin{bmatrix}s_n(x_1)^T \\ \vdots \\ s_n(x_n)^T\end{bmatrix}
$$
the {\em structure matrix} as the stacked structure vectors of the full sample. With this notation, structure-value isolation can also be viewed as the separation between the tree structure matrix and the response vector. Formally, we have

\begin{definition}
An ensemble of trees $$\hat{f}_{b, n}(x) = \sum_{j=1}^b t^j_n(x;w) = \sum_{j=1}^b \sum_{i=1}^n s_{n,i}^j(x;w) y_i$$ exhibits {\em structure-value isolation} if $s_{n,i}^j(x;w)$ is independent of the vector $Y=(y_1,\dots,y_n)^T$ for all $1\leq j \leq b$ and $x$, where we use the superscript $j$ to represent the $j$-th tree.
\end{definition}

For example, random forests utilizing completely randomized trees achieve structure-value isolation, while standard boosting and standard random forests do not.

\subsection{Non-adaptivity}

In addition to structure-value isolation, in order to demonstrate that Boulevard converges for a fixed sample, we also impose a condition which ensures that the distribution of tree structures stabilizes as Boulevard progresses. We describe this in two senses: a strong sense in which the distribution, conditional on the sample, is fixed for the whole sample, and a weak sense in which this occurs eventually.

\begin{definition} Denote $(Q_{n,b}, \mathcal{Q}_{n,b})$ the probability space of tree structures given sample $(x_1, y_1),\dots, (x_n, y_n)$ of size $n$ after $b$ trees have been built, where $Q_{n,b}$ consists of all possible tree structures and $\mathcal{Q}_{n,b}$ the probability measure on $Q_{n,b}$. A tree ensemble is {\em non-adaptive} if $(Q_{n,b}, \mathcal{Q}_{n,b}) = (Q_n, \mathcal{Q}_n)$ identical for all $b$. A tree ensemble is {\em eventually non-adaptive} if $(Q_{n,b}, \mathcal{Q}_{n,b})= (Q_n, \mathcal{Q}_n)$ identical for sufficiently large $b$.
\end{definition}

Following this definition, both standard random forests and random forests utilizing completely randomized trees are non-adaptive. In terms of boosting, since the target  responses change each iteration, a shortcut for achieving non-adaptivity is the aforementioned structure-value isolation. On the other hand, eventual non-adaptivity is a desirable condition should a GBT ensemble become stationary after enough iterations. The following algorithm provides a straightforward example to guarantee non-adaptivity.

\begin{algorithm}[Trees for Non-adaptive Boosting] \label{alg:honest} \hspace{1cm}

\begin{itemize}
\item Start with $(x_1, z_1),\dots, (x_n, z_n)$, where $z_1,\dots, z_n$ are current residuals.
\item Obtain the tree structure $q = \{A_j\}_{j=1}^m$ independently of $z_1,\dots, z_n$.
\item If needed, uniformly subsample an index set $w \subseteq \{1,\dots,n\}$ of size $\theta n$ (otherwise let $w = \{1,2,\dots, n\}$).
\item Decide the leaf values, hence $t_n(\cdot)$, merely with respect to $w$ as for $x \in A_j$,
$$
t_n(x) = \sum_{x_i \in A_j} \frac{I(i \in w)}{\sum_{x_l \in A_j} I(l \in w)} \cdot z_i,
$$
with $0/0$ defined to be $0$.
\end{itemize}
\end{algorithm}

Algorithm \ref{alg:honest} is not specific about how to decide the tree structures. For non-adaptivity it can be realized using completely randomized trees in which the gradients only influence the leaf values. Another strategy is to acquire another independent sample $(x'_1, y'_1),\dots, (x'_n, y'_n)$ and implement a random forest on this sample solely for extracting tree structures. We may also consider another alternative which uses a parallel adaptive boosting procedure on another independent sample to produce tree structures. While this option does not guarantee non-adaptivity, it results in trees with structure-value isolation and can be tailored to fulfill eventual non-adaptivity should we manage to reach an asymptotically stationary distribution of trees at the end of the parallel procedure.

In fact, eventual non-adaptivity allows the use of any tree building strategy for a finite number of trees. This assumption turns out to be sufficient for our theoretical development as long as we focus on the limit as the number of trees tends to infinite, rather than using a finite ensemble. Meanwhile it is compatible with any of the tree building algorithms used in practice for the beginning part where most signals get explored, which potentially accelerates the learning process. We will delay the discussion of fulfilling eventual non-adaptivity either spontaneously or manually to Section 5, and will focus on the Boulevard algorithm equipped with \blued{non-adaptivity} mechanism as {\em non-adaptive Boulevard} unless otherwise stated.

\subsection{Trees as Kernel Method}
\blued{Another motivation for isolating tree structures and leaf values comes from the fact that it relates decision trees to a kernel method. To clarify, recall the definition of the tree structure matrix and} denote by $(Q_n, \mathcal{Q}_n)$ the probability space of all possible tree structures given sample $(x_1, y_1),\dots, (x_n, y_n)$ of size $n$ and, when necessary, a subsampling framework $w$, where $q = \{A_i\}_{i=1}^{m_q} \in Q_n$ is the structure of a single possible tree. \blued{When we have structure-value isolation}, we can write the expected decision tree prediction on this sample as
\begin{align}
\label{fml:isolation}
\hat{Y} = \E_{q} \E_{w} [S_n] \cdot Y = \E_{q,w} [S_n] \cdot Y,
\end{align}
where $Y = (y_1,\dots, y_n)^T$, $\E_w$ the expectation with respect to subsample indices and $ \E_{q}$ the expectation with respect to the probability measure $\mathcal{Q}_n$.
\begin{theorem} \label{thm:treeaskernel} Denote by $\E_{q,w}$ the expectation over all possible tree structures and subsample index sets, then
\begin{enumerate}[(i)]
\item $\E_{q, w} [S_n] $ is symmetric, element-wise nonnegative.
\item $\E_{q, w} [S_n]$ is positive semidefinite.
\item $\norm{\E_{q, w}[S_n]}_1 \leq 1$, $\norm{\E_{q, w}[S_n]}_{\infty} \leq 1$, $\norm{\E_{q, w}[S_n]} \leq 1$. Here the last $\norm{\cdot}$ stands for the spectrum norm.
\end{enumerate}
\end{theorem}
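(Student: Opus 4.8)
The plan is to condition on a fixed tree structure $q$ and analyze the conditional expectation $\E_w[S_n]$ over the subsample, exploiting the fact that a single tree's structure matrix decomposes across leaves. Reindex the sample points according to the leaf $A_j$ into which they fall; since $s_{n,k}(x_i;w)$ is nonzero only when $x_i$ and $x_k$ share a leaf, $S_n$ is block diagonal, after this simultaneous row/column permutation, with one block per leaf. Within the block for a leaf $A$, for any indices $k,l$ with $x_k, x_l \in A$ we have $(S_n)_{kl} = I(l \in w)/|w \cap A|$, which does not depend on $k$; hence each leaf block is the rank-one matrix $\mathbf{1} v_A^\top$ with $v_A \geq 0$ entrywise. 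For a single pair $(q,w)$ this block is \emph{not} symmetric, so the averaging over $w$ is exactly what is needed to symmetrize it.

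Taking the expectation over the uniform subsample $w$ (conditional on $q$), the within-block entry becomes $\E_w[I(l \in w)/|w \cap A|]$. By exchangeability of the uniform subsample among the points of $A$, this value is a constant $c_A \geq 0$ independent of both $k$ and $l$, with the $0/0 := 0$ convention absorbing the event $|w \cap A| = 0$, on which the numerator also vanishes. Thus each leaf block of $\E_w[S_n]$ equals $c_A J$, where $J$ is the all-ones matrix, which is symmetric and positive semidefinite since $J \succeq 0$ and $c_A \geq 0$. A block-diagonal matrix whose blocks are symmetric PSD and whose off-block entries vanish is itself symmetric and PSD, giving (i) and (ii) for $\E_w[S_n]$ conditional on $q$; entrywise nonnegativity is immediate because every entry of $S_n$ is nonnegative. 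Finally $\E_q$ integrates symmetric PSD matrices, and both properties are preserved under nonnegative averaging, establishing (i) and (ii) for $\E_{q,w}[S_n]$.

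For (iii) I would bound the row sums directly: for each $k$, $\sum_l (S_n)_{kl} = I(|w \cap A(x_k)| > 0) \leq 1$, so every row of $\E_{q,w}[S_n]$ sums to at most one and, all entries being nonnegative, $\|\E_{q,w}[S_n]\|_\infty \leq 1$. Symmetry from (i) then gives $\|\cdot\|_1 = \|\cdot\|_\infty \leq 1$, and the spectral-norm bound follows from $\|M\| \leq \sqrt{\|M\|_1 \|M\|_\infty} \leq 1$ (or, equivalently, from part (ii) together with the Perron/Gershgorin bound on the largest eigenvalue of a nonnegative symmetric matrix with row sums at most one). The main obstacle is the symmetry claim: in the no-subsampling case each fixed-$q$ block is already $\tfrac{1}{n_A} J$ and manifestly symmetric, but with subsampling the per-tree matrix is genuinely asymmetric, and symmetry emerges only after averaging over $w$. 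Making that step rigorous requires the exchangeability of the subsample within each leaf and careful treatment of the degenerate event $|w \cap A| = 0$.
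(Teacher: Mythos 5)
Your proposal is correct and follows essentially the same route as the paper: fix $q$, reduce to the within-leaf block structure, show each block of $\E_w[S_n]$ is a constant multiple of the all-ones matrix (your exchangeability argument is the abstract form of the paper's explicit swap bijection $w' = w \setminus \{i\} \cup \{j\}$), deduce symmetry and positive semidefiniteness, and bound the spectral norm by $\sqrt{\norm{\cdot}_1 \norm{\cdot}_{\infty}} \leq 1$ using the row-sum bound and symmetry. If anything, your treatment of (iii) is slightly more careful than the paper's, which loosely refers to ``column sums'' of $S_n$ being at most one, whereas it is the row sums that are bounded for an individual subsampled tree, with the column-sum bound only emerging after symmetrization.
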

\blued{This $\E_{q, w}[S_n]$ in (\ref{fml:isolation}) is similar to the random forest kernel \citep{scornet2015} defined by the corresponding tree structure space, subsampling strategy and tree structure randomization approach. In terms of a single tree, honesty contributes to its symmetry and positive semidefiniteness, while subsampling decides the concentration level of the kernel. While for an ensemble, non-adaptivity guarantees that  $\E_{q, w}[S_n]$ applies to all its tree components, making it possible to use a uniform mathematical expression to describe predictions from different trees.
\begin{align*}
\hat{Y_b} = \E_{q,w} [S_n] \cdot Y_b,
\end{align*}
for all $b$, where $Y_b$ is the target negative gradients and $\hat{Y_b}$ the fitted values after $b$-th iteration.}

\blued{It is also worth noticing that in part (iii) of Theorem \ref{thm:treeaskernel} we only guarantee inequality rather than setting the norm exactly to 1. This is caused by structure-value isolation: for instance, when applying subsample splitting, it is possible that certain leaf of the tree structure produced by the first subsample contains no points of the second subsample. For such cases we can predict 0 for expediency: \gjh{we demonstrate in Section \ref{sec:cos} that this choice has an asymptotically negligible effect below. }}

\section{Convergence}
\blued{As stated in \citet{zhang2005boosting}}, a first ``theoretical issue" of analyzing boosting method is the difficulty of attaining convergence. As a starting point we will show that Boulevard guarantees point-wise convergence under finite sample settings.

\subsection{Stochastic Contraction and Boulevard Convergence} To prove convergence of the Boulevard algorithm, we introduce the following definition, lemmas and theorem inspired by the unpublished manuscript by \citet{almudevarstochastic} regarding a special class of stochastic processes. We refer the readers to the original manuscript, but key points of the proof are briefly reproduced \blued{in Appendix \ref{sec:app:sc} and extended to cover a Kolmogorov inequality.}
\begin{theorem}[Stochastic Contraction]
\label{thm:stochasticcontraction}
Given $\mathbb{R}^d$-valued stochastic process $\{Z_t\}_{t \in \mathbb{N}}$, a sequence of $0 < \lambda_t \leq 1$, define
\begin{gather*}
\mathcal{F}_0 = \emptyset, \mathcal{F}_t = \sigma(Z_1,\dots, Z_t), \\
\epsilon_t =  Z_t - \E[Z_t | \mathcal{F}_{t-1}].
\end{gather*}
We call $Z_t$ a stochastic contraction if the following properties hold
\begin{enumerate}[(C1)]
\item Vanishing coefficients $$\sum_{t=1}^{\infty} (1-\lambda_t) = \infty, \mbox{ i.e. } \prod_{t=1}^{\infty}\lambda_t = 0.$$
\item Mean contraction $$||\E[Z_t|\mathcal{F}_{t-1}]|| \leq \lambda_t \norm{Z_{t-1}}, a.s..$$
\item Bounded deviation $$\sup \norm{\epsilon_t} \to 0, \quad \sum_{t=1}^{\infty}\E[\norm{\epsilon_t}^2] < \infty.$$
\end{enumerate}
In particular, a multidimensional stochastic contraction exhibits the following behavior
\begin{enumerate}[(i)]
\item Contraction $$Z_t \xlongrightarrow{a.s.} 0.$$
\item Kolmogorov inequality
\begin{align}
\label{fml:kolmax}
P\left( \sup_{t \geq T}\norm{Z_t} \leq \norm{Z_T} + \delta  \right) \geq 1- \frac{4\sqrt{d}\sum_{t=T+1}^{\infty}\E[\epsilon_t^2]}{\min\{\delta^2, \beta^2\}}
\end{align}
\blued{holds for all $T, \delta > 0$ s.t. }$\beta = \norm{Z_T} + \delta - \sqrt{d} \sup_{t > T} \norm{\epsilon_t} > 0$.
\end{enumerate}
\end{theorem}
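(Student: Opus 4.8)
The plan rests on the elementary decomposition $Z_t = \E[Z_t \mid \mathcal F_{t-1}] + \epsilon_t$, in which the two pieces are conditionally orthogonal. The first thing I would establish is the one-step second-moment recursion
$$\E[\norm{Z_t}^2 \mid \mathcal F_{t-1}] = \norm{\E[Z_t\mid\mathcal F_{t-1}]}^2 + \E[\norm{\epsilon_t}^2\mid\mathcal F_{t-1}] \le \lambda_t^2\norm{Z_{t-1}}^2 + \E[\norm{\epsilon_t}^2\mid\mathcal F_{t-1}],$$
where the cross term drops out because $\E[\epsilon_t\mid\mathcal F_{t-1}]=0$ while $\E[Z_t\mid\mathcal F_{t-1}]$ is $\mathcal F_{t-1}$-measurable, and the inequality is (C2). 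Writing $a_t = \E[\norm{\epsilon_t}^2\mid\mathcal F_{t-1}]$, note that $\sum_t \E[a_t] = \sum_t \E\norm{\epsilon_t}^2 < \infty$ by (C3), so $\sum_t a_t < \infty$ almost surely. This recursion is the engine for both parts.

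For part (i) I would feed the recursion, rewritten as $\E[\norm{Z_t}^2\mid\mathcal F_{t-1}] \le \norm{Z_{t-1}}^2 - (1-\lambda_t^2)\norm{Z_{t-1}}^2 + a_t$, into the Robbins--Siegmund almost-sure convergence theorem for nonnegative almost-supermartingales. That yields two conclusions at once: $\norm{Z_t}^2$ converges almost surely to some finite limit $V_\infty\ge 0$, and $\sum_t (1-\lambda_t^2)\norm{Z_{t-1}}^2 < \infty$ almost surely. To force $V_\infty = 0$ I would argue by contradiction: since $1-\lambda_t^2 \ge 1-\lambda_t$, condition (C1) gives $\sum_t (1-\lambda_t^2) = \infty$; if $V_\infty > 0$ on a set of positive probability then $\norm{Z_{t-1}}^2$ is eventually bounded below by $V_\infty/2$ there, forcing $\sum_t (1-\lambda_t^2)\norm{Z_{t-1}}^2 = \infty$ and contradicting summability. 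Hence $Z_t\to 0$ almost surely.

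For the Kolmogorov inequality (ii) I would run a first-passage argument. Fix $T$ and $\delta$, set $c=\norm{Z_T}+\delta$, and let $\tau=\inf\{t>T:\norm{Z_t}>c\}$, so that $\{\sup_{t\ge T}\norm{Z_t}>c\}=\{\tau<\infty\}$. Before $\tau$ the mean contraction keeps the conditional mean inside the ball, $\norm{\E[Z_t\mid\mathcal F_{t-1}]}\le\norm{Z_{t-1}}\le c$, so any excursion above $c$ must be produced by the accumulated martingale noise $S_t=\sum_{s=T+1}^{t}\epsilon_s$; stopping this martingale at $\tau$ gives $\E\norm{S_{\cdot\wedge\tau}}^2 \le \sum_{t>T}\E\norm{\epsilon_t}^2$ by orthogonality of the increments. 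The single step that carries the process across the barrier overshoots by at most one noise term, and converting the Euclidean norm to coordinatewise control (where the factor $\sqrt d$ enters) bounds the overshoot by $\sqrt d\sup_{t>T}\norm{\epsilon_t}$; this is precisely why the martingale only has to reach the reduced level $\beta=\norm{Z_T}+\delta-\sqrt d\sup_{t>T}\norm{\epsilon_t}$, and why the effective level is $\min\{\delta,\beta\}$. Applying Kolmogorov's (vector) maximal inequality to the stopped martingale at this level produces the stated bound $1-4\sqrt d\sum_{t>T}\E[\epsilon_t^2]/\min\{\delta^2,\beta^2\}$, with the constant $4\sqrt d$ arising from the coordinatewise application of the maximal inequality together with the overshoot accounting.

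The main obstacle is that the drift $\E[Z_t\mid\mathcal F_{t-1}]$ is only norm-controlled by (C2), not linear, so I cannot simply unroll the recursion into a fixed martingale transform and read off its variance; the direction of the drift may rotate adversarially from step to step. The delicate work is therefore in part (ii): turning ``the iterate exceeds $c$'' into ``the martingale part exceeds $\beta$'' pathwise despite this nonlinearity, and then bounding the one-step overshoot at $\tau$ tightly enough to obtain the clean $\min\{\delta^2,\beta^2\}$ denominator rather than a looser sum of two Chebyshev-type terms. Part (i) is comparatively routine once the second-moment recursion and Robbins--Siegmund are in place.
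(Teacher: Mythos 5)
Your part (i) is correct, and it takes a genuinely different route from the paper: you work directly in $\mathbb{R}^d$ with the conditional-orthogonality recursion $\E[\norm{Z_t}^2\mid\mathcal{F}_{t-1}] \le \lambda_t^2\norm{Z_{t-1}}^2 + \E[\norm{\epsilon_t}^2\mid\mathcal{F}_{t-1}]$ and Robbins--Siegmund, closing with the observation $1-\lambda_t^2\ge 1-\lambda_t$ and (C1). The paper instead reduces to one dimension by a sequence of \emph{predictable} orthogonal rotations $R_t\in\mathcal{F}_{t-1}$ that align the drift with the previous iterate, and then runs a sign-change/segment argument on each coordinate. Your argument for (i) is arguably cleaner and avoids the rotation device entirely.

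Part (ii), however, contains a genuine gap: the step ``any excursion above $c$ must be produced by the accumulated martingale noise $S_t=\sum_{s=T+1}^t\epsilon_s$,'' i.e.\ that crossing forces $\sup_t\norm{S_t}$ to reach roughly $\min\{\delta,\beta\}$, is false under (C2), because (C2) constrains only the \emph{norm} of the drift, not its direction. Already in $d=1$: take $\E[Z_t\mid\mathcal{F}_{t-1}]=-\lambda_t Z_{t-1}$ (allowed, since only $|\E[Z_t\mid\mathcal{F}_{t-1}]|\le\lambda_t|Z_{t-1}|$ is required) and $\epsilon_t=\pm\eta$ with fair independent signs. On the event that the signs alternate $-\eta,+\eta,-\eta,\dots$ (probability $2^{-k}>0$ over $k$ steps), the magnitude $|Z_t|$ grows by roughly $\eta$ per step and crosses $c$ after $k\approx\delta/\eta$ steps, while the anchored sum $S_t$ oscillates between $-\eta$ and $0$ forever. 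In $d\ge 2$ one can even make the climb deterministic: with drift aligned to $Z_{t-1}$ and noise orthogonal to it, $\norm{Z_t}^2=\lambda_t^2\norm{Z_{t-1}}^2+\eta^2$ grows on every path while $S_t$ is a centered random walk that stays small with positive probability. So the crossing event is not contained in any event of the form $\{\sup_t\norm{S_t}\ge \text{level}\}$, and no stopped-martingale variance bound or one-step overshoot accounting on $S_t$ can yield the inequality. This is precisely what the paper's rotation trick repairs: in the rotated frame the drift becomes $\E[Z_t^*\mid\mathcal{F}_{t-1}]=\gamma_{t-1}Z_{t-1}^*$ with $\gamma_{t-1}\ge 0$ (direction-preserving), the rotated noise $R_t^*\epsilon_t$ is still a martingale difference array, and in my sign-flipping example the rotated sums $\sum_s(-1)^s\epsilon_s$ do blow up on the bad event --- the rotation is what makes the noise sum actually track the excursion. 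Even then, the excursion only controls noise sums over windows $[i,j]$ with $i,j>T$ (the process effectively restarts at sign changes), not sums anchored at $T$; the paper converts windows to anchored sums by the doubling trick, which is where the factor $4$ in (\ref{fml:kolmax}) comes from. Your proposal is missing both of these ingredients.
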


\blued{To study Boulevard convergence, we start by informally postulating a convergent point $Y^*$ for any boosting algorithm. Ideally the boosting path should remain stationary at and after $Y^*$.

For standard boosting with $L^2$ loss, supposing we obtain $Y^*$ at the $(b-1)$-th iteration, the next update is $Y^* + \lambda_b S_b\cdot(Y-Y^*)$ with $S_b$ the tree structure matrix decided by the gradient vector. $Y^*$ being the convergent point implies that $\lambda_b S_b\cdot(Y-Y^*)$=0. Either we can implement a diminishing sequence of learning rates $\lambda_b \to 0$, or we must guarantee that we stop boosting with $S_b\cdot(Y-Y^*) \to 0$. The latter condition depends on both the properties of the gradient vector $Y-Y^*$ and the tree building algorithm deciding $S_b$, making it challenging to study either the existence of such point $Y^*$ or the convergence of the boosting path to $Y^*$.

For Boulevard, the update will be $\frac{b-1}{b} Y^* + \frac{\lambda}{b} S_b(Y-Y^*)$, implying that $Y^* = \lambda \E[S_b](Y-Y^*)$ when $S_b$ is random. For trees with non-adaptivity and structure-value isolation, we can solve this relation for $Y^*$ to prove its existence. Furthermore, Theorem \ref{thm:stochasticcontraction} can be used to study the convergence of the boosting path to such $Y^*$. We formalize our discussion as the following theorem.}
\begin{theorem}
\label{thm:boulevardconvergence} Given sample $(x_1, y_1),\dots,(x_n, y_n)$. If we construct gradient boosted trees with structure-value isolation and non-adaptivity using tree structure space $(Q_n, \mathcal{Q}_n)$, by choosing $M \gg \max\{M_f, y_1,\dots, y_n\}$ and defining $\Gamma_M(x) = sign(x)(|x| \land M)$ as a truncation function, let Boulevard iteration take form of
$$
\hat{f}_b(x) = \frac{b-1}{b}  \hat{f}_{b-1}(x) + \frac{\lambda}{b} s_b(x)(Y - \Gamma_M(\hat{Y}_{b-1})),
$$
where $Y = (y_1,\dots,y_n)^T$ the observed response vector, $\hat{Y}_{b} = (\hat{f}_b(x_1),\dots,\hat{f}_b(x_n))^T$ the predicted response vector by the first $b$ trees, $s_b$ the random tree structure vector. We have
$$
\hat{Y}_b \longrightarrow \left[\frac{1}{\lambda}I+\E[S_b]\right]^{-1}\E [S_b] Y,
$$
where $\E[\cdot] = \E_{q, w}[\cdot]$, $S_b$ the random tree structure matrix defined above.
\end{theorem}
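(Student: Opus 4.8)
The plan is to recognize the target vector $Y^* := \left[\frac{1}{\lambda}I + \E[S_b]\right]^{-1}\E[S_b]Y$ as the unique fixed point of the \emph{averaged} Boulevard update, and then to show that the deviation $Z_b := \hat{Y}_b - Y^*$ is a stochastic contraction in the sense of Theorem \ref{thm:stochasticcontraction}, so that $Z_b \to 0$ almost surely, which is exactly the asserted limit. Since the claim concerns the finite-dimensional vector of fitted values $\hat{Y}_b \in \R^n$, I would apply the multidimensional version of that theorem with dimension $d = n$.

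First I would check that $Y^*$ is well defined. By Theorem \ref{thm:treeaskernel}, $\E[S_b] = \E_{q,w}[S_n]$ is symmetric, positive semidefinite, with spectral norm at most $1$, so its eigenvalues lie in $[0,1]$; hence $\frac{1}{\lambda}I + \E[S_b]$ has eigenvalues bounded below by $1/\lambda > 0$ and is invertible. A short computation confirms that $Y^*$ solves the relation $Y^* = \lambda\E[S_b](Y - Y^*)$ motivated in the text, and that $\norm{Y^*}$ is controlled by $\norm{Y}$, so the choice $M \gg \max\{M_f, y_1,\ldots,y_n\}$ places $Y^*$ strictly inside the truncation region, giving $\Gamma_M(Y^*) = Y^*$.

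Next, working with the filtration $\mathcal{F}_b = \sigma(S_1,\ldots,S_b)$, I would use structure-value isolation (so each $S_b$ is independent of $Y$) and non-adaptivity (so $\E[S_b]$ is the same fixed matrix for every $b$) to compute the conditional mean of the update. Subtracting the fixed-point relation written as $Y^* = \lambda\E[S_b](Y - \Gamma_M(Y^*))$ yields $\E[Z_b \mid \mathcal{F}_{b-1}] = \frac{b-1}{b}Z_{b-1} - \frac{\lambda}{b}\E[S_b]\bigl(\Gamma_M(\hat{Y}_{b-1}) - \Gamma_M(Y^*)\bigr)$. Because $\Gamma_M$ is $1$-Lipschitz and $\norm{\E[S_b]} \leq 1$, this gives the mean-contraction bound $\norm{\E[Z_b \mid \mathcal{F}_{b-1}]} \leq \frac{b-1+\lambda}{b}\norm{Z_{b-1}}$, i.e. condition (C2) with $\lambda_b = 1 - \frac{1-\lambda}{b}$; since $\lambda < 1$, we get $\sum_b (1-\lambda_b) = (1-\lambda)\sum_b 1/b = \infty$, which is (C1). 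For (C3), the martingale increment is $\epsilon_b = \frac{\lambda}{b}(S_b - \E[S_b])(Y - \Gamma_M(\hat{Y}_{b-1}))$, and the truncation bounds $\norm{Y - \Gamma_M(\hat{Y}_{b-1})}$ by a constant multiple of $M\sqrt{n}$; together with $\norm{S_b - \E[S_b]} \leq 2$ this yields $\norm{\epsilon_b} \leq C/b$, so $\sup_{t > T}\norm{\epsilon_t} \to 0$ and $\sum_b \E\norm{\epsilon_b}^2 < \infty$. Theorem \ref{thm:stochasticcontraction} then delivers $Z_b \to 0$ a.s.

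The main obstacle I anticipate is handling the truncation $\Gamma_M$ cleanly. It is precisely what keeps the deviations $\epsilon_b$ bounded deterministically: without it one would first have to establish an a priori bound on the possibly unbounded iterates $\hat{Y}_b$ before (C3) could hold, whereas the truncation caps $\norm{Y - \Gamma_M(\hat{Y}_{b-1})}$ regardless. The two delicate points are (a) confirming that $Y^*$ genuinely lies in the truncation region, so that it remains a fixed point of the truncated dynamics rather than a shifted one, and (b) ensuring the coefficient $\lambda_b$ stays strictly below $1$ with $\prod_b \lambda_b = 0$, which is exactly where the restriction $\lambda < 1$ from Algorithm \ref{alg:boulevard} enters via the $1$-Lipschitz estimate. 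Everything else reduces to routine norm bookkeeping using the kernel bounds of Theorem \ref{thm:treeaskernel}.
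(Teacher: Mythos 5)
Your proposal follows essentially the same route as the paper's own proof: identify $Y^* = \left[\frac{1}{\lambda}I + \E[S_n]\right]^{-1}\E[S_n]Y$ as the fixed point satisfying $Y^* = \lambda\E[S_n](Y - Y^*)$, set $Z_b = \hat{Y}_b - Y^*$, verify the stochastic-contraction conditions using non-adaptivity, the $1$-Lipschitz truncation with $\Gamma_M(Y^*) = Y^*$, and $\norm{\E[S_n]} \leq 1$, and then invoke Theorem \ref{thm:stochasticcontraction}, exactly as the paper does. The only slip is your claim $\norm{S_b - \E[S_b]} \leq 2$: a realized subsampled structure matrix is generally neither symmetric nor spectrally bounded by $1$ (its column sums can be of order $n$, which is why the paper bounds $\norm{S_b} \leq \sqrt{\norm{S_b}_{\infty}\norm{S_b}_1} \leq \sqrt{n}$), but since $n$ is fixed this only changes the constant in $\norm{\epsilon_b} \leq C/b$ and leaves the argument intact.
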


This theorem guarantees the convergence of Boulevard paths under finite sample setting once we threshold by a large $M$.  As a corollary we can express the prediction at any point of interest $x$, \blued{which coincides with} a kernel ridge regression using the random forest kernel.
\begin{corollary} \label{cor:prediction} By defining $\hat{f} = \lim_{b \to \infty} \hat{f}_b$,
\begin{gather}
\hat{f}(x) = \E[s_n(x)] \left[\frac{1}{\lambda}I+\E[S_n]\right]^{-1} Y.
\label{fml:blvpred}
\end{gather}
\end{corollary}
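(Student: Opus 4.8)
The plan is to lean on Theorem \ref{thm:boulevardconvergence}, which already determines the limit of the in-sample fitted vector, and to transport that convergence to an arbitrary query point $x$ using the averaging form of the Boulevard update. Write $\hat Y^* = [\frac{1}{\lambda}I + \E[S_n]]^{-1}\E[S_n]Y$ for the limit supplied by that theorem, and set $K = \E[S_n]$. First I would unroll the recursion
$$
\hat f_b(x) = \frac{b-1}{b}\hat f_{b-1}(x) + \frac{\lambda}{b}s_b(x)\bigl(Y - \Gamma_M(\hat Y_{b-1})\bigr)
$$
by multiplying through by $b$ and telescoping (using $\hat f_0 = 0$), which gives the Ces\`aro mean $\hat f_b(x) = \frac{\lambda}{b}\sum_{i=1}^b s_i(x)\bigl(Y - \Gamma_M(\hat Y_{i-1})\bigr)$. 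The goal is then to show this average converges almost surely to $\lambda\,\E[s_n(x)](Y - \hat Y^*)$ and to rewrite that expression in the claimed resolvent form.

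The convergence argument splits the average about the frozen limit:
$$
\hat f_b(x) = \frac{\lambda}{b}\sum_{i=1}^b s_i(x)(Y - \hat Y^*) + \frac{\lambda}{b}\sum_{i=1}^b s_i(x)\bigl(\hat Y^* - \Gamma_M(\hat Y_{i-1})\bigr).
$$
For the first term I would invoke non-adaptivity together with structure-value isolation: the tree structures are drawn i.i.d.\ from the fixed space $(Q_n, \mathcal{Q}_n)$ and independently of $Y$, so the structure vectors $s_i(x)$ are i.i.d.\ bounded random vectors with mean $\E[s_n(x)]$; the strong law of large numbers gives $\frac{1}{b}\sum_{i=1}^b s_i(x) \to \E[s_n(x)]$ a.s., and since $Y - \hat Y^*$ is fixed given the sample, the first term tends to $\lambda\,\E[s_n(x)](Y - \hat Y^*)$. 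For the second term, Theorem \ref{thm:boulevardconvergence} yields $\hat Y_{i-1} \to \hat Y^*$ a.s.; because $\hat Y^*$ is bounded in sup-norm by a constant multiple of $\max_i|y_i|$ (via the fixed-point relation $\hat Y^* = \lambda K(Y-\hat Y^*)$ and the bound $\norm{K}_\infty \le 1$ of Theorem \ref{thm:treeaskernel}), the choice $M \gg \max\{M_f, y_1,\dots,y_n\}$ forces $\norm{\hat Y^*}_\infty < M$, so the truncation eventually deactivates and $\hat Y^* - \Gamma_M(\hat Y_{i-1}) \to 0$. As each $s_i(x)$ has nonnegative entries summing to at most one, the summands are bounded in norm by $\norm{\hat Y^* - \Gamma_M(\hat Y_{i-1})}_\infty \to 0$, and a Toeplitz (Ces\`aro) argument forces the second term to vanish.

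It remains to recast the limit. With $K = \E[S_n]$ I would compute
$$
Y - \hat Y^* = \Bigl(I - \bigl(\tfrac{1}{\lambda}I + K\bigr)^{-1}K\Bigr)Y = \bigl(\tfrac{1}{\lambda}I + K\bigr)^{-1}\bigl(\tfrac{1}{\lambda}I + K - K\bigr)Y = \tfrac{1}{\lambda}\bigl(\tfrac{1}{\lambda}I + K\bigr)^{-1}Y,
$$
so that $\lambda\,\E[s_n(x)](Y - \hat Y^*) = \E[s_n(x)]\bigl(\tfrac{1}{\lambda}I + K\bigr)^{-1}Y$, which is exactly (\ref{fml:blvpred}). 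As a sanity check, specializing to $x = x_j$ recovers the $j$-th component of $\hat Y^*$ because $K$ commutes with its own resolvent, so $K(\tfrac{1}{\lambda}I+K)^{-1} = (\tfrac{1}{\lambda}I+K)^{-1}K$, reconciling the ordering of factors with Theorem \ref{thm:boulevardconvergence}.

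I expect the main obstacle to lie in the second step: justifying that the out-of-sample structure vectors $s_i(x)$ form a genuine i.i.d.\ sequence to which the strong law applies (this rests squarely on non-adaptivity and structure-value isolation), and cleanly separating the i.i.d.\ averaging from the path-dependent residual, since $s_i(x)$ and the history $\hat Y_{i-1}$ are generated by the same trees and are therefore dependent. A secondary point is the ``a.s.\ $x$'' qualifier: the strong law is applied pointwise in $x$, and one should check that the boundedness used is uniform enough that the exceptional null set (including leaf-boundary points where $s_n(x)$ is ill-defined) can be controlled.
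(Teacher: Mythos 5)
Your proposal is correct and follows essentially the same route as the paper's own proof: unroll the update into a Ces\`aro average, split $Y-\Gamma_M(\hat Y_{i-1})$ about the fixed point $Y^*$ from Theorem \ref{thm:boulevardconvergence}, apply the law of large numbers (via non-adaptivity and structure-value isolation) to the $s_i(x)(Y-Y^*)$ term, kill the residual term using $\hat Y_{i-1}\to Y^*$, and finish with the identity $\lambda(Y-Y^*)=\left[\frac{1}{\lambda}I+\E[S_n]\right]^{-1}Y$. If anything, you are more careful than the paper, which silently drops the factor $\lambda$ and the truncation $\Gamma_M$ in its displayed expansion; your explicit argument that the truncation eventually deactivates fills a gap the paper glosses over.
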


Ridge regression tends to shrink the predictions towards 0 and so does (\ref{fml:blvpred}). Therefore \blued{Corollary \ref{cor:prediction} carries the message} that Boulevard may not converge to the whole signal. We will see in later proofs that Boulevard converges to $\frac{\lambda}{1+\lambda}f(x)$, in contrast to standard boosting where we expect the convergence to the full signal. In fact, Boulevard down-weighs the boosting history, thereby ensuring that each tree in the finite ensemble must predict some partial signal during training. It thus avoids being dominated by the first few trees then repeatedly fitting on noise. In practice, since we showed that the prediction from Boulevard is consistent with respect to $\frac{\lambda}{1+\lambda}f(x)$, we simply rescale it by $\frac{1+\lambda}{\lambda}$ to retrieve the whole signal.

\subsection{Beyond $L^2$ Loss} \label{sec:BeyondL2} Besides regression, other tasks may require alternative loss functions for boosting. For instance, the exponential loss $L(w,y) = \exp (-wy)$ in adaboost \citep{freund1995desicion}. Analogous to the proof for $L^2$ loss, we can write the counterparts for any general loss $L(u) = \sum_i L(u_i, y_i)$ whose non-adaptive Boulevard iteration takes the form of
$$
\hat{Y}_b = \frac{b-1}{b} \hat{Y}_{b-1} - \frac{\lambda}{b} S_b \nabla_w L(w) \Big|_{w=\hat{Y}_{b-1}}.
$$
Suppose the existence of the fix point $\hat{Y}^* = -\lambda \E[S_b] \nabla_w L(w) \Big|_{w=\hat{Y}^*}$, then
$$
\E[\hat{Y}_b - \hat{Y}^*|\mathcal{F}_{b-1}] = \frac{b-1}{b}(\hat{Y}_{b-1}-\hat{Y}^*) - \frac{\lambda}{b}\E[S_b]\left(\nabla_w L(w) \Big|_{w=\hat{Y}_{b-1}} - \nabla_w L(w) \Big|_{w=\hat{Y}^*}\right).
$$
If the gradient term is bounded and Lipschitz (which could be enforced by truncation), i.e. $$\norm{\nabla_w L(w) \Big|_{w=w_1} - \nabla_w L(w) \Big|_{w=w_2}} \leq M \norm{w_1-w_2},$$ we can similarly show such Boulevard iteration converges by choosing $\lambda \leq M^{-1}$. However, the closed form of $\hat{Y}^*$ can be intractable to obtain \changed{and potentially non-unique}. For example for AdaBoost, $\hat{Y}^*$ is the solution to $\hat{Y}^* = -\lambda \E [S_n] (\exp(-\hat{Y}^*_1 y_1),\dots,\exp(-\hat{Y}^*_n y_n))^T$.

\section{Limiting Distribution}
Inspired by recent results demonstrating the asymptotic normality of random forest predictions, in this section we prove the asymptotic normality of predictions from Boulevard. Before detailing these results, we need some prerequisite discussion on the rates used for decision tree construction in order to ensure asymptotic local behavior. In general, the variability of model predictions comes from \blued{three sources: the sample variability, response errors, and the \blued{systematic} viability of boosting caused by its subsampling and stopping rule. As shown in the last section, Boulevard convergence relieves the need for considering the stopping rule and the systematic viability. Therefore} the strategy for our proof is as follows: we first consider the \blued{boosting process conditioned on the sample triangular array so only the response errors contribute to the variability. We then establish the uniformity over almost all random samples to extend the limiting distribution to the unconditional cases,} showing that it is still the response errors that dominate the prediction variability.

\subsection{Building Deeper Trees}
Decision trees can be thought as k-nearest-neighbor \citep[k-NN:][]{altman1992introduction} models where $k$ is the leaf size and the distance metric is given by whether two points are in the same leaf. This adapts the metric to the local geometry of the response function. As the conclusions on k-NN predictions require growing-in-size and shrinking-in-radius neighborhoods \citep{gordon1984almost}, so are the counterparts of building deeper trees. Assuming non-adaptivity, the following assumptions are sufficient for our analysis.
\begin{enumerate}[(L1)]
\item Asymptotic locality. \blued{We introduce a sequence $\{d_n\}, d_n > 0, d_n \to 0$ by which the diameter of any leaf is bounded from above.} Writing $diam(A) = sup_{x, y \in A} |x-y|$, we require,
$$
\sup_{A \in q \in Q_n} diam(A) =  O(d_n).
$$ \label{cond:L1}
\item Minimal leaf size. \blued{We introduce another sequence $\{v_n\}, v_n > 0, v_n \to 0$ by which  the volume of any leaf is bounded from below.} Writing $V(\cdot)$ as the volume function by Lebesgue measure, we require that
$$
 \inf_{A \in q \in Q_n} V(A) = \Ome{v_n}.
$$ \label{cond:L2}
\end{enumerate}
 \blued{Recall the notation that $f = \Ome{g} \iff g = \O{f}$.} \gjh{These place requirements on all the leaves of a tree; we specify the rates we require for $d_n$ and $v_n$ below.}

These assumptions together bound the space occupied by any leaf of any possible tree from being either too extensive or too small. Along with the Lipschitz condition, this \blued{geometrically shrinking neighborhood not only indicates that each tree is asymptotically guaranteed to gather points with close response values together, but also helps to control the impact of one point on another across multiple consecutive iterations. To be specific, because boosting fits gradients instead of responses, a pair of distant points with divergent values might still contaminate each other through a bridge created by other points linked by being in the same leaf nodes across multiple iterations. Here (L\ref{cond:L1}) provides a theoretical upper bound for such influence. 

Notice that randomized honest tree building strategies can be compatible with (L1) if we force it, whereas the standard CART algorithm does not imply (L1). However, since the intuition of (L1) is to put points with similar responses together, we anticipate that CART, \gjh{tuned to a proper depth, will satisfy (L1), at least in practice}. If so, we can still exercise Boulevard boosting using trees built with greedy algorithms without compromising its theoretical guarantees in practice.}

\subsection{Conditioned on the Sample} \label{sec:cos} We first consider a sample triangular array, i.e. for each $n$, the sample $(x_{n,1},y_{n,1}),\dots,(x_{n,n}, y_{n,n})$ is given. The first subscript $n$ will be dropped when there is no ambiguity. We specify the rates for the size of leaf nodes as:
\begin{enumerate}[(R1)]
\item For some $\epsilon_1 > 0$,
$$
d_n = O\left(n^{-\frac{1}{d+2} - \epsilon_1}\right).
$$  \label{cond:R1}
\item For some $\epsilon_2 > \epsilon_1 > 0$,
$$
\inf_{A \in q \in Q_n} \sum_{i = 1}^n I(x_i \in A) = \Ome{n^{\frac{2}{d+2}-d \epsilon_2}}.
$$ \label{cond:R2}
\end{enumerate}
One compatible realization is
$$
d_n = O\left(n^{-\frac{1}{d+1}}\right), \quad \inf_{A \in q \in Q_n} \sum_{i = 1}^n I(x_i \in A) = \Ome{n^{\frac{1}{d+2}}}.
$$
For simplicity all our proofs are under this setting. However, any other rates satisfying these conditions are also sufficient.

Starting here we use the abbreviations
$$
k_n^T = \E[s_n(x)], \quad K_n = \E[S_n], \quad  r_n^T = k_n^T\left[\frac{1}{\lambda}I + K_n \right]^{-1}.
$$
\blued{to cover the components of (\ref{fml:blvpred}). The rates of $k_n$ and $r_n$ determine Boulevard asymptotics. }

Recall that the structure-value isolation paired with subsampling may lead to the consequence of predicting 0 in some leaves, so that we only guarantee $\norm{k_n}_1 \leq 1$. Working with the tree construction rates as above, the subsample rate $\theta$ determines how far away $\norm{k_n}_1$ is from 1. \blued{For Algorithm \ref{alg:honest},} without loss of generality, suppose we obtain a tree structure where each leaf contains no fewer than $n^{\frac{1}{d+2}}$ sample points before applying subsampling to decide the fitted values. If the subsample size is $\theta n = n^{\frac{d+1}{d+2}}\log n$, i.e. $\theta = n^{-\frac{1}{d+2}}\log n$, the chance of missing all sample points in one leaf is
\begin{align*}
p(n, \theta) & = \frac{\binom{n-n^{\frac{1}{d+2}}}{\theta n}}{\binom{n}{\theta n}}  = \frac{(n - \theta n)(n - \theta n-1)\cdots(n - \theta n-n^{\frac{1}{d+2}} + 1)}{n(n-1)\cdots(n-n^{\frac{1}{d+2}}+1)} \\
& \leq \left(\frac{n-\theta n}{n-n^{\frac{1}{d+2}}} \right)^{n^{\frac{1}{d+2}}} = \left(\frac{1-n^{-\frac{1}{d+2}}\log n}{1-n^{-\frac{d+1}{d+2}}} \right)^{n^{\frac{1}{d+2}}} \\
& \leq e\cdot \left(1-n^{-\frac{1}{d+2}}\log n \right)^{n^{\frac{1}{d+2}}} = \O{\frac{1}{n}}.
\end{align*}

Therefore, for any $x$, $1 - \norm{k_n}_1 = \O{\frac{1}{n}}$ if we use subsample size at least of $n^{\frac{d+1}{d+2}}\log n$. This requires the subsample to be relatively large, which is compatible with both $\theta$ being constant, or $\theta = (\log n)^{-1}$.

\subsection{Exponential Decay of Influence and Asymptotic Normality}
The prediction \chgx{that} Boulevard makes at a point is \chgx{a} linear combination of responses $y_1,\dots, y_n$ whose coefficients are given by $r_n$. \blued{Therefore the asymptotical normality of Boulevard predictions will rely on whether $r_n$ can comply with the Lindeberg-Feller condition. Since $r_n$ only differs from $k_n$ by a ridge regression style matrix multiplication, we can expect  $r_n$ hold similar properties as $k_n$ does. Further, by examining $r_n$ we can show that} distant points are less influential on the prediction, and this decay of influence  is exponential in our case.

Given any $n$-vector $v$ and an index set $D$, denote
$$
v \proj{D} = \begin{bmatrix}
v_1 \cdot I(1 \in D) \\
\vdots\\
v_n \cdot I(n \in D)
\end{bmatrix}.$$
\blued{This notation implies the decomposition that}
$v = v\proj{D} + v\proj{D^c}$.
\begin{lemma}
\label{lemma:expdecay}
Given sample $(x_1, y_1),\dots, (x_n, y_n)$, a point of interest $x$, set
$
l_n = \frac{\log n}{-\log \lambda} = c_1\log n,
$
and define index set $D_n = \{i : |x_i-x| \leq l_n\cdot d_n   \}$,
then
\blued{
\begin{enumerate}[(i)]
\item Globally,
$$
\left| \sum_{i=1}^n r_{n,i} - \frac{\lambda}{1+\lambda}\right| = \O{\frac{1}{n}}.
$$
\item If we only look nearby,
$$
\norm{r_n\proj{D^c_n}}_1 = O\left(\frac{1}{n}\right).
$$
\end{enumerate}
}
\end{lemma}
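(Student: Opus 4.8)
The unifying tool I would use is the Neumann expansion of the resolvent $\left[\frac{1}{\lambda}I + K_n\right]^{-1}$. Since $K_n = \E[S_n]$ is entrywise nonnegative with $\norm{K_n}_1 \leq 1$ by Theorem \ref{thm:treeaskernel}, and $0 < \lambda < 1$, I would write
$$
\left[\tfrac{1}{\lambda}I + K_n\right]^{-1} = \lambda\,(I + \lambda K_n)^{-1} = \lambda\sum_{j=0}^{\infty} (-\lambda)^j K_n^j,
$$
a series that converges in the induced $1$-norm because $\norm{\lambda K_n}_1 \leq \lambda < 1$. Two consequences I would record immediately: the norm bound $\norm{r_n}_1 \leq \frac{\lambda}{1-\lambda}\norm{k_n}_1 \leq \frac{\lambda}{1-\lambda}$, and the termwise representation $r_n = \sum_{j \geq 0} (-\lambda)^j \lambda\, K_n^j k_n$ (using symmetry of $K_n$), which drives both parts of the lemma.

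For part (i) I would pass $\mathbf{1}$ through the resolvent. The row sums of $S_n$ are indicators of whether the leaf of a given point receives at least one subsample index, so $K_n \mathbf{1} = \mathbf{1} - \eta$ with $\eta \geq 0$ and, by the empty-leaf estimate $p(n,\theta) = \O{1/n}$ from Section \ref{sec:cos}, $\norm{\eta}_\infty = \O{1/n}$. Then $\left(\tfrac{1}{\lambda}I + K_n\right)\mathbf{1} = \tfrac{1+\lambda}{\lambda}\mathbf{1} - \eta$, and inverting yields $\left[\tfrac{1}{\lambda}I + K_n\right]^{-1}\mathbf{1} = \tfrac{\lambda}{1+\lambda}\big(\mathbf{1} + [\tfrac{1}{\lambda}I + K_n]^{-1}\eta\big)$. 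Pairing with $k_n$ gives
$$
\sum_{i=1}^n r_{n,i} = \frac{\lambda}{1+\lambda}\left(\norm{k_n}_1 + r_n^T \eta\right),
$$
where $\norm{k_n}_1 = 1 - \O{1/n}$ (again the subsample bound) and $|r_n^T\eta| \leq \norm{r_n}_1 \norm{\eta}_\infty = \O{1/n}$, which delivers part (i).

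For part (ii) the crux is a geometric locality estimate. Invoking (L\ref{cond:L1}) I would take $d_n$ to be the actual uniform upper bound on leaf diameters (absorbing the $O$-constant, which is legitimate since $D_n$ is defined through the same $d_n$); then $k_{n,b} \neq 0$ forces $|x - x_b| \leq d_n$ and $(K_n)_{ab} \neq 0$ forces $|x_a - x_b| \leq d_n$, because a nonzero expectation requires positive probability that the two points share a leaf. By induction on $j$ and the triangle inequality, $(K_n^j k_n)_a \neq 0$ implies $|x_a - x| \leq (j+1)d_n$, so one power of $K_n$ spreads support by at most one leaf-diameter. Hence for $a \in D_n^c$, i.e. $|x_a - x| > l_n d_n$, only terms with $j + 1 > l_n$ survive. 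Using nonnegativity of $K_n^j k_n$, the bound $\norm{K_n^j k_n}_1 \leq \norm{K_n}_1^j \norm{k_n}_1 \leq 1$, and the calibration $\lambda^{l_n} = 1/n$,
$$
\norm{r_n\proj{D_n^c}}_1 \leq \sum_{a \in D_n^c}\sum_{j \geq l_n} \lambda^{j+1}(K_n^j k_n)_a \leq \sum_{j \geq l_n}\lambda^{j+1} = \frac{\lambda^{\lceil l_n\rceil + 1}}{1-\lambda} = \O{\frac{1}{n}}.
$$

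The main obstacle is the inductive locality claim together with its calibration: one must argue cleanly that multiplying by $K_n$ enlarges the spatial support of a nonnegative vector by exactly one leaf-diameter, and then match this linear geometric growth against the exponential decay $\lambda^{j}$ so that the truncation radius $l_n d_n$ produces precisely the $\lambda^{l_n} = 1/n$ rate. The bookkeeping of the $O$-constant in (L\ref{cond:L1}) relative to the definition of $l_n$ and $D_n$ is where care is needed; everything else reduces to the $1$-norm control of $K_n$, $k_n$, and the resolvent already established.
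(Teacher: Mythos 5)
Your proposal is correct and takes essentially the same approach as the paper: both parts rest on the Neumann expansion of $\left[\frac{1}{\lambda}I+K_n\right]^{-1}$, with (i) driven by the near-stochasticity of $K_n$ and $\norm{k_n}_1 = 1-\O{1/n}$ from the empty-leaf bound, and (ii) driven by the fact that each multiplication by $K_n$ spreads support by one leaf diameter, calibrated against $\lambda^{l_n} = 1/n$. The differences are only organizational — the paper two-sidedly bounds column sums of each power $K_n^m$ and sums the alternating series instead of solving the resolvent identity against $\mathbf{1}$, and it zeroes out entries of the resolvent via a triangle-inequality step rather than inducting on the support of $K_n^j k_n$ — and your explicit absorption of the $O$-constant in (L1) is exactly what the paper does implicitly.
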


Lemma \ref{lemma:expdecay} indicates that Boulevard trees will asymptotically rely on a $\log n$ shrinking neighborhood around the point of interest. Given sample size $n$ and a point of interest $x$, \changed{we can therefore} define two neighborhoods of different radii $B_n = \left\{ i \big| |x_i - x| \leq  d_n \right\}$ and $D_n = \left\{ i \big| |x_i - x| \leq  l_n \cdot d_n \right\}$. $B_n$ contains all points that have direct influence on $x$ in a single tree, and $D_n$ contains the points that dominate the prediction at $x$. \blued{Their cardinalities} $|B_n|$ and $|D_n|$ follow Binomial distributions with parameters depending on the local covariate density.

To show the limiting distribution of for the sequence of predictions $\hat{f}_n(x)$, \blued{the key point is to verify that, by writing the prediction at any point $x$ as a linear combination of sample responses, all these coefficients are diminishing at a rate allowed by the Lindeberg-Feller condition. Therefore we take a look at the coefficient vectors $k_n$ and $r_n$.}
\begin{lemma}
\label{lemma:rateofk}
\blued{Given the triangular array with }sample $(x_{n,1}, y_{n,1}), \dots (x_{n,n}, y_{n,n})$ at size $n$, assume
\blued{
\begin{itemize}
\item The smaller neighborhood is growing fast enough: $\left| B_n \right|  = \Ome{ n \cdot d_n^d }$.
\item We have enough points in each leaf: $$\inf_{A \in q \in Q_n} \sum_{i = 1}^n I(x_{n,i} \in A) = \Ome{n^{\frac{1}{d+2}}}.$$
\end{itemize}
}
Then
$$
\Ome{n^{-\frac{1}{2}\frac{1}{d+1}}} = \norm{k_n}, \norm{r_n} = \O{ n^{-\frac{1}{2}\frac{1}{d+2}}}.
$$
\end{lemma}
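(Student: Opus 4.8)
The plan is to reduce the whole statement to two-sided bounds on $\norm{k_n}$ and then transfer them to $\norm{r_n}$ through the spectral structure of the ridge matrix. Recall $r_n^T = k_n^T\left[\frac1\lambda I + K_n\right]^{-1}$. By Theorem \ref{thm:treeaskernel}, $K_n = \E[S_n]$ is symmetric positive semidefinite with $\norm{K_n} \leq 1$, so its eigenvalues lie in $[0,1]$ and the eigenvalues of $M := \left[\frac1\lambda I + K_n\right]^{-1}$ lie in $\left[\frac{\lambda}{1+\lambda},\, \lambda\right]$. Since $M$ is symmetric we have $r_n = M k_n$, hence $\frac{\lambda}{1+\lambda}\norm{k_n} \leq \norm{r_n} \leq \lambda\norm{k_n}$. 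Because $\lambda$ and $\frac{\lambda}{1+\lambda}$ are constants in $n$, $\norm{r_n}$ inherits exactly the rate of $\norm{k_n}$, and it suffices to establish $\Ome{n^{-\frac12\frac{1}{d+1}}} \leq \norm{k_n} \leq \O{n^{-\frac12\frac{1}{d+2}}}$.

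For the upper bound I would use the interpolation inequality $\norm{k_n}^2 \leq \norm{k_n}_1\,\norm{k_n}_\infty$, where $\norm{k_n}_1 \leq 1$ is already known. To control $\norm{k_n}_\infty$, fix a coordinate $i$ and a structure $q$ whose leaf $A(x)$ containing $x$ holds $N \geq n^{1/(d+2)}$ sample points. Conditioning on $q$ and averaging over the subsample $w$, the $N$ in-leaf weights are exchangeable and sum to the indicator that $A(x)$ retains a subsampled point, so $\E_w[s_{n,i}(x)] = \frac1N\Pr_w(A(x)\text{ nonempty}) \leq \frac1N \leq n^{-1/(d+2)}$. Taking $\E_q$ preserves the bound, giving $\norm{k_n}_\infty \leq n^{-1/(d+2)}$ and therefore $\norm{k_n} = \O{n^{-\frac12\frac{1}{d+2}}}$.

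For the lower bound I would use the reverse Cauchy--Schwarz bound $\norm{k_n}^2 \geq \norm{k_n}_1^2 / |\mathrm{supp}(k_n)|$. Since $s_{n,i}(x) > 0$ forces $x_i$ to share a leaf with $x$ and leaves have diameter $O(d_n)$, we have $\mathrm{supp}(k_n) \subseteq B_n$; the bounded density in (M1) yields the complementary estimate $|B_n| = \O{n d_n^d} = \O{n^{1/(d+1)}}$ for a.s. arrays, which together with the assumed $|B_n| = \Ome{n d_n^d}$ pins $|B_n| = \Theta(n d_n^d)$. Combining with $\norm{k_n}_1 = \Ome{1}$, which follows from the subsample computation in Section \ref{sec:cos} showing $1 - \norm{k_n}_1 = \O{1/n}$, gives $\norm{k_n}^2 \geq \Ome{1}/\O{n^{1/(d+1)}} = \Ome{n^{-1/(d+1)}}$, i.e. $\norm{k_n} = \Ome{n^{-\frac12\frac{1}{d+1}}}$.

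The main obstacle I anticipate is the $\norm{k_n}_\infty$ estimate: one must handle the randomness of subsampling cleanly, since in a degenerate subsample a single in-leaf point could carry weight close to $1$. The exchangeability argument sidesteps directly estimating the subsampled leaf occupancy, but it leans on the uniform subsampling of Algorithm \ref{alg:honest}; under general structure-value isolation a leaf may receive no second-subsample point, and it is precisely here that the $\O{1/n}$ slack from Section \ref{sec:cos} must be invoked to keep $\norm{k_n}_1$ bounded away from zero so the lower bound survives. A secondary point is that the argument needs the bounded-density \emph{upper} half of (M1) to cap $|B_n|$, rather than the lower bound supplied in the hypotheses.
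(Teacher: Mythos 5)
Your proof is correct and takes essentially the same route as the paper's: the interpolation $\norm{k_n}^2 \leq \norm{k_n}_1\norm{k_n}_{\infty}$ with the leaf-occupancy condition giving $\norm{k_n}_{\infty} = \O{n^{-\frac{1}{d+2}}}$ for the upper rate, a support-counting Cauchy--Schwarz bound over $B_n$ together with $\norm{k_n}_1 = 1-\O{1/n}$ for the lower rate, and the eigenvalue bounds $\left[\frac{\lambda}{1+\lambda},\lambda\right]$ of $\left[\frac{1}{\lambda}I+K_n\right]^{-1}$ to transfer both rates to $\norm{r_n}$. Your observation that the lower rate actually requires the \emph{upper} bound $|B_n| = \O{n\cdot d_n^d}$ (obtainable from the density upper bound in (M1)) rather than the stated $\Ome{n\cdot d_n^d}$ hypothesis is well taken: the paper's proof says ``at most $\Ome{n\cdot d_n^d}$ positive entries,'' implicitly treating $|B_n|$ as $\Theta(n\cdot d_n^d)$, so your handling of that step is if anything more careful than the paper's.
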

\blued{Combining all our discussions above, the following theorem stating the asymptotic normality conditioned on a given sample sequence follows. Lemma \ref{lemma:rateofk} is used here to estimate the order of the variance term.}
\begin{theorem}[Conditional Asymptotic Normality for Boulevard Predictions]
\label{thm:fixed}
For any $x \in [0,1]^d$, \blued{suppose the conditions in Lemma \ref{lemma:rateofk} hold. } Write $f(X_n) = (f(x_1),\dots, f(x_n))^T$, \gjh{then under non-adaptivity and structure-value isolation,}
$$
\frac{\hat{f}_n(x) - r_n^Tf(X_n)}{\norm{r_n}} \xlongrightarrow{d} N(0,\sigma^2_{\epsilon}).
$$
\end{theorem}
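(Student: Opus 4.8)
The plan is to recognize the numerator as a weighted sum of independent response errors and to establish asymptotic normality through the Lindeberg--Feller theorem, with the weight vector controlled by Lemma \ref{lemma:rateofk}. By Corollary \ref{cor:prediction} the limiting prediction is linear in the responses, $\hat{f}_n(x) = r_n^T Y$, and structure-value isolation together with non-adaptivity guarantee that $r_n = \left[\frac{1}{\lambda}I + K_n\right]^{-1} k_n$ depends only on the fixed covariate design and the tree-structure randomization, not on the responses. Writing $Y = f(X_n) + \epsilon$ with $\epsilon = (\epsilon_1,\dots,\epsilon_n)^T$ therefore yields
$$
\hat{f}_n(x) - r_n^T f(X_n) = r_n^T \epsilon = \sum_{i=1}^n r_{n,i}\,\epsilon_i ,
$$
so that conditional on the array the numerator is a sum of independent, mean-zero terms $\xi_{n,i} = r_{n,i}\epsilon_i$ with total variance $\sigma^2_\epsilon \norm{r_n}^2$. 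The independence of $r_n$ and $\epsilon$ here is exactly what structure-value isolation buys us.

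It then remains to verify a negligibility condition. Because (M3) provides $\E[\epsilon^4] < \infty$, I would check the Lyapunov condition with exponent $4$:
$$
\frac{\sum_{i} \E\,\xi_{n,i}^4}{\left(\sigma^2_\epsilon \norm{r_n}^2\right)^2} = \frac{\E[\epsilon^4]}{\sigma^4_\epsilon}\cdot \frac{\sum_i r_{n,i}^4}{\norm{r_n}^4} \;\leq\; \frac{\E[\epsilon^4]}{\sigma^4_\epsilon}\cdot \frac{\norm{r_n}_\infty^2}{\norm{r_n}^2},
$$
using $\sum_i r_{n,i}^4 \leq \norm{r_n}_\infty^2\,\norm{r_n}^2$. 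Everything thus reduces to showing $\norm{r_n}_\infty / \norm{r_n} \to 0$, i.e. that no single coordinate of $r_n$ dominates.

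To bound the largest coefficient I would use the fixed-point identity $\left(\frac{1}{\lambda}I + K_n\right) r_n = k_n$, rewritten as $r_n = \lambda(k_n - K_n r_n)$. Since $\norm{K_n}_\infty \leq 1$ by Theorem \ref{thm:treeaskernel}(iii), passing to the $\infty$-norm gives $\norm{r_n}_\infty \leq \lambda \norm{k_n}_\infty + \lambda \norm{r_n}_\infty$, hence $\norm{r_n}_\infty \leq \frac{\lambda}{1-\lambda}\norm{k_n}_\infty$. For the entries of $k_n$ I would condition on the tree structure and use exchangeability of the sample points inside the leaf $A$ containing $x$: uniform subsampling gives $\E_w\!\left[I(i \in w)/\sum_{x_l \in A} I(l \in w)\right] = P(A\cap w \neq \emptyset)/N_A \leq 1/N_A$, where $N_A$ is the full-sample leaf count. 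The minimal-leaf-size condition $N_A = \Ome{n^{1/(d+2)}}$ then yields $\norm{k_n}_\infty = \O{n^{-1/(d+2)}}$, and combined with the lower bound $\norm{r_n} = \Ome{n^{-1/(2(d+1))}}$ from Lemma \ref{lemma:rateofk},
$$
\frac{\norm{r_n}_\infty}{\norm{r_n}} = \O{n^{-\frac{1}{d+2} + \frac{1}{2(d+1)}}} \to 0,
$$
because $\frac{1}{2(d+1)} < \frac{1}{d+2}$ for every $d \geq 1$. The Lyapunov condition holds and the triangular-array central limit theorem gives $\frac{\sum_i r_{n,i}\epsilon_i}{\sigma_\epsilon \norm{r_n}} \to N(0,1)$, which is the claim after clearing $\sigma_\epsilon$.

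I expect the genuine difficulty to lie not in the CLT machinery itself but in the coordinate bound on $k_n$ under structure-value isolation with subsample splitting: empty or light leaves (the source of the $0/0 = 0$ convention and of $\norm{k_n}_1 < 1$) must be shown to contribute negligibly, which is precisely what the exchangeability identity and the factor $P(A \cap w \neq \emptyset) \leq 1$ quietly handle, consistent with the remark in Section \ref{sec:cos} that this effect is asymptotically negligible.
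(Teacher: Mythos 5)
Your proposal is correct and follows essentially the same route as the paper's proof: both write $\hat{f}_n(x) - r_n^T f(X_n) = r_n^T \vec{\epsilon}_n$ and reduce the CLT to showing $\norm{r_n}_{\infty}/\norm{r_n} = \O{n^{-\frac{1}{d+2}+\frac{1}{2}\frac{1}{d+1}}} \to 0$, using $\norm{k_n}_{\infty} = \O{n^{-\frac{1}{d+2}}}$ from the minimal leaf size and the lower bound on $\norm{r_n}$ from Lemma \ref{lemma:rateofk}. The only differences are minor: you verify Lyapunov's fourth-moment condition (which needs only $\E[\epsilon^4]<\infty$ from (M3)), whereas the paper checks the Lindeberg--Feller condition directly via the sub-Gaussian tail bound, and you control $\norm{r_n}_{\infty}$ through the fixed-point identity $r_n = \lambda\left(k_n - K_n r_n\right)$ with $\norm{K_n}_{\infty}\leq 1$, whereas the paper uses $\norm{r_n}_{\infty} \leq \norm{k_n}_{\infty}\norm{\left[\frac{1}{\lambda}I+K_n\right]^{-1}}_1$; both yield the same rate.
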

\subsection{Random Design}
\label{sec:rd}
To extend the scope of the conditional limiting distribution \blued{to the random design case where the covariates $x_1,\dots,x_n$ are randomly drawn from their distribution, we intend to claim all the assumptions we made leading to Theorem \ref{thm:fixed} also hold true for almost surely all random sample sequence. In order to do so, we first incorporate the sample randomness in a well-defined probability space regarding the triangular array constructed from a random sample sequence. This can be done by the Kolmogorov's extension theorem; see details in Appendix \ref{sec:ket}.

We also have to revise (R2) to account for sample randomness.} We increase the minimal leaf geometric volume by any small $\nu > 0$ s.t. $v_n$ follows $$v_n = \frac{n^{\frac{1}{d+2} + \nu}}{n} = n^{-\frac{d+1}{d+2} + \nu} < n^{-\frac{d}{d+1}} = \O{d_n^d}.$$ With this assumption, the following lemma demonstrates the asymptotic normality where the mean depends on the random sample.
\begin{lemma}
\label{lemma:random}
For given $x \in [0,1]^d$, suppose we have random sample $(x_1, y_1),\dots, (x_n, y_n)$ for each $n$. If we restrict the cardinality of tree space $Q_n$ using any small $\alpha > 0$ s.t.
$$
|Q_n| = \O{\frac{1}{n} \exp\left( \frac{1}{2} n^{\frac{1}{d+2}-\nu} - n^{\alpha}\right)},
$$
then
$$
\frac{\hat{f}_n(x) - r_n^Tf(X_n)}{\norm{r_n}} \xlongrightarrow{d} N(0,\sigma^2_{\epsilon}).
$$
\end{lemma}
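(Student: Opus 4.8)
The plan is to deduce Lemma \ref{lemma:random} from the conditional statement Theorem \ref{thm:fixed} by showing that, for almost every realization of the random covariates $x_1, x_2, \dots$, the deterministic hypotheses feeding Theorem \ref{thm:fixed} (equivalently the two bullet assumptions of Lemma \ref{lemma:rateofk}) eventually hold. First I would invoke Kolmogorov's extension theorem, as set up in the appendix, to place the entire infinite covariate sequence on a single probability space, so that conditioning on this sequence turns each sample size $n$ into a fixed, nested triangular array. Conditional on $X = \{x_i\}$ the only remaining randomness in the limiting prediction is the response error, since $\hat{f}_n(x) = r_n^T Y = r_n^T f(X_n) + r_n^T \epsilon$ with $r_n$ deterministic; thus on any conditioning sequence for which the geometric conditions hold, Theorem \ref{thm:fixed} already delivers $T_n := (\hat{f}_n(x) - r_n^T f(X_n))/\norm{r_n} \xrightarrow{d} N(0,\sigma^2_\epsilon)$. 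It therefore suffices to prove that the relevant geometric events hold for a.e. $X$ and all large $n$.

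Second, I would isolate two good events and control each by Binomial concentration. The inner-neighborhood event $E_n^{(1)} = \{|B_n| = \Omega(n d_n^d)\}$ concerns the single fixed point $x$: since $|B_n| = \sum_i I(|x_i - x| \le d_n)$ is $\mathrm{Bin}(n, \mu(B(x,d_n)))$ and (M1) gives $\mu(B(x,d_n)) \asymp d_n^d$, a one-sided Chernoff bound yields a failure probability of order $\exp(-c\, n d_n^d) = \exp(-c\, n^{1/(d+1)})$, which is summable and causes no trouble. The minimal-leaf-count event $E_n^{(2)} = \{\inf_{A} \sum_i I(x_i \in A) = \Omega(n^{1/(d+2)})\}$ is the hard one, because the infimum ranges over all leaves of all structures in $Q_n$. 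For a single leaf $A$, the inflated minimal-volume assumption $V(A) \ge v_n = n^{-(d+1)/(d+2)+\nu}$ forces $\E[\,\#\{i : x_i \in A\}\,] \ge c_1 n v_n = c_1 n^{1/(d+2)+\nu}$, a factor $n^{\nu}$ above the required threshold $n^{1/(d+2)}$, so a lower-tail Chernoff bound makes the per-leaf failure probability as small as $\exp(-c\, n^{1/(d+2)+\nu})$. A union bound over the at most $|Q_n|\cdot\mathrm{poly}(n)$ candidate leaves, using the cardinality budget imposed on $|Q_n|$, keeps the total failure probability summable, so Borel--Cantelli gives that $E_n^{(1)} \cap E_n^{(2)}$ holds for all sufficiently large $n$, almost surely.

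Finally, I would pass from the conditional to the unconditional limit. On the full-measure event just established, Theorem \ref{thm:fixed} gives $\E[e^{\mathrm{i} t T_n} \mid X] \to e^{-\sigma^2_\epsilon t^2/2}$ for every $t$; since conditional characteristic functions are bounded by $1$, bounded convergence lets me integrate out $X$ and conclude $\E[e^{\mathrm{i} t T_n}] \to e^{-\sigma^2_\epsilon t^2/2}$, which is exactly $T_n \xrightarrow{d} N(0,\sigma^2_\epsilon)$. I expect the union bound of the second step to be the main obstacle: the structure space $Q_n$ is exponentially large, so the argument lives or dies on balancing the per-leaf Chernoff exponent against $\log|Q_n|$. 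This is precisely why (R2) must be inflated by the extra $\nu$ --- it buys the $n^{\nu}$ gap between expected and required leaf occupancy that lets the rate $\exp(-c\,n^{1/(d+2)+\nu})$ dominate the admissible growth $\exp(\tfrac12 n^{1/(d+2)-\nu})$ of $|Q_n|$, with the residual $\exp(-n^{\alpha})$ and $1/n$ factors securing summability for Borel--Cantelli.
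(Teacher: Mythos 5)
Your proposal is correct and follows essentially the same route as the paper's own proof: Kolmogorov extension to fix the covariate sequence, Chernoff bounds plus Borel--Cantelli for the event $|B_n| = \Ome{n d_n^d}$ and for the minimal-leaf-occupancy event (with the union bound over $|Q_n|\cdot\mathrm{poly}(n)$ leaves absorbed by the cardinality budget, exactly the role of the $\nu$-inflated (R2)), and then an application of Theorem \ref{thm:fixed} followed by dominated convergence to integrate out the covariates. The only cosmetic difference is that you pass from the conditional to the unconditional limit via characteristic functions, whereas the paper's Lemma \ref{lemma:fixtorandom} applies dominated convergence to the conditional distribution functions; these are interchangeable.
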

The proof of Lemma \ref{lemma:random} allows us to substitute all $\O{\cdot}$ by $\Op{\cdot}$ in the analyses of random design. Further, \blued{we take advantage of the following theorem to} replace the empirical mean $r^T_nf(X_n)$ by its population version $\frac{\lambda}{1+\lambda}f(x)$. Combining all above we obtain the main theorem of this paper that the limiting distribution of the random design \blued{non-adaptive Boulevard predictions} is normal.
\begin{theorem}[Asymptotic Normality for Boulevard Predictions]
\label{thm:main}
For given $x \in [0,1]^d$, \gjh{under the conditions of Lemmas \ref{lemma:rateofk} and \ref{lemma:random} as well as non-adaptivity and structure-value isolation,}
$$
\frac{\hat{f}_n(x) - \frac{\lambda}{1+\lambda}f(x)}{\norm{r_n}} \xlongrightarrow{d} N(0,\sigma^2_{\epsilon}).
$$
\end{theorem}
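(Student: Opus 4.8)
The plan is to obtain Theorem \ref{thm:main} from Lemma \ref{lemma:random} by replacing the empirical centering $r_n^T f(X_n)$ with its population counterpart $\frac{\lambda}{1+\lambda} f(x)$ and then applying Slutsky's theorem. Since Lemma \ref{lemma:random} already supplies, in the random design,
$$
\frac{\hat f_n(x) - r_n^T f(X_n)}{\norm{r_n}} \xlongrightarrow{d} N(0,\sigma_\epsilon^2),
$$
it suffices to prove that the gap between the two centerings is negligible on the scale $\norm{r_n}$, namely
$$
\frac{r_n^T f(X_n) - \frac{\lambda}{1+\lambda} f(x)}{\norm{r_n}} \xrightarrow{P} 0.
$$
Decomposing $\hat f_n(x) - \frac{\lambda}{1+\lambda} f(x)$ into the sum of $\hat f_n(x) - r_n^T f(X_n)$ and $r_n^T f(X_n) - \frac{\lambda}{1+\lambda} f(x)$, dividing by $\norm{r_n}$, and combining the two displays through Slutsky then gives the result.

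The core of the argument is bounding the bias $r_n^T f(X_n) - \frac{\lambda}{1+\lambda} f(x)$. First I would invoke part (i) of Lemma \ref{lemma:expdecay}, $|\sum_i r_{n,i} - \frac{\lambda}{1+\lambda}| = \O{1/n}$, to trade $\frac{\lambda}{1+\lambda}$ for $\sum_i r_{n,i}$ and rewrite the bias as
$$
\sum_{i=1}^n r_{n,i}\bigl(f(x_i) - f(x)\bigr) + \O{1/n}.
$$
I would then split the remaining sum across the shrinking $\log n$-neighborhood $D_n$ and its complement. For $i \in D_n$ we have $|x_i - x| \le l_n d_n$, so the Lipschitz assumption (M2) gives $|f(x_i) - f(x)| \le \alpha l_n d_n$; together with the uniform bound $\norm{r_n}_1 = \O{1}$ (a consequence of Theorem \ref{thm:treeaskernel}(iii)) this part is $\O{\log n \cdot d_n}$. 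For $i \in D_n^c$ the factor $f(x_i) - f(x)$ is bounded by $2M_f$, while part (ii) of Lemma \ref{lemma:expdecay} gives $\norm{r_n\proj{D_n^c}}_1 = \O{1/n}$, so this part is $\O{1/n}$. Hence the bias is $\O{\log n \cdot d_n + 1/n}$.

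It remains to compare this bias against the standard-deviation scale $\norm{r_n}$, and here the lower bound from Lemma \ref{lemma:rateofk}, $\norm{r_n} = \Ome{n^{-\frac{1}{2(d+1)}}}$, is indispensable. Using $d_n = \O{n^{-\frac{1}{d+1}}}$ from (R1),
$$
\frac{\log n \cdot d_n}{\norm{r_n}} = \O{\log n \cdot n^{-\frac{1}{2(d+1)}}} \to 0, \qquad \frac{1/n}{\norm{r_n}} \to 0,
$$
so the bias is $o(\norm{r_n})$. In the random design all these $\O{\cdot}$ statements become $\Op{\cdot}$ statements, which is exactly what the proof of Lemma \ref{lemma:random} permits, so the ratio tends to zero in probability and Slutsky finishes the proof.

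I expect the main obstacle to be this bias-versus-variance comparison rather than any new probabilistic input: one must ensure that the error from replacing the weighted empirical mean $r_n^T f(X_n)$ by $\frac{\lambda}{1+\lambda} f(x)$ vanishes strictly faster than $\norm{r_n}$. This is the familiar requirement that bias be of smaller order than standard deviation, and it succeeds only because two ingredients cooperate: the lower bound on $\norm{r_n}$ (the prediction variance does not collapse too fast) and the exponential decay of influence in Lemma \ref{lemma:expdecay} (only the $\log n$-neighborhood $D_n$ contributes appreciable bias, and within it the Lipschitz control is of order $\log n \cdot d_n$). Losing either ingredient would break the replacement, and hence the recentering of the limit at the true $\frac{\lambda}{1+\lambda} f(x)$.
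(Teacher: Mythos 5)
Your proposal is correct, and at the level of its skeleton it is the same argument as the paper's: decompose $\hat f_n(x)-\frac{\lambda}{1+\lambda}f(x)$ into the CLT term of Lemma \ref{lemma:random} plus the centering gap $r_n^Tf(X_n)-\frac{\lambda}{1+\lambda}f(x)$, control the gap using Lemma \ref{lemma:expdecay}(i) for the total-mass substitution and the $D_n$/$D_n^c$ split (Lipschitz control near $x$, exponential decay of influence far away), compare against the lower bound $\norm{r_n}=\Omep{n^{-\frac{1}{2}\frac{1}{d+1}}}$ from Lemma \ref{lemma:rateofk}, promote $\O{\cdot}$ to $\Op{\cdot}$ via the proof of Lemma \ref{lemma:random}, and finish with Slutsky. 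Where you genuinely differ is the near-field term. The paper bounds $\left(r_n\proj{D_n}\right)^T[f(X_n)-\tilde f(x)]\proj{D_n}$ by Cauchy--Schwarz, cancels $\norm{r_n\proj{D_n}}/\norm{r_n}\le 1$, and is then left needing a bound on the cardinality $|D_n|$ -- an extra binomial concentration step in the random design -- arriving at $\Op{(\log n)^{\frac{d+2}{2}}n^{-\frac{1}{2}\frac{1}{d+1}}}$. You instead use the $\ell^1$--$\ell^\infty$ pairing $\left|\sum_{i\in D_n}r_{n,i}(f(x_i)-f(x))\right|\le \norm{r_n}_1\cdot \alpha l_n d_n$ and only afterwards divide by the lower bound on $\norm{r_n}$. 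Your route avoids the concentration argument for $|D_n|$ entirely and yields a cleaner $\log n$ factor, but it leans on the claim $\norm{r_n}_1=\O{1}$, which the paper never states and which deserves its one-line proof: by symmetry of $K_n$ one has $r_n=\left[\frac{1}{\lambda}I+K_n\right]^{-1}k_n$, and the Neumann series with $\norm{K_n}_1\le 1$ (Theorem \ref{thm:treeaskernel}(iii)) and $\lambda<1$ gives $\norm{\left[\frac{1}{\lambda}I+K_n\right]^{-1}}_1\le \lambda\sum_{l\ge 0}\lambda^l=\frac{\lambda}{1-\lambda}$, whence $\norm{r_n}_1\le \frac{\lambda}{1-\lambda}\norm{k_n}_1=\O{1}$. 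One citation nit: $d_n=\O{n^{-\frac{1}{d+1}}}$ is the compatible realization the paper fixes in Section 4.2, not what (R1) itself asserts, though that realization does satisfy (R1).
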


\blued{Theorem \ref{thm:main} shows a deterministic mean term but a stochastic variance term.} From results on kernel ridge regression, we would expect that this stochastic variance converges in probability if the random forest kernel behaves as a generic kernel with a shrinking bandwidth. From a theoretical perspective, the optimal rate of $\norm{r_n}$ is bounded from below by $ n^{-\frac{1}{2}\frac{1}{d+1}}$, which corresponds to the optimal nonparametric regression rate using $\frac{1}{2}$-H\"{o}lder continuous functions as base learners \citep{stone1982optimal}. In practice, $\norm{r_n}$ relies on the specific method of growing the boosted trees, and therefore may vary from case to case.

Furthermore, this demonstrates that with carefully structured trees the prediction is consistent while the variance involves no signal but the error. It \changed{acts as} an undersmoothed local smoother whose bias term shrinks faster than the variance term.

\subsection{Subsampling and Tree Space Capacity}
We have a strict \changed{requirement that} the tree terminal \changed{node size grows at a rate} between $n^{\frac{1}{d+1}}$ and $n^{\frac{1}{d+2}}$ to guarantee undersmoothing. Any $\log$ term is allowed to be added to the existing polynomial result without changing the behavior. We notice that different subsample rates (i.e. $\theta = (\log n)^{-d}$ in \citet{wager2014confidence}, $\theta = n^{-1/2}$ in \citet{mentch2016quantifying}, although see relaxations in \citet{peng2019asymptotic}) have been applied for measuring uncertainty. In comparison, the Boulevard algorithm requires a relatively restricted rate between these. In addition, though Boulevard training implements subsampling at each iteration, this does not influence the asymptotic distribution. The impact of subsampling is on the \blued{possible variance} from the mean process therefore the convergence speed \blued{especially if we assume non-adaptivity}.

\blued{We have required the size of tree space to scale at a rate close to $\O{\frac{1}{n} \exp\left( \frac{1}{2} n^{\frac{1}{d+2}}\right)}$.} \blued{In comparison}, \citet{wager2015adaptive} have shown that, in fixed dimension, any tree can be well approximated by a collection of $O(\exp(\log n)^2)$ hyper rectangles.  \blued{One feasible way to compare these two rates is to consider a tree space satisfying Boulevard rates in which each tree has several leaves that constitute one hyper rectangle in the collection for approximation. Therefore a tree space of cardinality $O(\exp(\log n)^2)$ is enough for approximating any tree ensemble. In this sense, the capacity of our designated tree space is large enough from a practical perspective.}

\subsection{Alternative Analyses} \label{sec:alternative}

The distributional results above are based on an analysis of a kernel representation of the Boulevard process in which structure-value isolation plays a key role in both consistency and asymptotic normality results.  By contrast, \citet{mentch2016quantifying} and \citet{wager2017estimation} employed results derived from U-statistics to demonstrate asymptotic normality when random forest trees are obtained using subsamples.  This analysis uses the subsample structure to bypass the kernel representation and allows the trees to be built adaptively.  In a similar manner,  when Boulevard employs subsampling we may represent its fixed point as a $U$ statistic. However, as in our discussion in Section \ref{sec:BeyondL2}, $U$-statistic kernel is then defined implicitly with respect to a non-smooth tree-building process and further analysis is beyond the scope of this paper.

\citet{wager2017estimation} also uses regularity to constrain leaf sizes in order to demonstrate consistency outside of the $U$-statistic framework. We expect that a similar weakening of these conditions is possible in Boulevard.

\section{Eventual Non-adaptivity}
All the  results discussed above have assumed the non-adaptivity of the boosting procedure of Boulevard. In standard boosting however, it is conventional and \blued{important} to decide tree structures on the current gradients in order to better exploit the gap between the prediction and the signal. Such procedures are known for their tendency to overfit, behavior which can be relieved by subsampling. However, when seeking to extend our results to this case we lose the easy identifiability of a Boulevard convergence point as the tree structure distribution changes at each iteration. We therefore need more assumptions and further theoretical development to extend our convergence results to a more practical Boulevard algorithm that allows the current gradient to determine tree structure.

A first approach to this is to relax non-adaptivity to eventual non-adaptivity. We postulate a convergent sequence of predictions, indicating that the underlying tree spaces will be stabilized after boosting for a sufficiently long time. \blued{If this is the case, eventual non-adaptivity is spontaneous.} Here we introduce the notation $\E[S_n(Y, \hat{Y})]$ where $Y = (y_1,\dots, y_n)^T$ and $\hat{Y} = (\hat{f}(x_1),\dots,\hat{f}(x_n))^T$ indicating the expected tree structure given the gradient of the loss between observed responses and current predictions. In regression this is $Y - \hat{Y}$, and we will take this form into the following discussion instead of a generic gradient expression.

\subsection{Local Homogeneity and Contraction Regions}
We start with trees whose splits are based on the optimal \bluedd{L2 gain} \citep{breiman1984classification}. For $(x_1, z_1), \dots, (x_n, z_n)$, the chosen split minimizes the impurity in the form of
\begin{align}
\label{fml:gini}
\inf_{L, R} \quad \sum_{i \in L} (z_i - \bar{z}_{L})^2 + \sum_{i \in R} (z_i - \bar{z}_{R})^2,
\end{align}
where $L \subset \{1,\dots, n\}, R = L^C$. Once the optimal split is unique, i.e. the optimum has a positive margin over the rest, we could allow a small change of \changed{all $y$'s values} without \deleted{altering}\changed{changing} the split decision. \deleted{So is true universally if we introduce subsampling to a fixed sample and consider all possible subsample split decisions which are of finite amount.}\changed{This also holds true if the split is decided by a subsample.} In terms of adaptive boosting, this observation demonstrates {\em local homogeneity} that, except on a set $\Omega_0 \subset \mathbb{R}^n$ with Lebesgue measure 0 where $(z_1,\dots, z_n)^T = Y - \hat{Y} \in \Omega_0$ has multiple optima for (\ref{fml:gini}), \changed{we can segment $\mathbb{R}^n$, the space of possible vectors $Y - \hat{Y}$}, into subsets $\bigsqcup_{i=1}^{\alpha} C_i = \mathbb{R}^n \backslash \Omega_0$ s.t. $\E[S_n(Y, \hat{Y})] = \E[S_n(Y, \hat{Y}')]$ for $Y - \hat{Y}, Y-\hat{Y}' \in C_i$ the same subset.

Notice that Gini gain is insensitive to multiplying $(y_1,\dots, y_n)$ by a nonzero factor. Therefore all $C_i$'s are open double cones in $\mathbb{R}^n$.

\begin{definition}[Contraction Region] Given the sample $(x_1,y_1),\dots (x_n, y_n)$. Write $Y = (y_1,\dots,y_n)$ and current prediction $\hat{Y} = (\hat{y}_1,\dots,\hat{y}_n)$. Following the above segmentation $\bigsqcup_{i=1}^{\alpha} C_i = \mathbb{R}^n \backslash \Omega_0$. We call $C_i$ a contraction region if $Y^* \in C_i$ for the following $Y^*$ $$
Y^* = \lambda \E[S_n(Y, \hat{Y})](Y - Y^*), \mbox{ i.e. } Y^* = \left[ \frac{1}{\lambda}I + \E[S_n(Y, \hat{Y})]\right]^{-1}\E[S_n(Y, \hat{Y})]Y,$$
for any $Y-\hat{Y} \in C_i$, where $\E[S_n(Y, \hat{Y})]$ is the unique structural matrix in this region.
\end{definition}

The intuition behind this definition is that, as long as a Boulevard process \changed{stays} inside a contraction region, the subsequent tree structures will be conditionally independent of the predicted values. Therefore the process \chgx{becomes} non-adaptive, \deleted{and as per our previous discussion collapses}\changed{collapsing} to $Y^*$. To achieve this eventual non-adaptivity, we would like to know when a Boulevard path is \changed{permanently contained} in a contraction region.

We should point out here that we have not shown the existence and the uniqueness of such contraction regions. Such \changed{an analysis would rely} on the method of choosing splits, the training sample and the choice of $\lambda$.

\subsection{Escaping the Contraction Region}

In this section we explore possible approaches to restrict a Boulevard process inside a contraction region. Assuming the existence of contraction regions, we recall Theorem \ref{thm:stochasticcontraction} which indicates that the Boulevard process has positive probability of not moving far from the fixed point.

\begin{theorem}
\label{thm:escape}
Denote $B(x, r)$ the open ball of radius $r$ centered at $x$ in $\mathbb{R}^{n}$. Suppose $C \subset \mathbb{R}$ a contraction region, $Y^* \in C$ the contraction point and $B(Y, 2r) \subset C$ for some $r>0$. Write $\hat{Y}_b$ the Boulevard process. For sufficiently large $t$,
$$
P\left( \hat{Y}_b \in C, \forall b \geq t \big| \hat{Y}_t \in B(Y^*, r)\right) \longrightarrow 1, t \to \infty.
$$
\end{theorem}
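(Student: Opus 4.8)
The plan is to show that, once confined to the contraction region, the re-centered Boulevard process $Z_b = \hat{Y}_b - Y^*$ is a stochastic contraction in the sense of Theorem \ref{thm:stochasticcontraction}, and then to invoke the Kolmogorov inequality in part (ii) of that theorem to bound the probability that the process ever leaves the region. First I would verify the three contraction conditions \emph{locally}. While $\hat{Y}_{b-1}$ lies in $C$, the contraction-region definition fixes the expected structure matrix at the single value $K := \E[S_n(Y,\hat{Y})]$, so the process is effectively non-adaptive there and $\E[S_b \mid \mathcal{F}_{b-1}] = K$. Substituting the Boulevard update $\hat{Y}_b = \frac{b-1}{b}\hat{Y}_{b-1} + \frac{\lambda}{b} S_b (Y - \hat{Y}_{b-1})$ together with the fixed-point identity $Y^* = \lambda K(Y - Y^*)$, a short computation gives
$$
\E[Z_b \mid \mathcal{F}_{b-1}] = \Big(\tfrac{b-1}{b} I - \tfrac{\lambda}{b} K\Big) Z_{b-1}.
$$
Since Theorem \ref{thm:treeaskernel} guarantees $K$ positive semidefinite with $\norm{K}\le 1$, the eigenvalues of $\frac{b-1}{b}I - \frac{\lambda}{b}K$ all lie in $\big[\frac{b-1-\lambda}{b}, \frac{b-1}{b}\big]$, giving mean contraction (C2) with $\lambda_b = \frac{b-1}{b}$; then $\sum_b (1-\lambda_b) = \sum_b 1/b = \infty$ is exactly (C1). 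For (C3) the deviation is $\epsilon_b = \frac{\lambda}{b}(S_b - K)(Y - \hat{Y}_{b-1})$, so as long as $\hat{Y}_{b-1}$ stays in the bounded ball $B(Y^*, 2r)\subset C$, both $\norm{S_b - K}\le 2$ and $\norm{Y - \hat{Y}_{b-1}}$ are bounded, whence $\norm{\epsilon_b} = O(1/b)$ and $\sum_b \E[\norm{\epsilon_b}^2] < \infty$.

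The subtlety is that these bounds hold only while the process remains inside the region, whereas Theorem \ref{thm:stochasticcontraction} assumes the conditions globally. I would resolve this by localization. Let $\tau = \inf\{b > t : \hat{Y}_b \notin B(Y^*, 2r)\}$ and define an auxiliary process $W_b$ equal to $Z_b$ up to and including time $\tau$, and thereafter following the deterministic recursion $W_b = \big(\frac{b-1}{b}I - \frac{\lambda}{b}K\big)W_{b-1}$ with zero noise. Because $\{\tau > b-1\}\in\mathcal{F}_{b-1}$, the conditional-expectation identity above continues to hold for $W$, so $W$ satisfies (C1)--(C3) globally and is a bona fide stochastic contraction in $\mathbb{R}^n$.

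Applying the Kolmogorov inequality with dimension $d = n$, threshold $\delta = r$, and $T = t$, and conditioning on $\{\hat{Y}_t \in B(Y^*, r)\}$ so that $\norm{W_t} = \norm{Z_t} < r$, I would obtain
$$
P\Big(\sup_{b \ge t}\norm{W_b} \le \norm{W_t} + r \ \Big|\ \hat{Y}_t \in B(Y^*, r)\Big) \ge 1 - \frac{4\sqrt{n}\sum_{b=t+1}^{\infty}\E[\norm{\epsilon_b}^2]}{\min\{r^2, \beta^2\}}.
$$
On the event inside this probability one has $\sup_{b\ge t}\norm{W_b} < 2r$; but $\norm{W_\tau} = \norm{Z_\tau}\ge 2r$ whenever $\tau < \infty$, so this forces $\tau = \infty$, meaning $W_b = Z_b$ and $\hat{Y}_b \in B(Y^*, 2r)\subset C$ for every $b \ge t$. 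Hence the event is contained in $\{\hat{Y}_b \in C,\ \forall b \ge t\}$, which transfers the lower bound to the quantity of interest.

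Finally I would send $t \to \infty$. The numerator $\sum_{b=t+1}^{\infty}\E[\norm{\epsilon_b}^2]$ is the tail of a convergent series and vanishes, while $\beta = \norm{W_t} + r - \sqrt{n}\sup_{b>t}\norm{\epsilon_b} \to r > 0$ since $\sup_{b > t}\norm{\epsilon_b} = O(1/t)$; thus $\min\{r^2,\beta^2\}$ is eventually bounded below by $r^2/4$ and the lower bound tends to $1$, giving the claim. The main obstacle is precisely the localization step: making the stopped/auxiliary process a genuine stochastic contraction --- establishing the measurability of the switch at $\tau$ and the noise boundedness that relies on confinement to $B(Y^*, 2r)$ --- so that Theorem \ref{thm:stochasticcontraction} applies without circularity.
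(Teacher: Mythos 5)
Your proposal is correct and follows essentially the same route as the paper: both apply the Kolmogorov inequality of Theorem \ref{thm:stochasticcontraction} with $\delta = r$, observe that on the event $\sup_{b\geq t}\norm{\hat{Y}_b - Y^*} \leq \norm{\hat{Y}_t - Y^*} + r < 2r$ the path remains inside $B(Y^*,2r)\subset C$, and then let $t\to\infty$ so that the tail $\sum_{b>t}\E[\norm{\epsilon_b}^2]$ vanishes while $\beta$ stays bounded away from $0$. Two differences are worth noting. First, where you bound the noise by confinement to the ball (which is what forces your stopping-time construction), the paper instead inherits the global bound $\norm{\epsilon_b} \leq \frac{\lambda}{b}(1+\sqrt{n})\,2\sqrt{n}M$ from the truncation $\Gamma_M$ built into the iteration, so its noise control does not depend on the path staying in $C$; only the mean-contraction step uses the region, and the paper dispatches the resulting circularity concern with a single remark about conditional independence of $\hat{Y}_t$ and $\epsilon_b$, $b>t$, in the contraction region. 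Your explicit localization --- the auxiliary process $W_b$ that follows $Z_b$ up to the exit time $\tau$ and thereafter obeys the deterministic recursion, with $\{\tau > b-1\}\in\mathcal{F}_{b-1}$ --- makes rigorous what the paper treats in one sentence, and it is sound. Second, a small slip: the realized structure matrix $S_b$ is neither symmetric nor spectrally bounded by $1$ (only $\E[S_b]$ is, by Theorem \ref{thm:treeaskernel}); the paper's own bound is $\norm{S_b}\leq\sqrt{n}$, so your claim $\norm{S_b - K}\leq 2$ should read $\norm{S_b - K}\leq \sqrt{n}+1$. Since $n$ is fixed in this finite-sample statement, this changes only the constant in $\norm{\epsilon_b}=O(1/b)$ and nothing downstream.
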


Theorem \ref{thm:escape} \blued{states that the longer we boost, the more likely the boosting path is trapped in a contraction region forever, resulting in spontaneous eventual non-adaptivity and justifying all our theory assuming non-adaptivity.} However, it guarantees neither the existence or the uniqueness of the contraction region.

\subsection{Forcing Eventual Non-Adaptivity} \blued{If we alternatively consider manually forcing eventual non-adaptivity}, a possible \textit{ad hoc} solution to the existence is to apply a {\em tail snapshot} which \blued{uses the tree space of certain iteration $b^*$ for the rest of the boosting run. This $b^*$ can be either pre-specified or decided on the fly when the Boulevard path becomes stationary. Ideally, as long as $b^*$ is in a contraction region, its forced non-adaptivity is indistinguishable from  spontaneous eventual non-adaptivity. An example of Boulevard regression implementing the tail snapshot strategy is detailed in Algorithm \ref{alg:tailsnapshot}, in which we decide $b^*$ by examining the training loss reduction.}

\begin{algorithm}[Tail Snapshot Boulevard] \label{alg:tailsnapshot} \hspace{1cm}
\begin{itemize}
\item Start with $\hat{f}_0 = 0$ and \blued{$b^*$ to be specified.}
\item For $b = 0,\ldots$, given $\hat{f}_b$, calculate the gradient
$$
z_i \triangleq - \frac{\partial}{\partial u_i} \sum_{i=1}^n \frac{1}{2}\left(u_i - y_i\right)^2 \Big|_{u_i = \Gamma_M(\hat{f}_{b}(x_i))}= y_i - \Gamma_M(\hat{f}_b(x_i));
$$
\item If $b^*$ is not specified, update by $1>\lambda > 0$ and the tree structure space $Q_b$ decided by current gradients,
$$
\hat{f}_{b+1}(x) = \frac{b}{b+1}  \hat{f}_{b}(x) + \frac{\lambda}{b+1} s_b(x; Q_b)(z_1,\dots,z_n)^T,
$$
where $s_b(x; Q)$ denotes the random tree structure vector based on tree space $Q$. If $b^*$ is specified, update by $Q_{b^*}$ instead, i.e.
$$
\hat{f}_{b+1}(x) = \frac{b}{b+1}  \hat{f}_{b}(x) + \frac{\lambda}{b+1} s_b(x; Q_b^*)(z_1,\dots,z_n)^T.
$$
\item When $b^*$ is not specified, check the empirical training loss as a measure of the distance to the fixed point.
$$
L_{b+1} = \frac{1}{2n}\sum_{i=1}^n \left(\frac{\lambda}{1+\lambda}y_i - \hat{f}_{b+1}(x_i)\right)^2.
$$
If $L_{b+1} < L^*$ a preset threshold, we claim Boulevard is close enough to a fixed point and specify $b^*$ to be current $b+1$. \blued{This step is then skipped for the rest of the boosting run.}
\end{itemize}
\end{algorithm}

\blued{Another plausible solution is to collectively build the tree space up based on the following intuition. For the first trees we prefer to construct them using the gradients to capture the signal. However when the boosting process keeps going on, we may consider reusing the tree structures or the tree spaces of previous iterations as to avoid overfitting the data or fitting the errors. Therefore the tree space $Q_{n,b}$ for a large $b$ should be the tree structures created by current gradients, plus all tree spaces $Q_{n,i}$ for $i<b$. This accumulation scheme changes as we boost, however asymptotically it leads to a common tree space and thus eventual non-adaptivity.}

\section{Empirical Study}

We have conducted an empirical study to demonstrate the performance of Boulevard. Despite the fact that \deleted{design purpose of Boulevard falls on the statistical inference side}\chgx{our purpose in developing Boulevard lies in statistical inference}, we require its accuracy to be on par with other predominant tree ensembles, which is assessed on both simulated and real world data. In addition, we inspect the empirical limiting behavior of non-adaptive Boulevard to show its \changed{agreement} with our theory. We summarize the result of the empirical study in this section, while \changed{additional} details can be found in Appendix \ref{sec:emp}.

\subsection{Predictive Accuracy}
We first compare Boulevard \changed{predictive accuracy} with the following tree ensembles: \deleted{ on predictive accuracy. Abbreviate the tree ensembles as follows: }Random Forest (RF), gradient boosted trees without subsampling (GBT), stochastic gradient boosted trees (SGBT), non-adaptive Boulevard achieved by completely randomized trees (rBLV), adaptive Boulevard whose tree structures are influenced by gradient values (BLV). All the tree ensembles build same depth of trees \changed{(see Appendix \ref{sec:appemp} for details)} throughout the experiment.

Results on simulated data are shown in Figure \ref{fig:sim_syn}. We choose sample size of 5000 and use the following two settings as underlying response functions: (1) $y = x_1 + 3x_2 + x_3x_4$ (top), and (2) $y = x_1 + 3x_2 + (1-x_3)^2 + x_4x_5 + (1-x_6)^6 + x_7$ (bottom). Error terms are Unif[-1,1] (left) and equal point mass on $\{-1, 1\}$ (right). Training errors are evaluated on the training set with noisy responses, while test errors are evaluated by error from the underlying signal on a separate test set. \blued{For each setup we have two plots covering the behavior of the first 50 trees and the following 250 trees respectively.} BLV and rBLV are comparable with RF, while all the three equal-weight tree ensembles are slightly inferior to GBM and SGBM.

\begin{figure}[h] 
   \centering
   \includegraphics[width=0.70\textwidth]{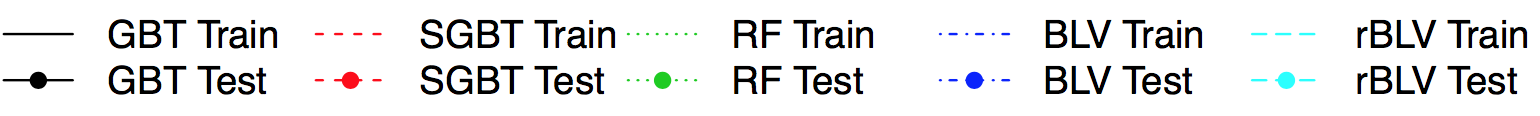}
   \begin{subfigure}[h]{0.49\textwidth}
   	\centering
        \includegraphics[width=0.375\textwidth]{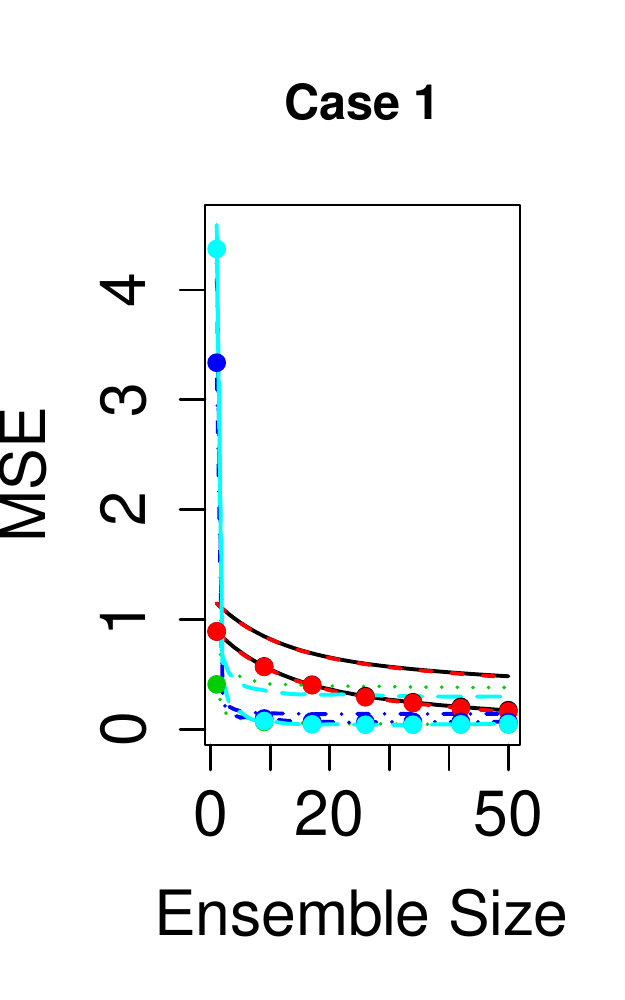}
        \includegraphics[width=0.60\textwidth]{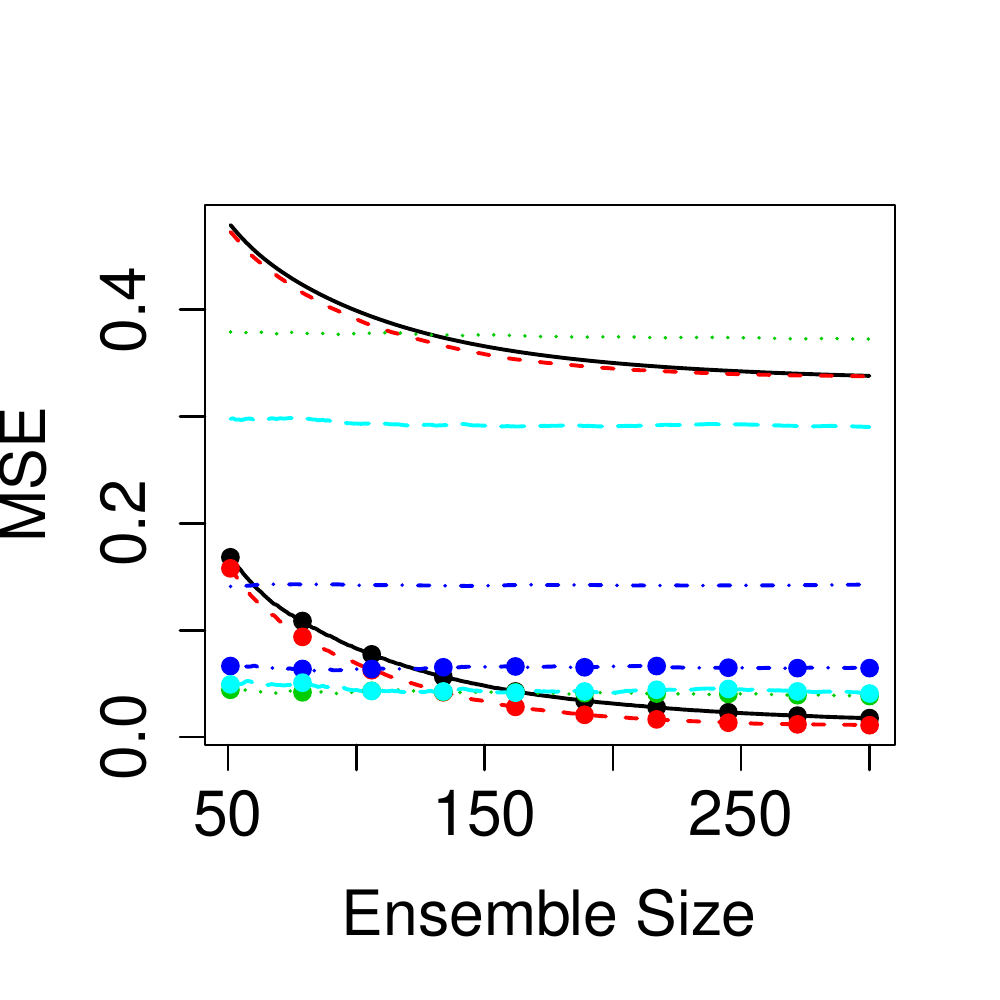}
        \caption{Function (1), Unif[-1,1] error.}
   \end{subfigure}
   \begin{subfigure}[h]{0.49\textwidth}
   	\centering
        \includegraphics[width=0.375\textwidth]{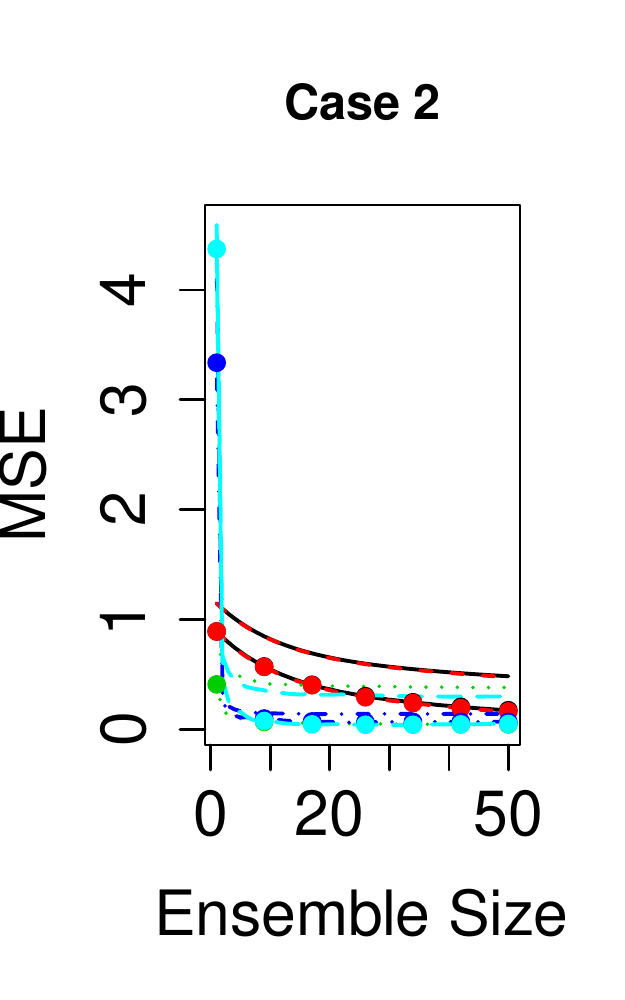}
        \includegraphics[width=0.60\textwidth]{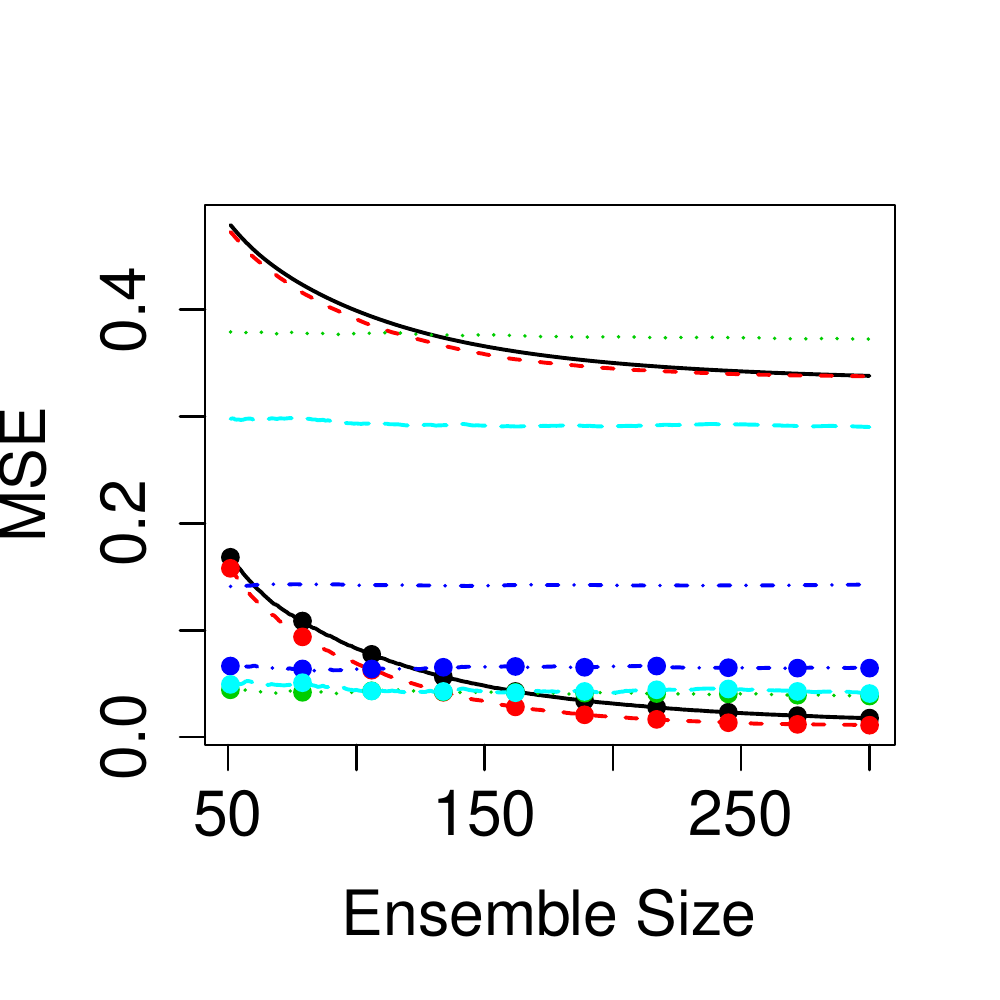}
        \caption{Function (1), Unif\{-1,1\} error.}
   \end{subfigure}
   \includegraphics[width=0.70\textwidth]{synlegend.png}
   \begin{subfigure}[h]{0.49\textwidth}
   	\centering
        \includegraphics[width=0.375\textwidth]{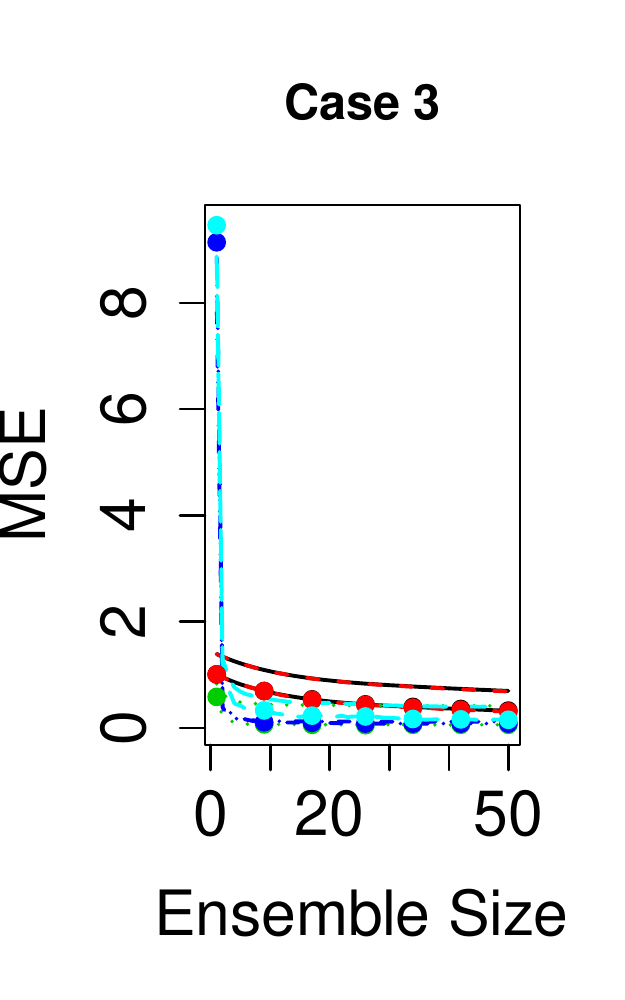}
        \includegraphics[width=0.60\textwidth]{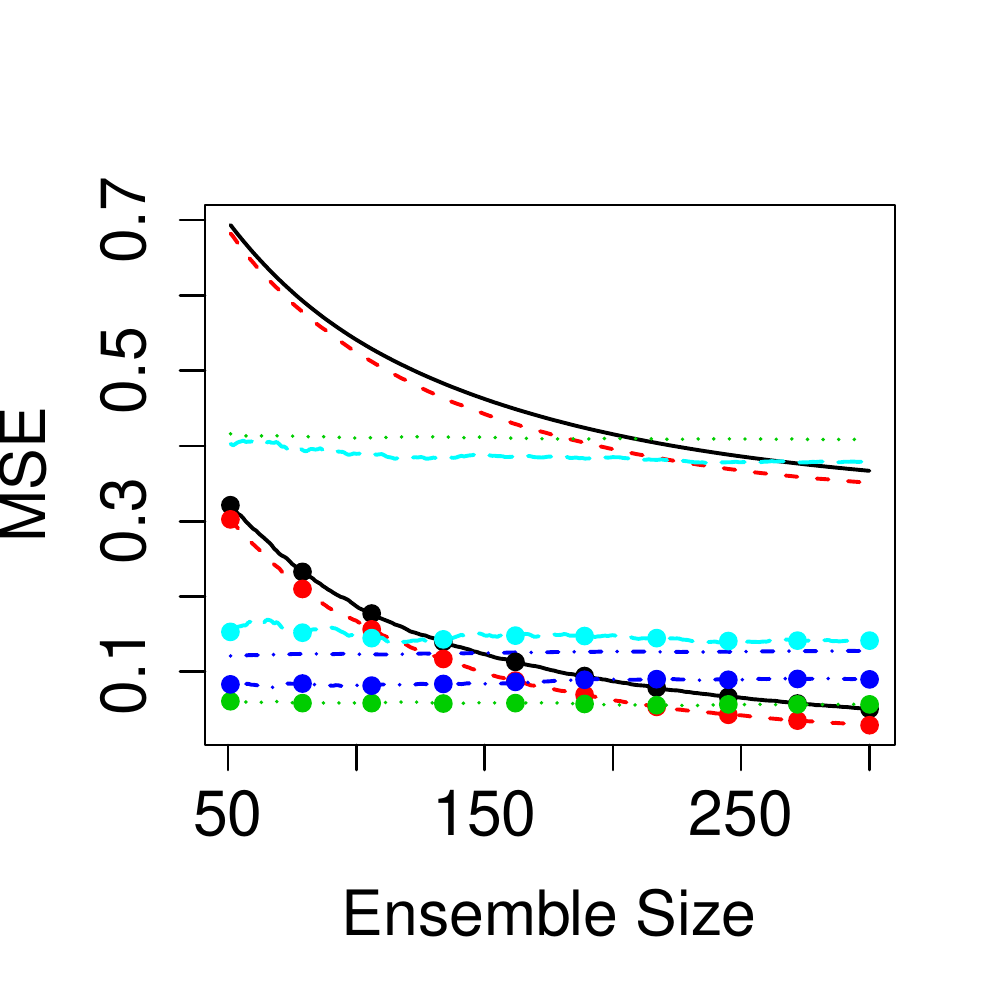}
        \caption{Function (2), Unif[-1,1] error.}
   \end{subfigure}
   \begin{subfigure}[h]{0.49\textwidth}
   	\centering
        \includegraphics[width=0.375\textwidth]{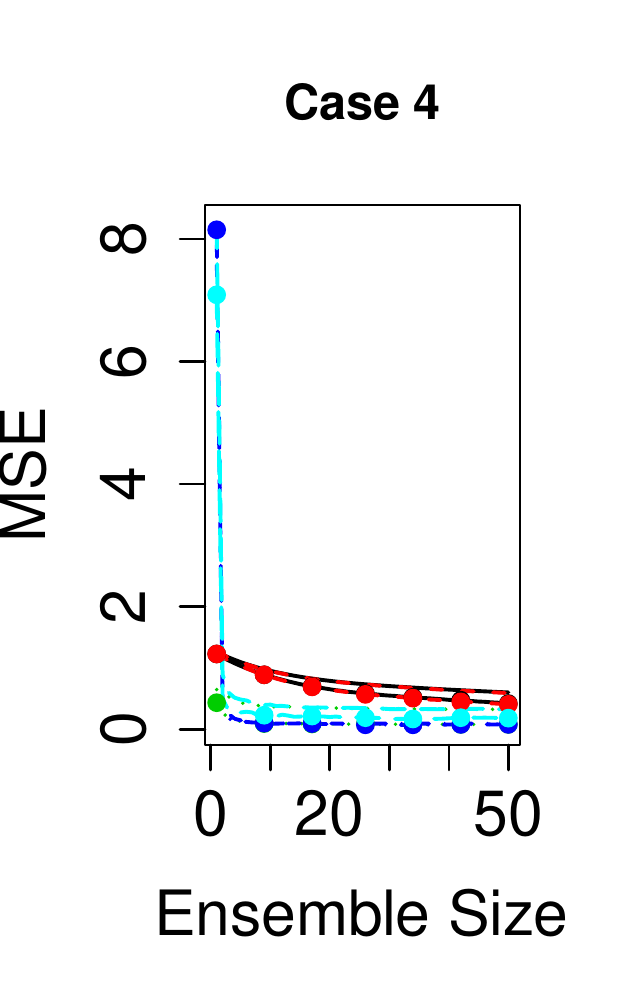}
        \includegraphics[width=0.60\textwidth]{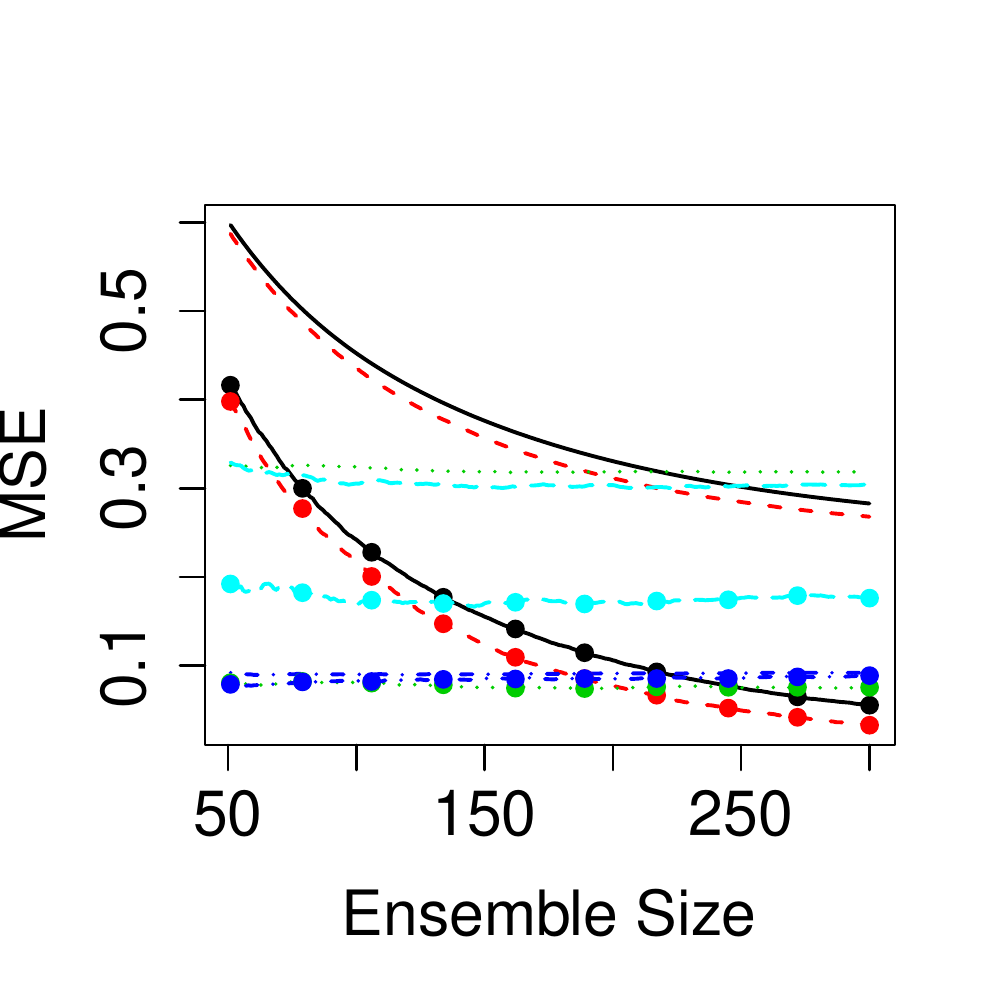}
        \caption{Function (2), Unif\{-1,1\} error.}
   \end{subfigure}
   \caption{Training and testing error curves of tree ensembles on simulated data. \blued{Testing errors are smaller as they are calculated on test sets with no response errors.}}
   \label{fig:sim_syn}
\end{figure}

\blued{
In addition, we have used the same settings and applied different $\lambda$ values 0.2, 0.5 and 0.8 on two simulated data sets to investigate the impact of $\lambda$ on Boulevard accuracy for finite sample cases. Results are summarized in Figure \ref{fig:sim_lambda}. BLV outperforms rBLV while $\lambda=0.8$ appears to be the best choice. Though this observation aligns with our intuition to choose a large $\lambda$, we consider the discrepancy between different $\lambda$ values to be small.
\begin{figure}[h] 
   \centering
   \includegraphics[width=0.8\textwidth]{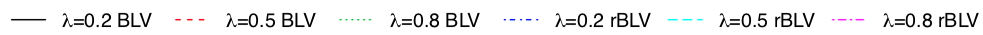}
   \begin{subfigure}[h]{0.49\textwidth}
      \centering
      \includegraphics[width=0.375\textwidth]{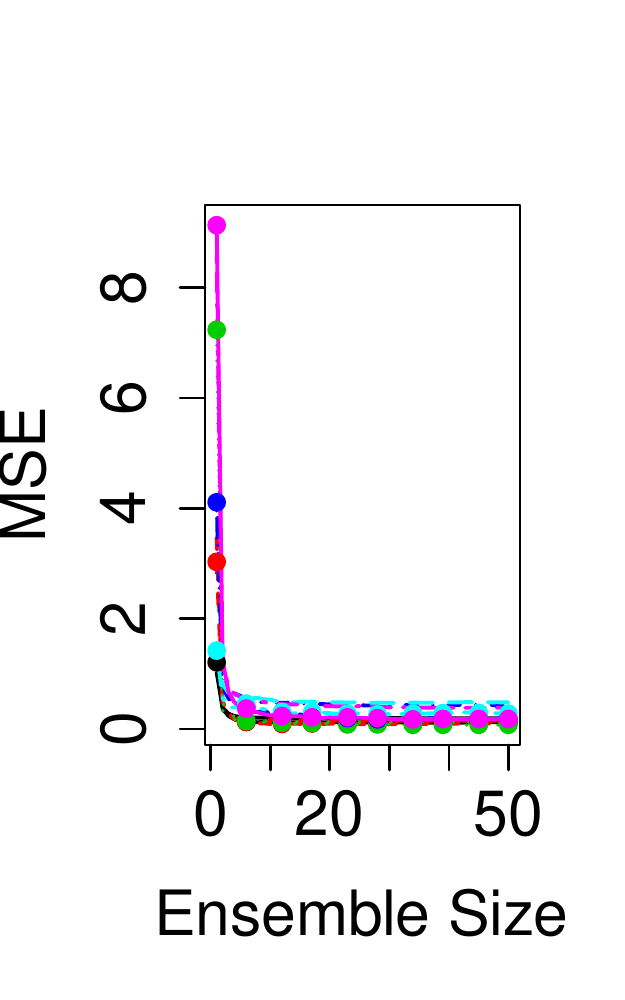}
      \includegraphics[width=0.60\textwidth]{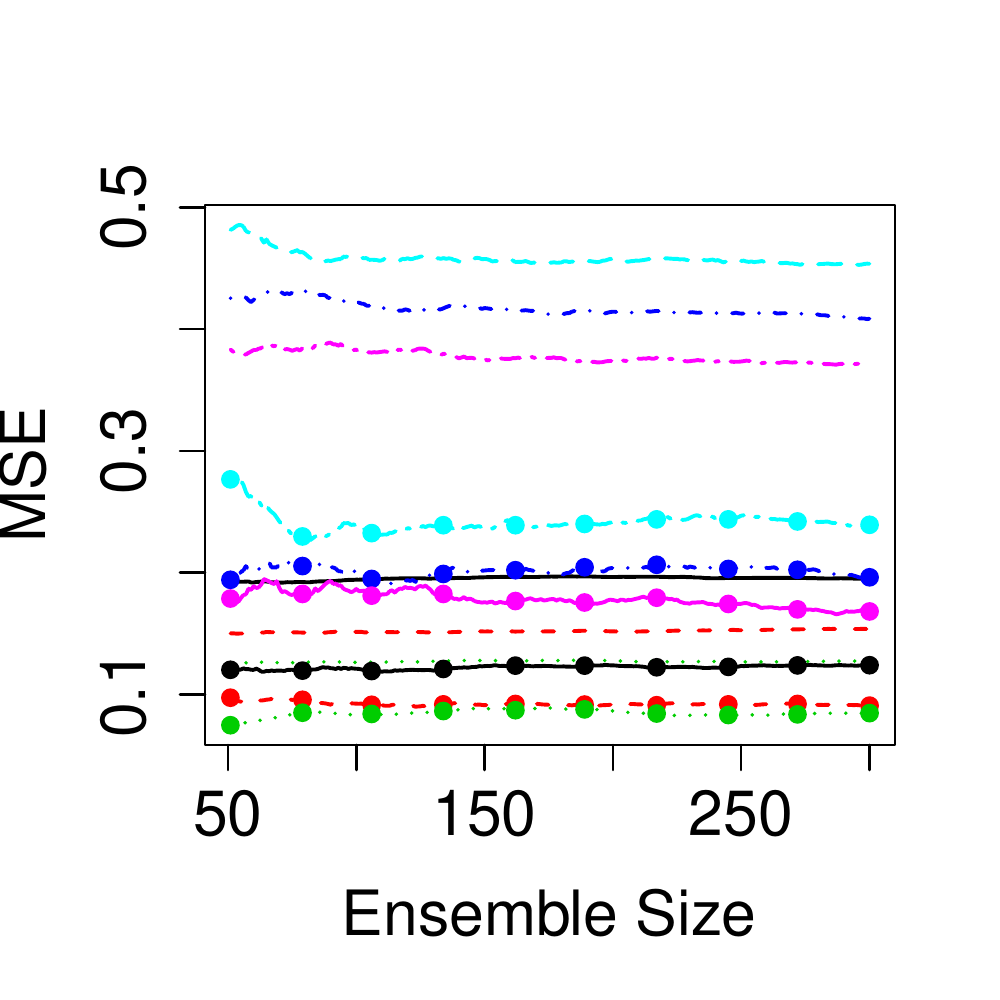}
      \caption{Unif[-1,1] error.}
   \end{subfigure}
   \begin{subfigure}[h]{0.49\textwidth}
      \centering
      \includegraphics[width=0.375\textwidth]{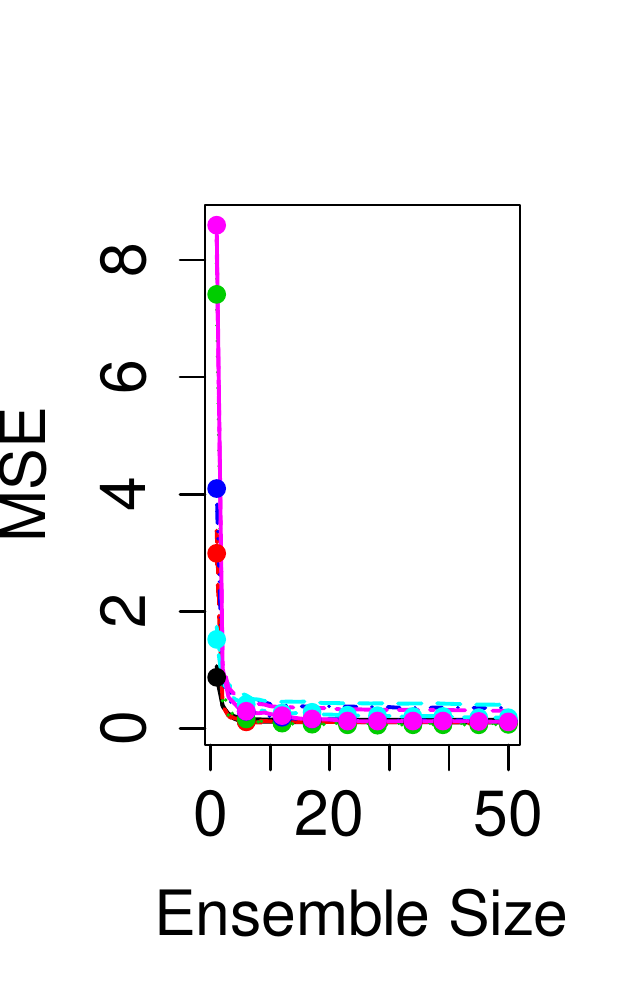}
      \includegraphics[width=0.60\textwidth]{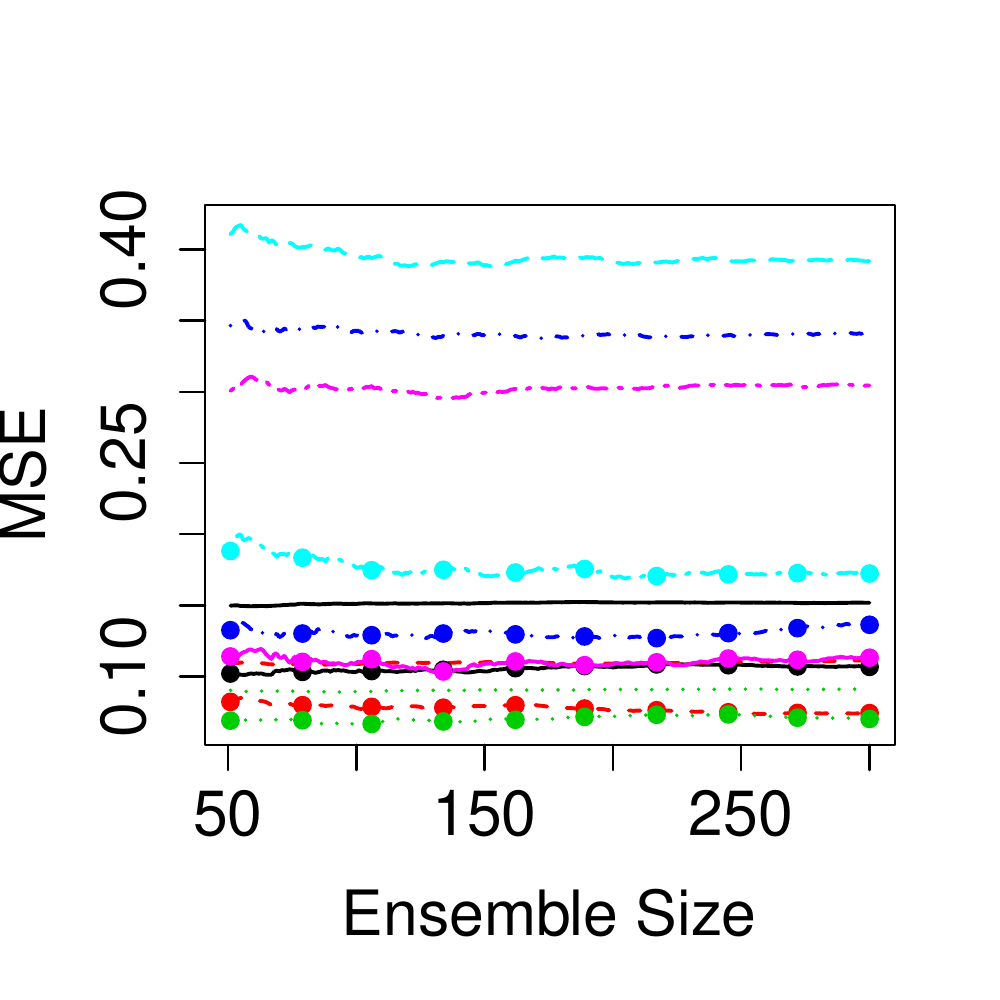}
      \caption{Unif\{-1, 1\} error}
   \end{subfigure}
   \caption{Boulevard accuracy with different choices of $\lambda$ values. Test errors are marked by dots and are smaller as they are calculated on tests set with no response errors.}
   \label{fig:sim_lambda}
\end{figure}
}

Results on four real world data sets \chgx{selected from UCI Machine Learning Repository \citep{Dua:2017, tufekci2014prediction, kaya2012local}} are shown in Figure \ref{fig:sim_real}. All curves are averages after 5-fold cross validation. Different parameters are used for different data sets,  \blued{see Appendix \ref{sec:appemp}}. Rankings of the five methods in comparison \blued{do not show consistency}, nevertheless rBLV and BLV manage to achieve \blued{similar performance to the other methods on test sets}.
\begin{figure}[h] 
   \centering
   \includegraphics[width=0.70\textwidth]{synlegend.png}
   \begin{subfigure}[h]{0.49\textwidth}
   	\centering
        \includegraphics[width=0.375\textwidth]{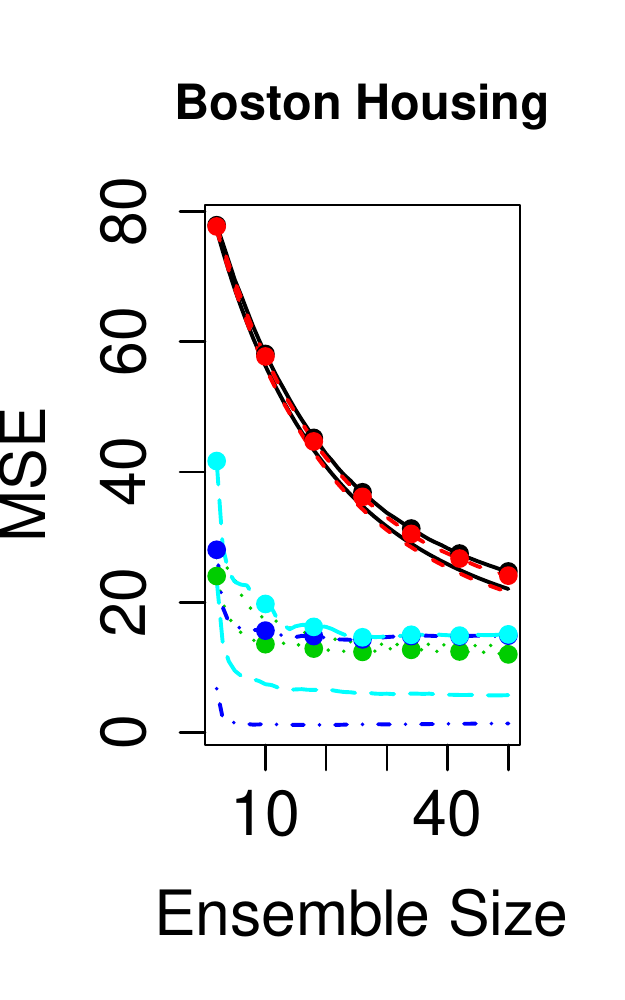}
        \includegraphics[width=0.60\textwidth]{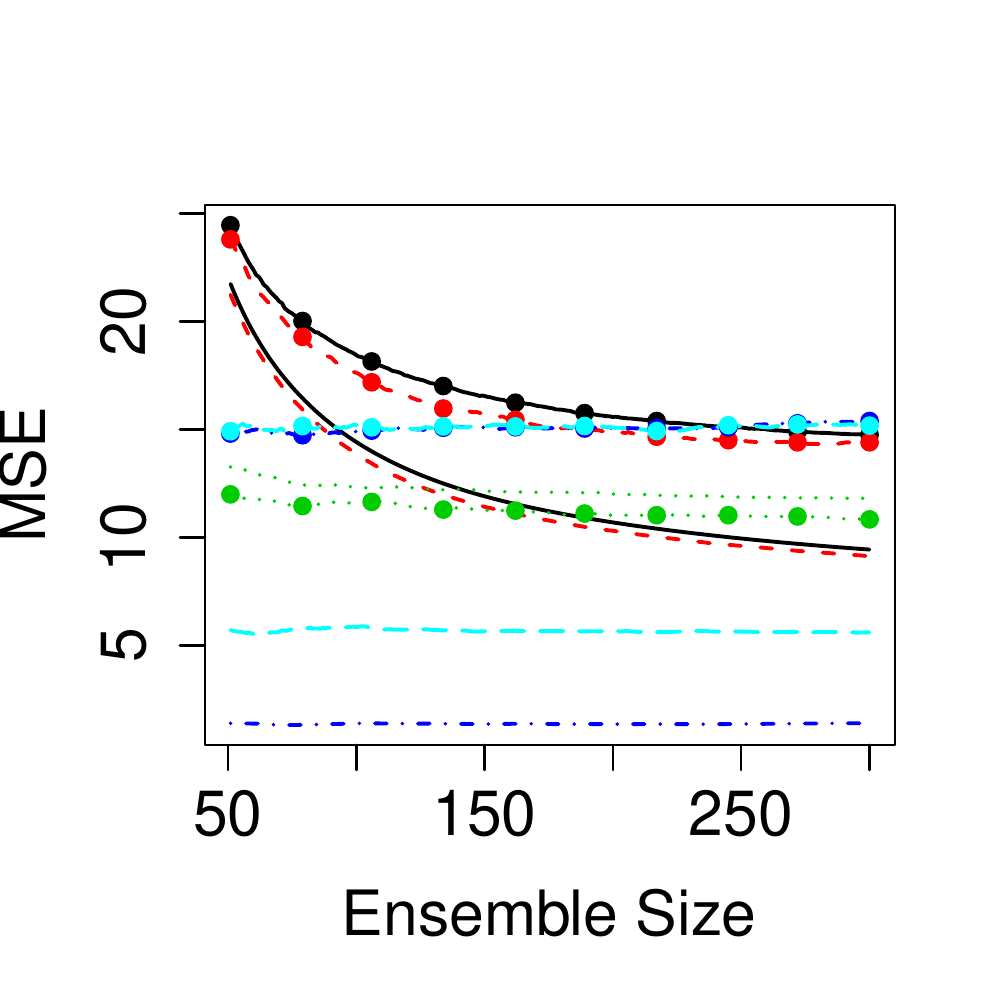}
        \caption{Boston Housing}
   \end{subfigure}
   \begin{subfigure}[h]{0.49\textwidth}
   	\centering
        \includegraphics[width=0.375\textwidth]{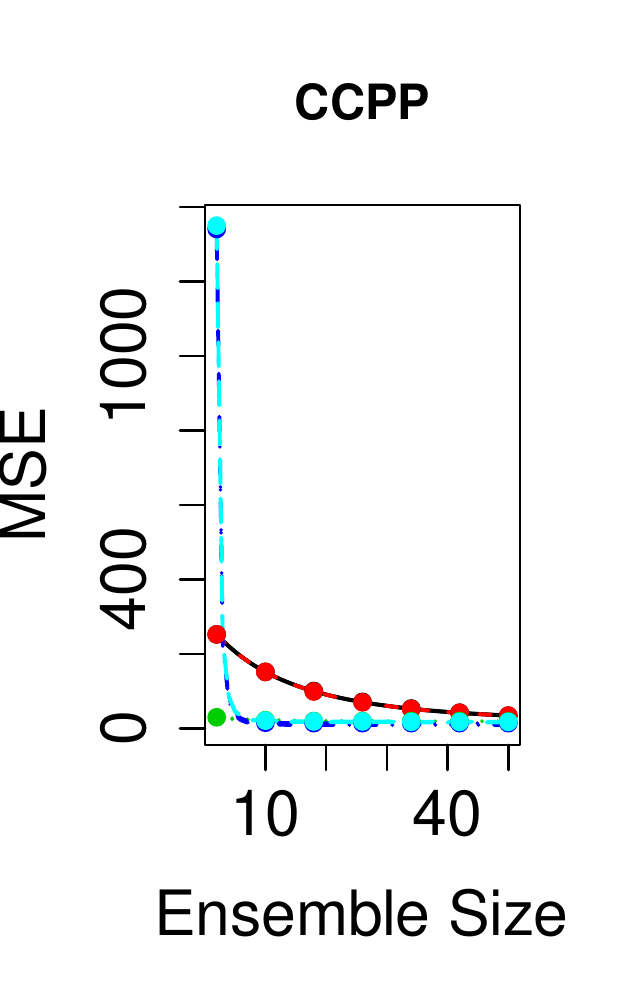}
        \includegraphics[width=0.60\textwidth]{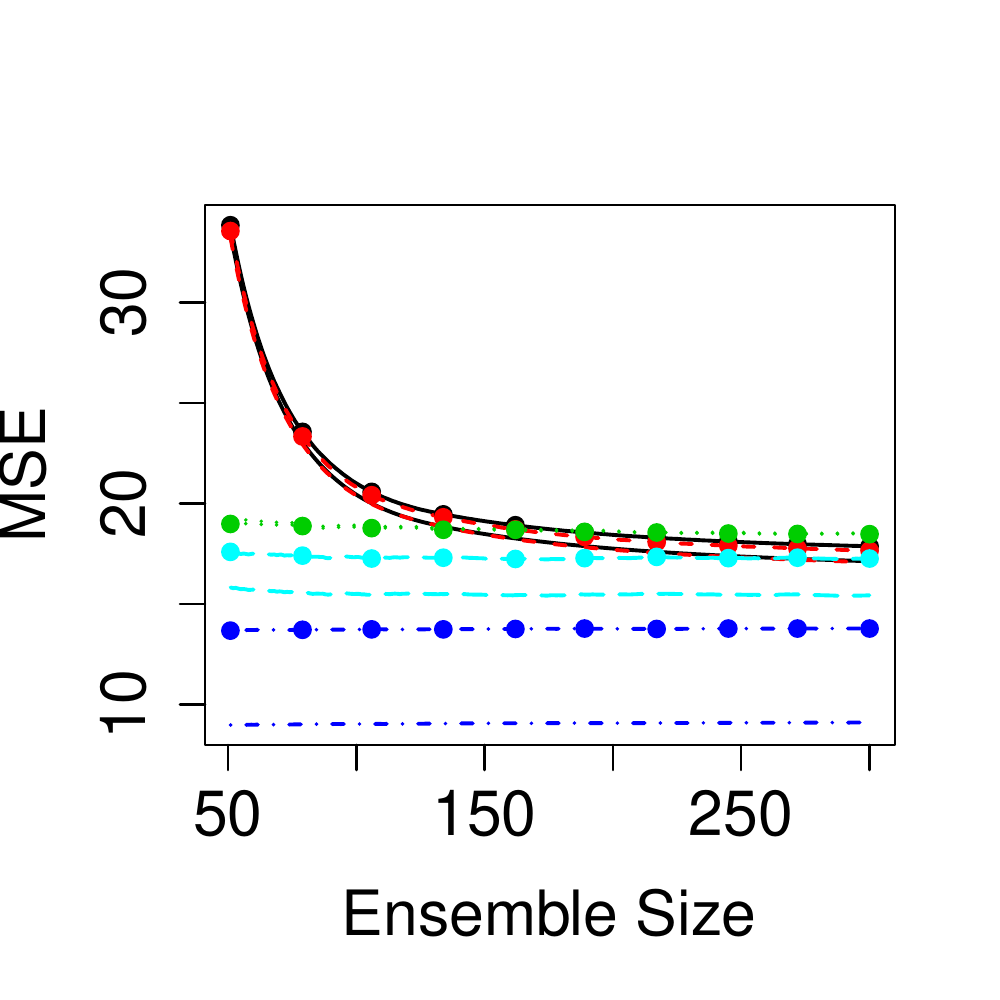}
        \caption{Combined Cycle Power Plant}
   \end{subfigure}
   \includegraphics[width=0.70\textwidth]{synlegend.png}
   \begin{subfigure}[h]{0.49\textwidth}
   	\centering
        \includegraphics[width=0.375\textwidth]{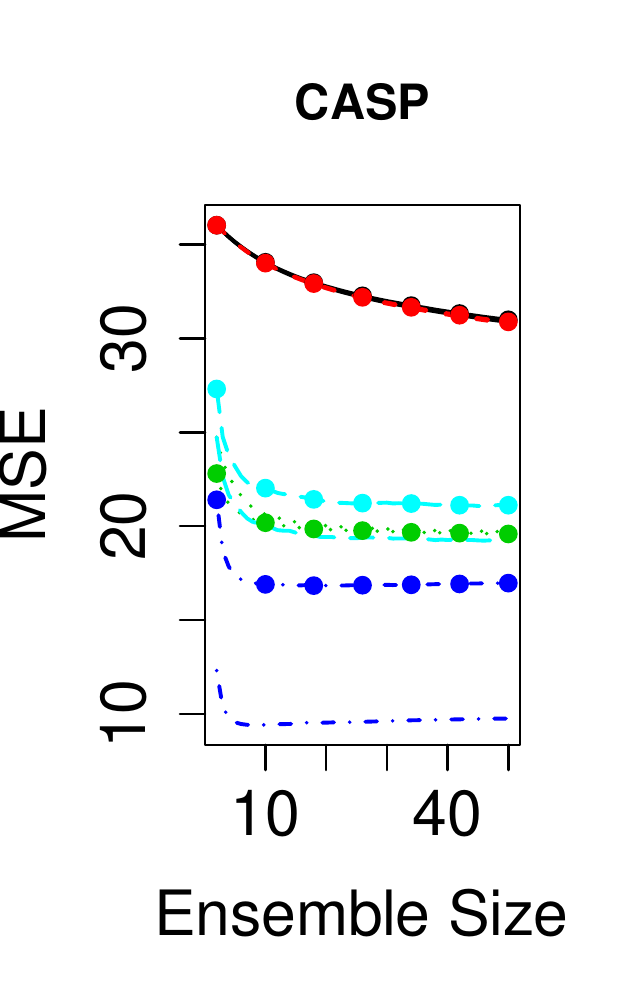}
        \includegraphics[width=0.60\textwidth]{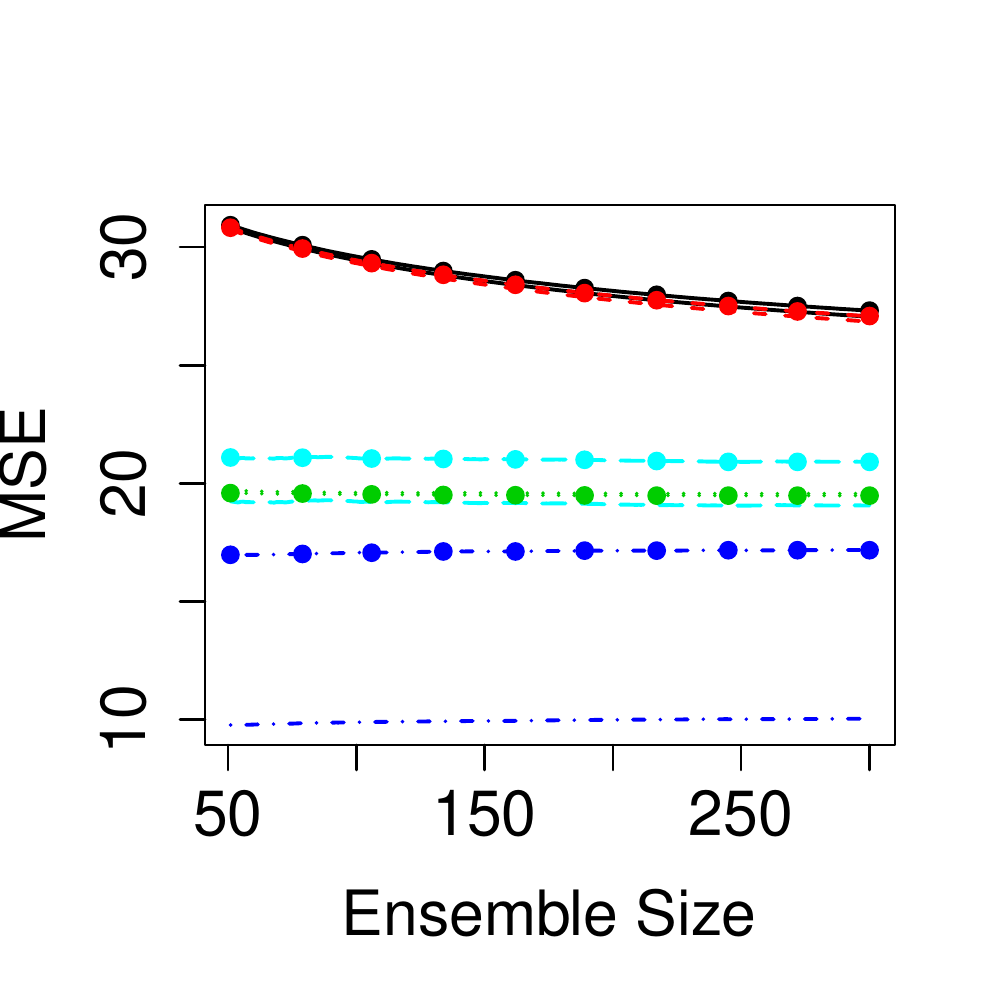}
        \caption{Protein Tertiary Structure}
   \end{subfigure}
   \begin{subfigure}[h]{0.49\textwidth}
   	\centering
        \includegraphics[width=0.375\textwidth]{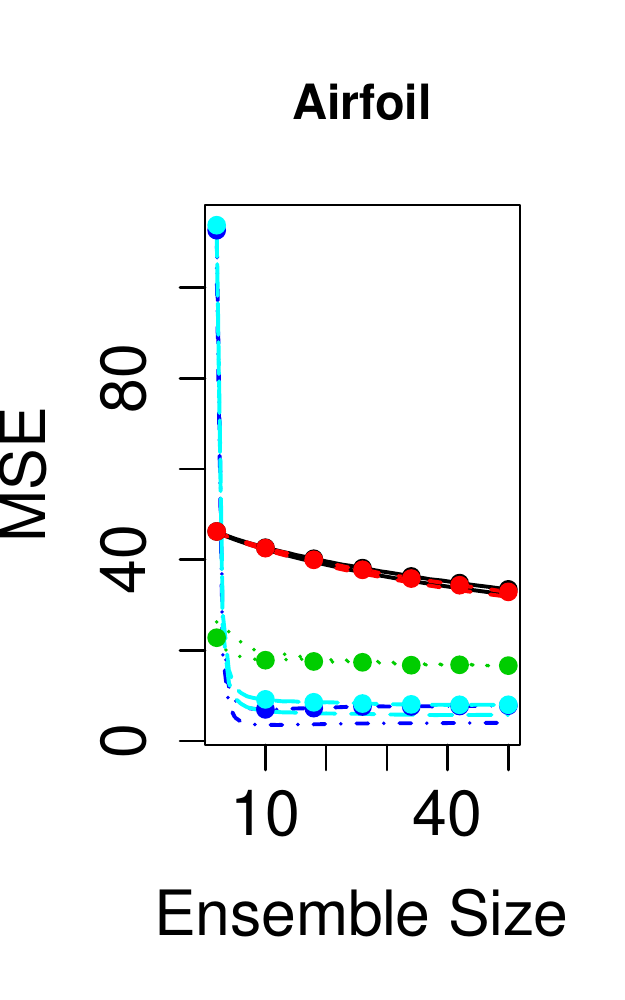}
        \includegraphics[width=0.60\textwidth]{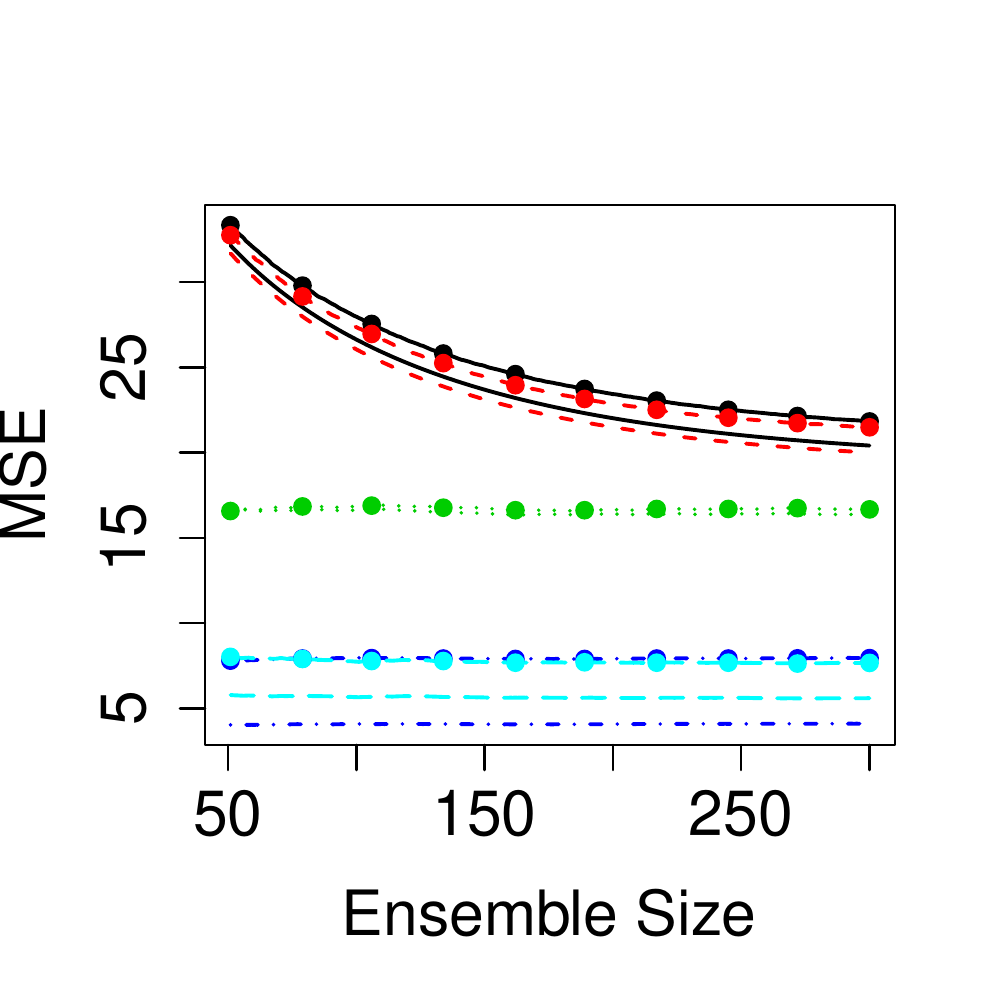}
        \caption{Airfoil}
   \end{subfigure}
   \caption{Training and testing error curves of tree ensembles on real world data sets.}
   \label{fig:sim_real}
\end{figure}

\subsection{Convergence and Kernel Ridge Regression} \blued{
Theorem \ref{thm:boulevardconvergence} and Corollary \ref{cor:prediction} claim that any Boulevard prediction should converge to a kernel ridge regression (KRR) form. Although the matrix inverse involved in the expression is too time consuming to compute for large $n$, it is possible to estimate it using Monte Carlo when $n$ and $d$ are small. We take advantage of this setup to compare Boulevard and KRR in order to empirically support the theorem.

We choose $d=5$ and $n=200$. We use the following model
\begin{align}
\label{fml:sim_mean}
y = x_1 + 3x_2 + x_3^2 + 2x_4x_5+\epsilon, \quad \epsilon \sim Unif[-1,1].
\end{align}
to generate training data. For 4 test points (see Appendix \ref{sec:appemp}), we predict their values using non-adaptive Boulevard with 100 trees, subsampling rate $0.8$ and $\lambda = 0.8$. We also make KRR predictions based on the expected tree structure matrix $\E[S]$ estimated by Monte Carlo using the same 100 trees. This procedure is repeated 20 times with a new training sample each time. Our results are summarized in Figure \ref{fig:blv_kr}.

The plot on the left shows the interim predictions made by one Boulevard run. The shaped dots are Boulevard predictions, while the horizontal lines are the predictions made by KRR. We see that Boulevard iterations converge, while the convergent points agree with KRR predictions. On the right we plot the predictions made by Boulevard against the ones made by KRR for each of the 20 repetitions, with the horizontal and vertical lines marking the true responses. All dots are close to the diagonal $x=y$, indicating these two methods always produce almost identical results. It also shows that the sample variability is influencing the two methods in the same way. Their MSEs, as labeled in the plots, also appear to be indistinguishable.

\begin{figure}[h] 
   \centering
   \includegraphics[width=0.49\textwidth]{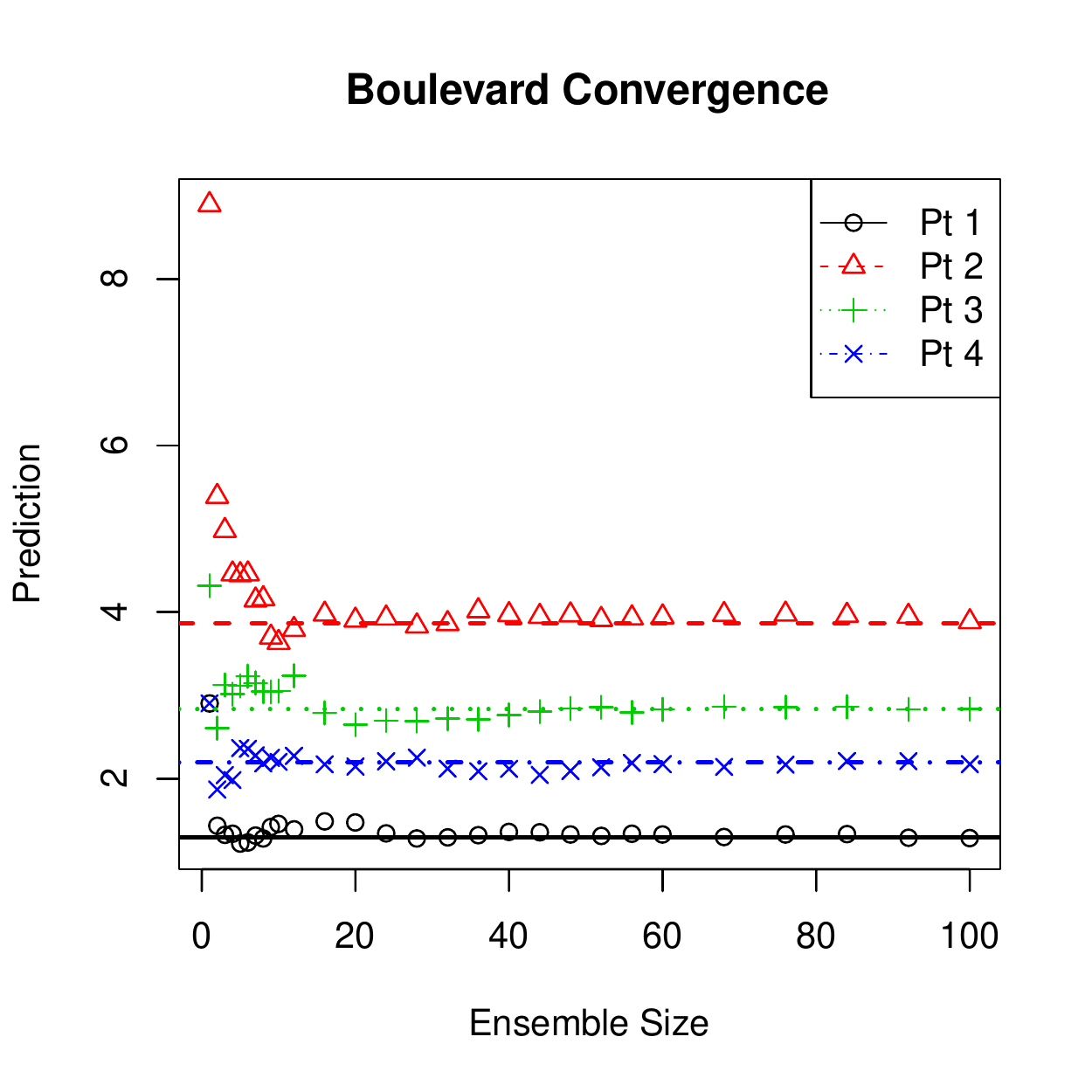}
   \includegraphics[width=0.49\textwidth]{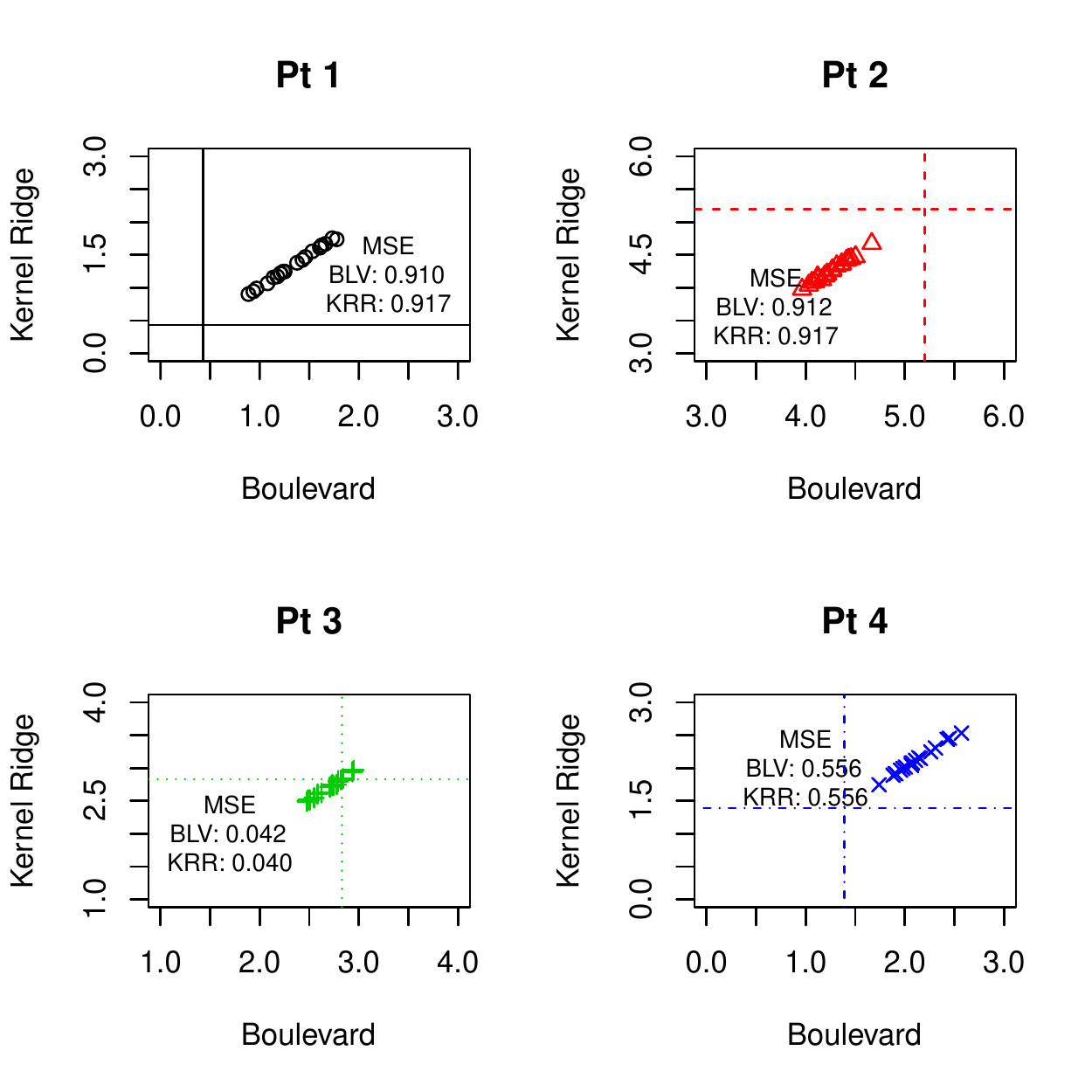}
   \caption{\blued{\textbf{Left}: Boulevard iterations on 4 test points converge to the theoretical convergent points given by the KRR form. The shaped dots are interim Boulevard predictions at different iterations, while horizontal lines are KRR results. \textbf{Right}: Boulevard predictions v.s. KRR predictions. Two methods generate indistinguishable predictions along the diagonal with similar MSEs. True responses are marked by the horizontal and vertical lines.}}
   \label{fig:blv_kr}
\end{figure}
}

\subsection{Limiting Distribution}
To examine the limiting behavior of non-adaptive Boulevard, \blued{we again use the generative model}
\begin{align*}
y = x_1 + 3x_2 + x_3^2 + 2x_4x_5 + \epsilon.
\end{align*}
A set of 10 fixed test points \chgx{(see Appendix \ref{sec:appemp}) }are used along the experiments. We \changed{set a sample size of 1000}, add different sub-Gaussian error terms to \changed{this signal and} build non-adaptive Boulevard until ensemble size reaches 2000. This is repeated 1000 times with a new sample each time and we plot the distribution of the predictions\deleted{ as shown} in Figure \ref{fig:pred_dist}. \deleted{Sample size is fixed at 1000. }All these curves are undistinguishable from normal distribution by Kolmogorov-Smirnov test.

\begin{figure}[h] 
   \centering
   \includegraphics[width=0.49\textwidth]{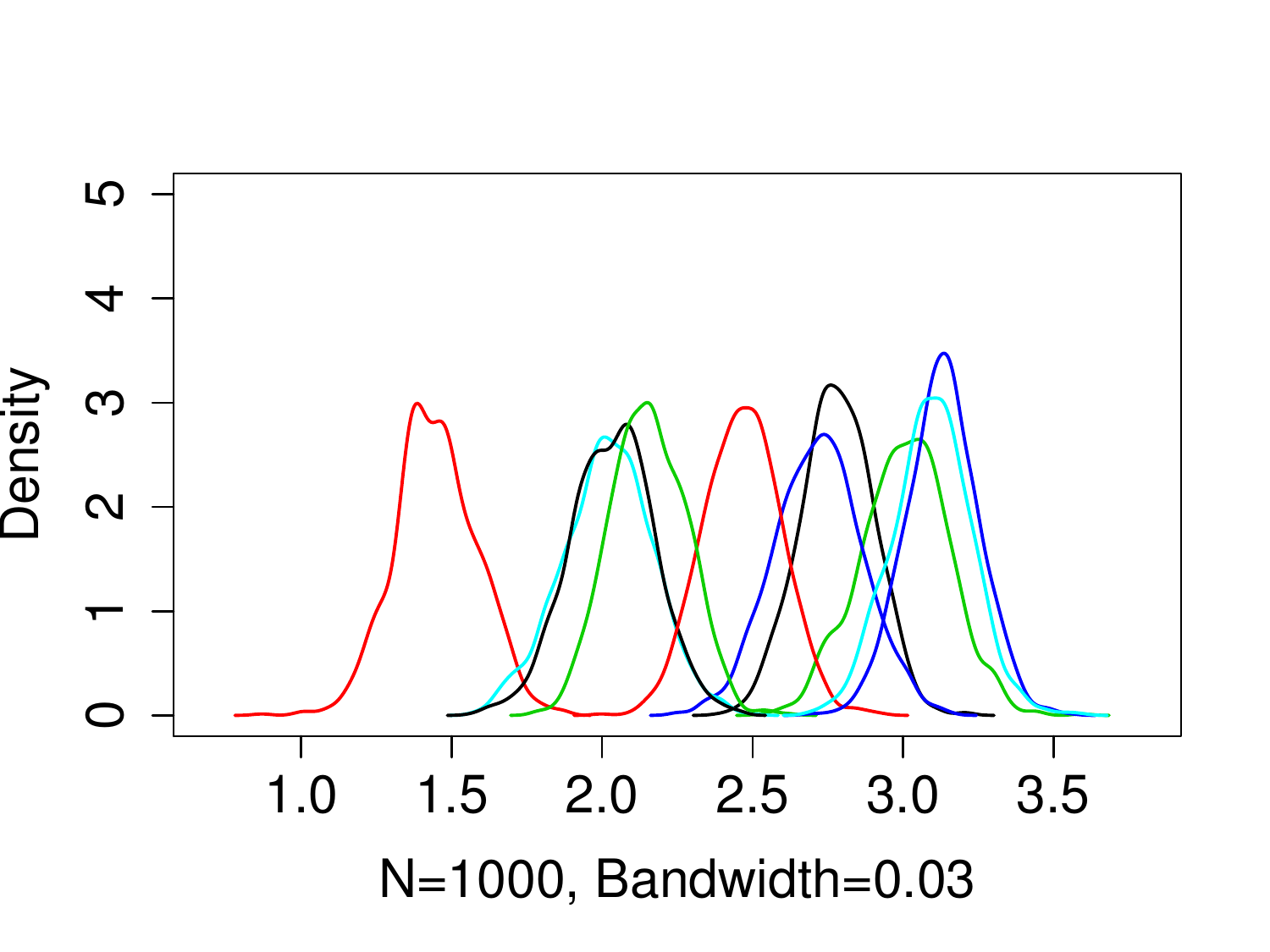}
   \includegraphics[width=0.49\textwidth]{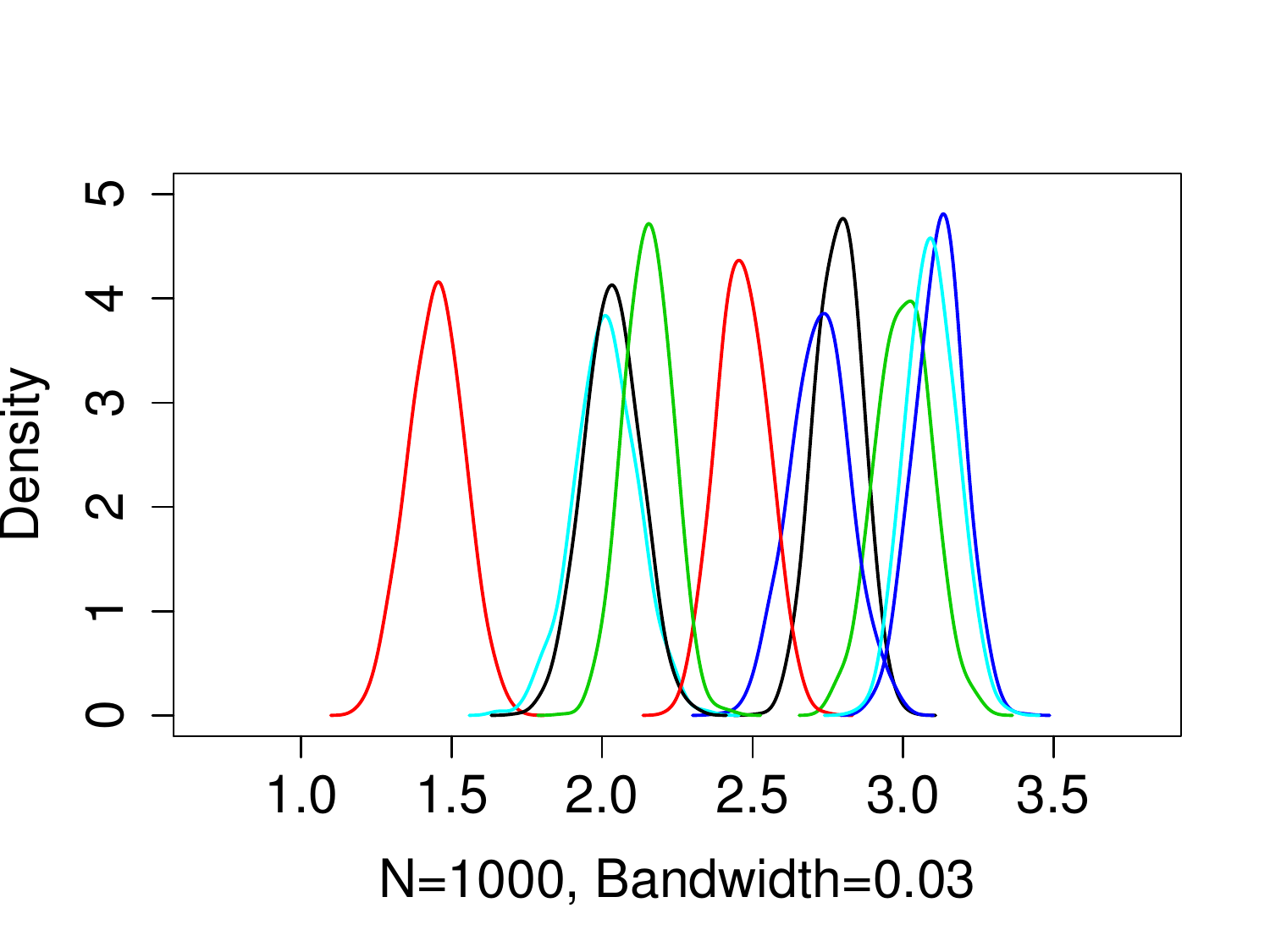} \\
   \includegraphics[width=0.49\textwidth]{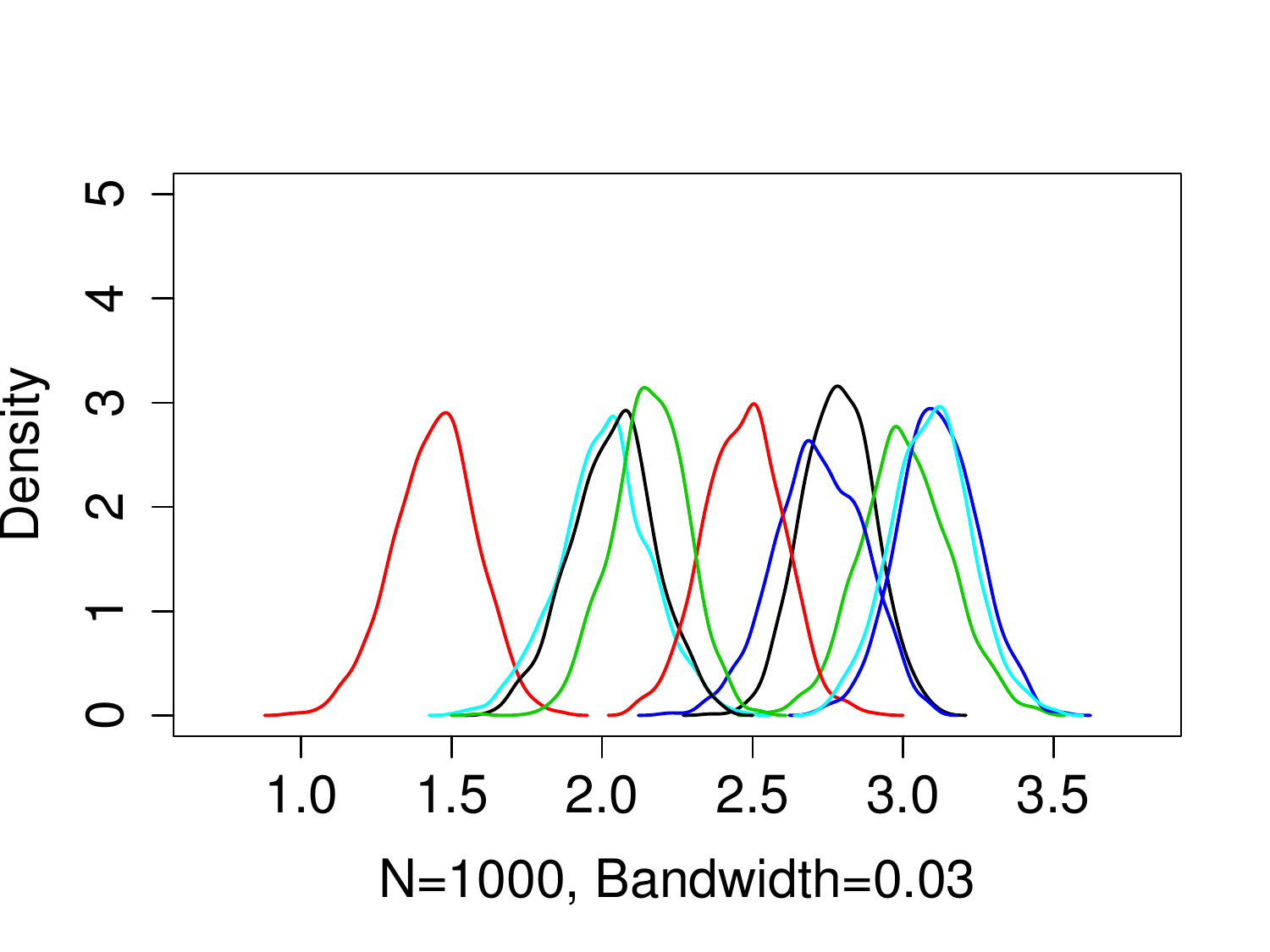}
   \includegraphics[width=0.49\textwidth]{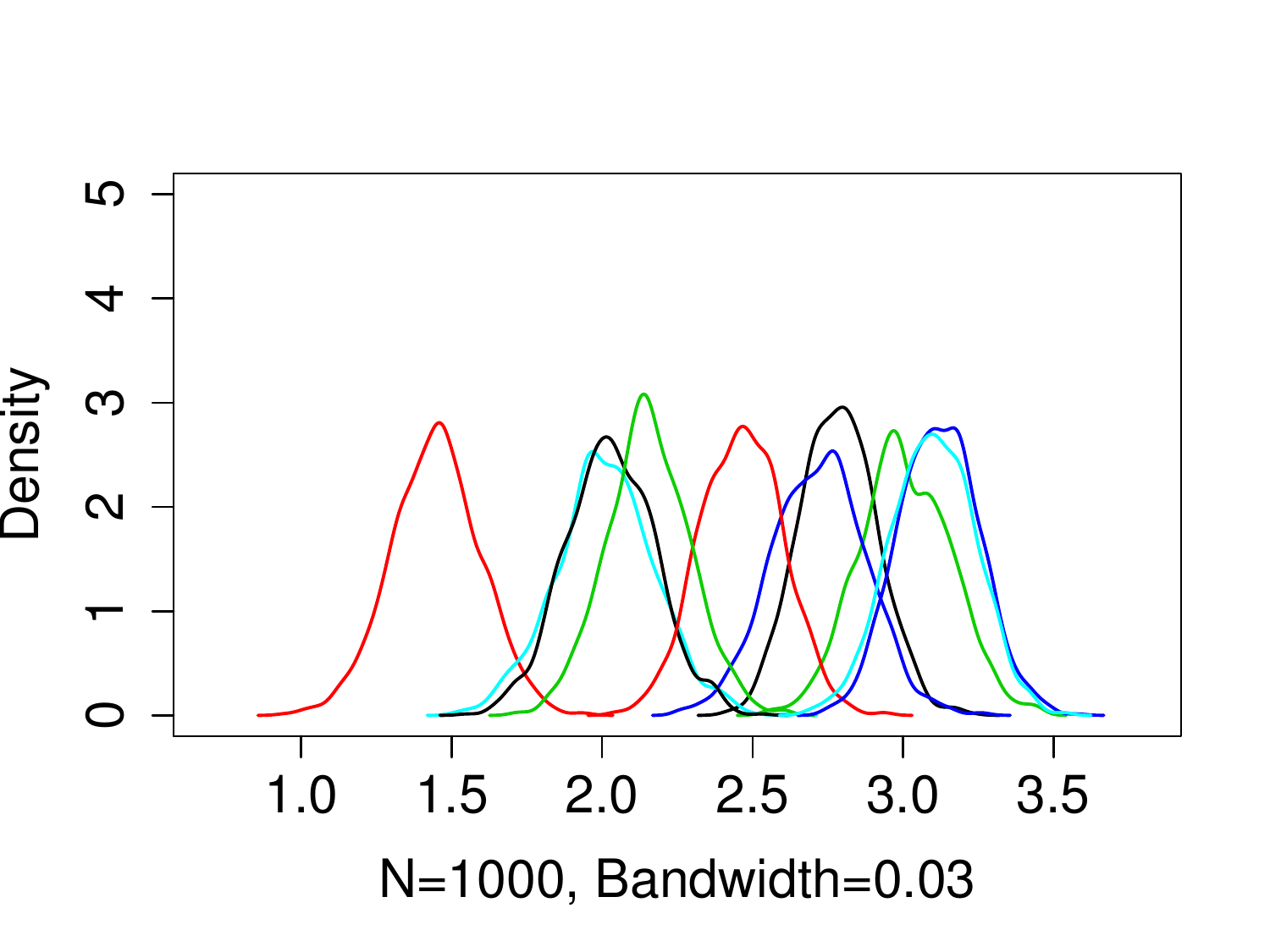}
   \caption{Distributions of predictions of test points with different error terms. The errors are N(0,1), Unif[-1,1], Unif$\{-1, 1\}$, and half chance -1 half chance Unif[0,2], respectively. }
   \label{fig:pred_dist}
\end{figure}

\subsection{Reproduction Interval}
Similar to prediction intervals which quantify the uncertainty of future predictions, we introduce the {\em reproduction interval} as the uncertainty measure for where the prediction would be if it were made on another independent sample. Theorem \ref{thm:main} is used to create reproduction intervals for Boulevard. $k_n$ in the stochastic variance is empirically estimated directly using the ensemble, while $[\frac{1}{\lambda} I + K_n]^{-1}$ is \deleted{simplified to $\lambda$ by its norm conservatively}\changed{conservatively simplified to its largest possible norm $\lambda$}. We then scale the variance estimate by 2 to \changed{account for having separate independent samples}: \blued{With two independent samples, we expect the prediction on the same point $x$ to be $\hat{f}_1(x) \sim N(f(x), \sigma_x^2)$ and $\hat{f}_2(x)\sim N(f(x), \sigma_x^2)$ respectively, implying $\hat{f}_1(x) - \hat{f}_2(x) \sim N(0, 2\sigma_x^2)$.}

We \changed{use the training sample to create reproduction intervals for the test points, then repeatedly train and} predict each test point \changed{for another} 100 times with a different sample each time. Figure \ref{fig:pi} shows the 95\% reproduction intervals we capture under different settings. We anticipate more accurate results with larger sample size.

\begin{figure}[h] 
   \centering
   \includegraphics[width=0.49\textwidth]{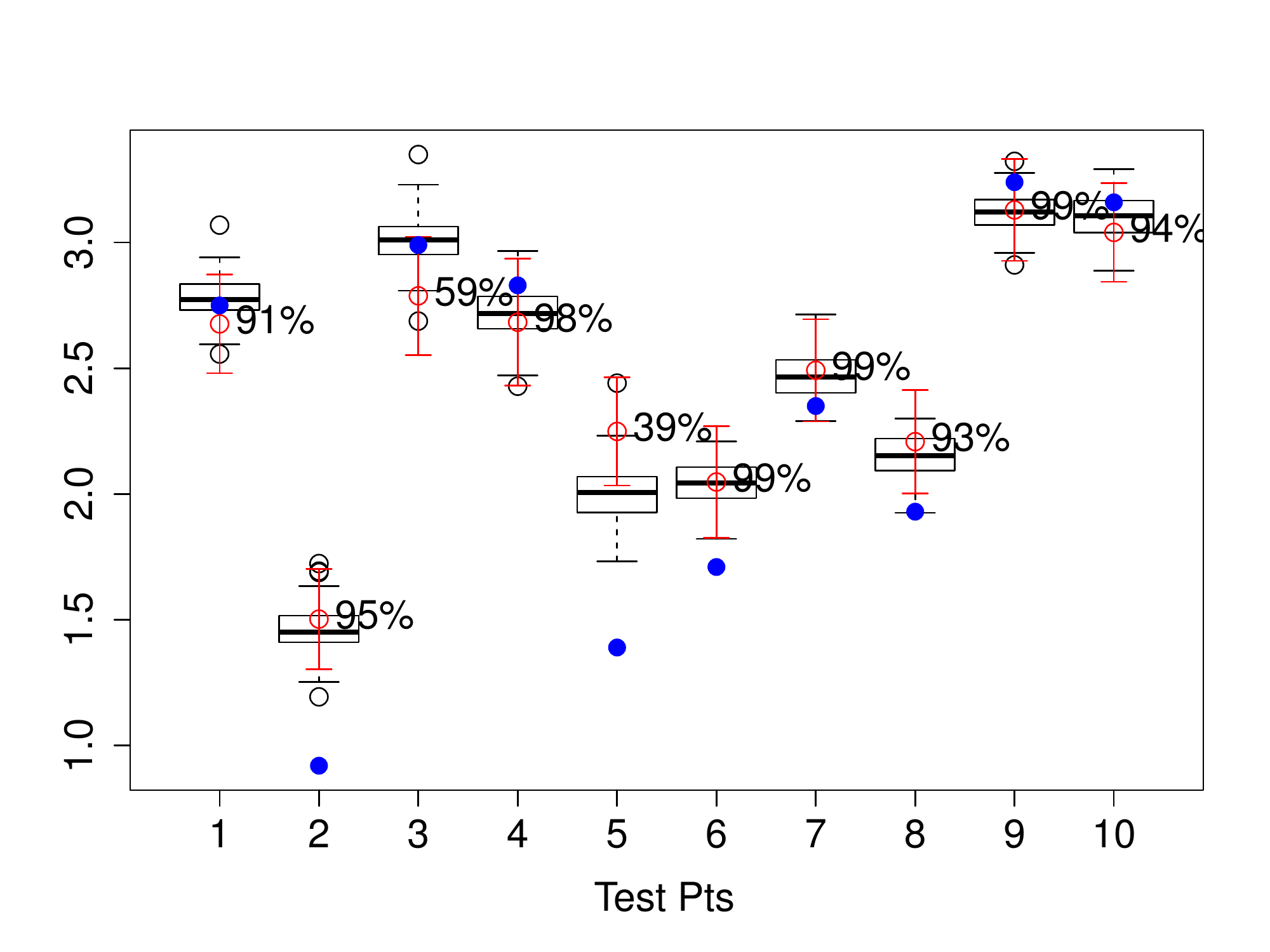}
   \includegraphics[width=0.49\textwidth]{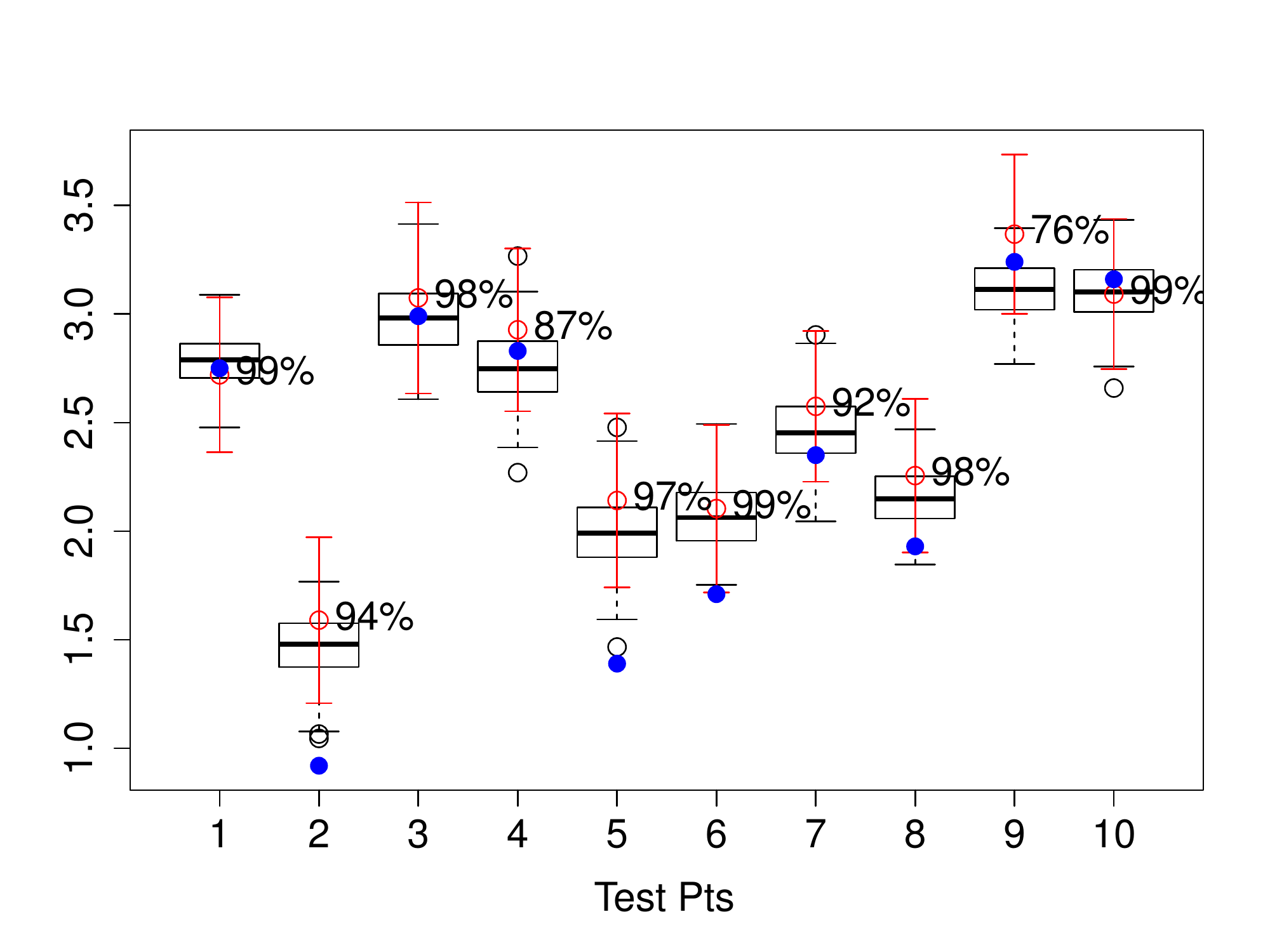} \\
   \includegraphics[width=0.49\textwidth]{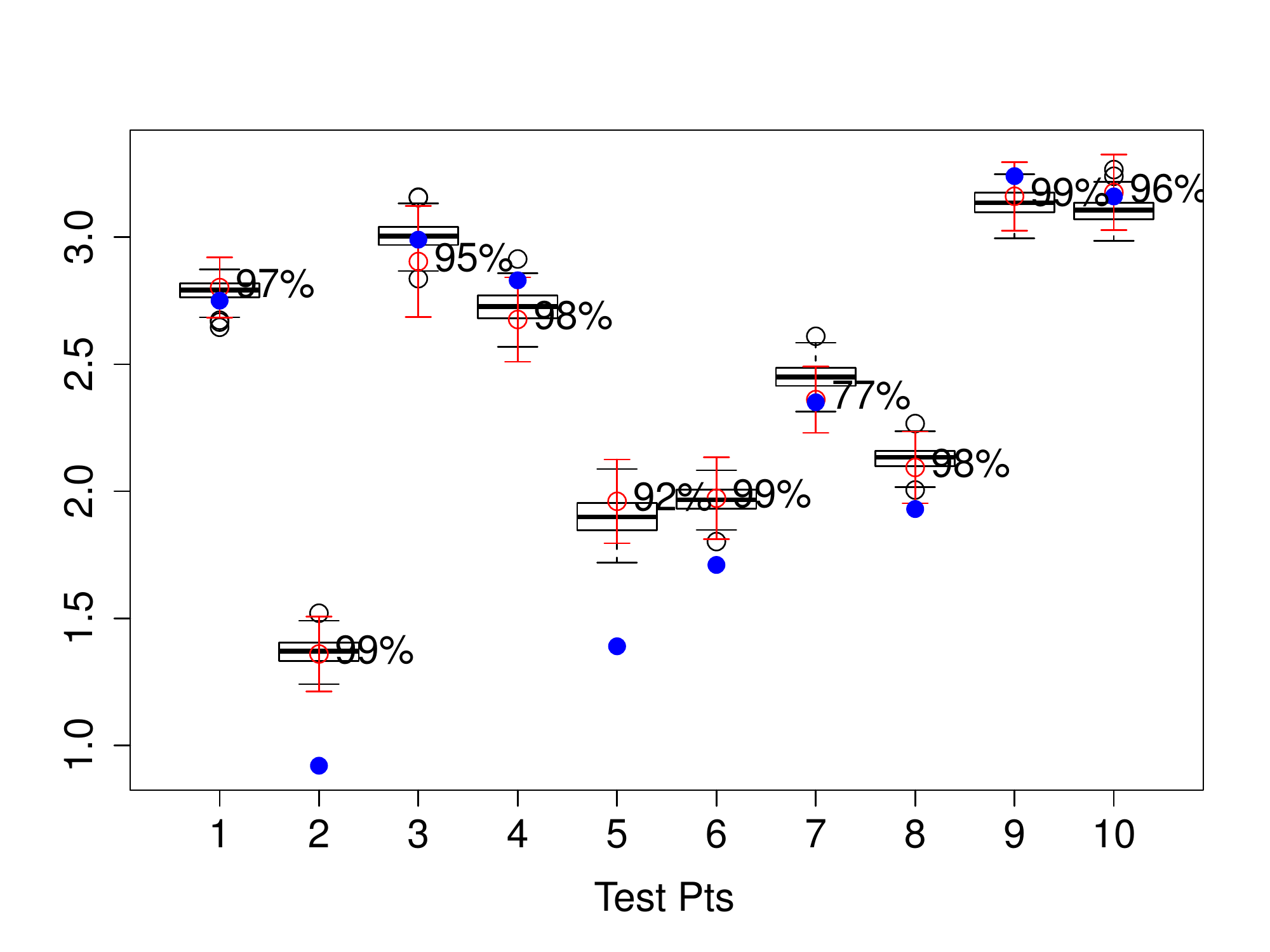}
   \includegraphics[width=0.49\textwidth]{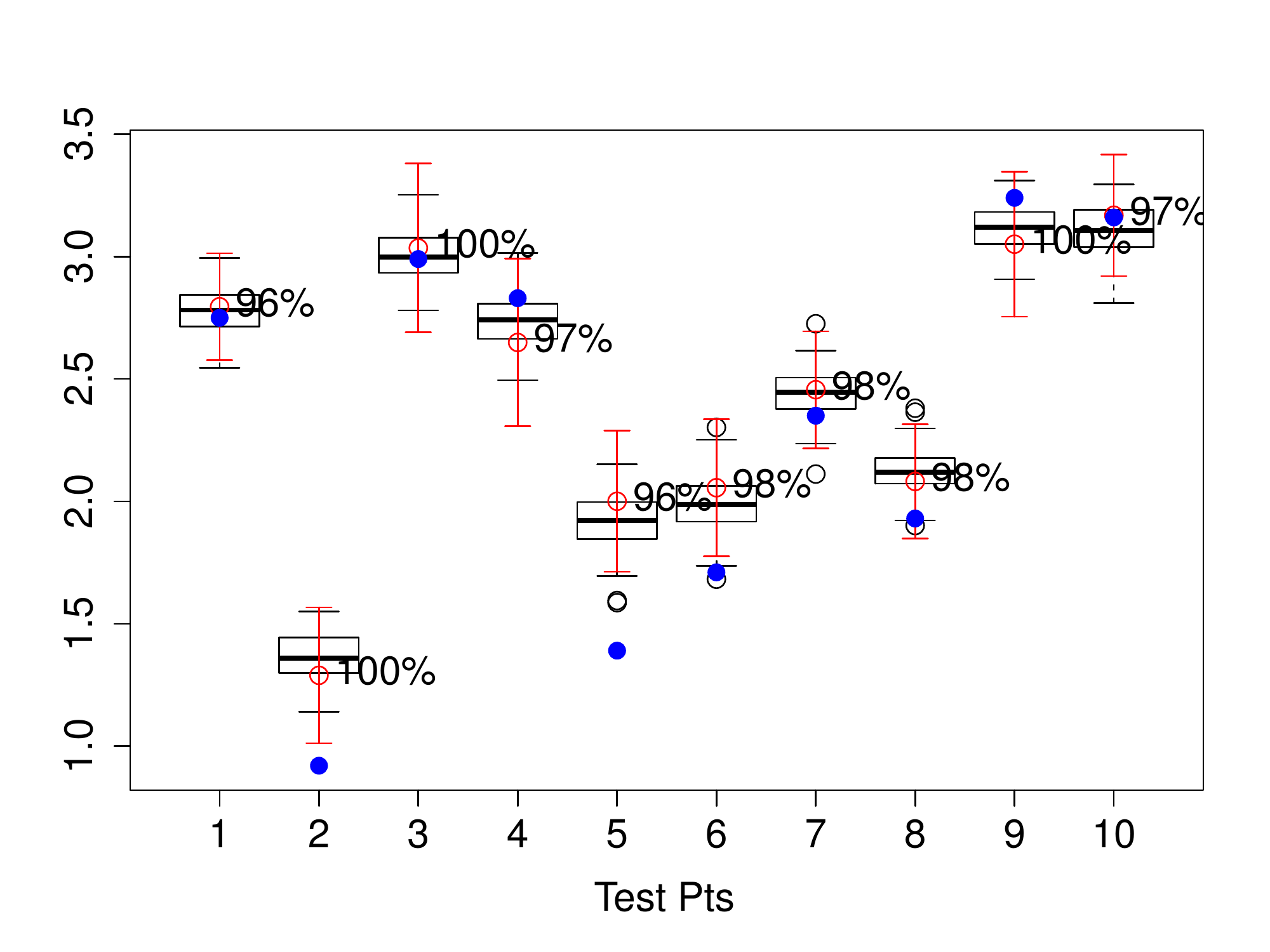}
   \caption{Reproduction intervals. Boxplots show distributions of predictions; red intervals are reproduction intervals; blue dots are truths. Sample sizes are 1000 (top row) and 5000 (bottom row), error terms Unif[-1,1] (left column) and Unif[-2, 2] (right column). Coverage is shown by numbers next to interval centers. }
   \label{fig:pi}
\end{figure}

Furthermore, we notice the uniform pattern of biases in those plots. \changed{This bias comes from }two known causes. One is that \changed{we are using small samples} which are far from guaranteeing the consistency. The other is because of the edge effects; the distance of the ten chosen test points to the center of the hypercube is respectively 0.000, 0.671, 0.894, 0.894, 0.894, 0.693, 0.520, 0.436, 0.510 and 0.469. We in general expect biased prediction when the point is near the boundary.

\section{Discussion}
This work is so far the first we know to have \deleted{stablised statements of GBT limiting distributions}\changed{established a limiting distribution for gradient boosted trees}. The roadmap consists of the following key components. We implemented the honest tree construction to reduce the chance of chasing \changed{order statistics}, applied downweighting towards averaging to achieve convergence, and carefully selected tree construction rate to obtain asymptotic normality. With the uncertainty measure of Boulevard predictions, we are looking forward to exploring the use of regularized gradient boosting \changed{within inference}, making the model more interpretable and analyzable.

The sequential correlation \changed{induced by GBT} is the major issue complicating our analysis\deleted{ which prevents the easy approaches to asymptotic normality}, while the resistance of decision trees to mathematical quantification adds more complication. Much of our effort is spent on seeking rational conditions to compromise these features. Structure-value isolation, non-adaptivity and down-weighing all contribute to making sequential trees less correlated.

As briefly discussed above, there are two remaining questions in our paper. The first question regards the stochastic variance. To allow a richer collection of decision tree construction strategies, it is essential to discover the weakest tree condition under which the variance term converges in probability. The second question is the convergence \deleted{conclusion on}\changed{of} adaptive Boulevard which appears to \deleted{be more feasible}\changed{hold} in practice. In spite of the \textit{ad hoc} tail snapshot we proposed, the existence and uniqueness of the contraction region of generic Boulevard requires more mathematical formulation of decision trees or\deleted{, possibly,} some variation of decision trees.

\bibliographystyle{chicago}
\bibliography{main}

\appendix

\section{Proofs}
\subsection{Proof of Theorem \ref{thm:treeaskernel}}
\begin{proof}
\blued{We will prove the theorem with respect to $\E_w[\cdot]$. The $\E_{w,q}[\cdot]$ conclusion is an instant corollary as the three properties are additive.}
To prove (1), element-wise non-negativity is trivial. To show symmetry, consider any given $i\not=j$ and assume $x_i \in A$ and $x_j \in A'$ under the assumption of subsample uniformity,
\begin{align*}
\E_w [S_n]_{i,j} = \E_w [s_{n,j}(x_i)] & = \frac{1}{\binom{n}{\theta n}} \sum_w \frac{I(x_j \in A)I(j \in w)}{\sum_{x_l \in A} I(l \in w)} \\
\E_w [S_n]_{j,i} = \E_w [s_{n,i}(x_j)] & = \frac{1}{\binom{n}{\theta n}} \sum_w \frac{I(x_i \in A')I(i \in w)}{\sum_{x_l \in A'} I(l \in w)}.
\end{align*}
Therefore $\E_w[S_n]_{i,j} = \E_w[S_n]_{j,i} = 0$ if $A \not= A'$.

In the cases of $A = A'$, $I(x_j \in A) = I(x_i \in A') = 1$. We consider the following possibilities of $w$. \\(a) For $i \notin w, j \notin w$,
$$
\frac{I(j \in w)}{\sum_{x_l \in A} I(l \in w)} = \frac{I(i \in w)}{\sum_{x_l \in A} I(l \in w)} = 0.
$$
(b) For $i \in w, j \in w$,
$$
\frac{I(j \in w)}{\sum_{x_l \in A} I(l \in w)} = \frac{I(i \in w)}{\sum_{x_l \in A} I(l \in w)} = \frac{1}{\sum_{x_l \in A} I(l \in w)}.
$$
(c) For $i \in w, j \notin w$, consider $w' = w  \backslash \{i\} \cup \{j\}$ s.t. $\sum_{x_l \in A} I(l \in w) = \sum_{x_l \in A} I(l \in w')$,
$$
\frac{I(j \in w')}{\sum_{x_l \in A} I(l \in w')} = \frac{I(i \in w)}{\sum_{x_l \in A} I(l \in w)} = \frac{1}{\sum_{x_l \in A} I(l \in w)}.
$$
(d) Similarly, for $i \notin w, j \in w$, consider $w' = w  \backslash \{j\} \cup \{i\}$,
$$
\frac{I(j \in w)}{\sum_{x_l \in A} I(l \in w)} = \frac{I(i \in w')}{\sum_{x_l \in A} I(l \in w')} = \frac{1}{\sum_{x_l \in A} I(l \in w)}.
$$
Since all $w$'s are equally likely, we conclude by symmetry that $\E_w [S_n]_{i,j} = \E_w [S_n]_{j,i}$, hence $\E_w [S_n]$ is symmetric.

To prove (2), notice $\forall x_i, x_j, x_k \in A$,
$$
\E_w[S_n]_{k,i} = \frac{1}{\binom{n}{\theta n}} \sum_w \frac{I(i \in w)}{\sum_{x_l \in A} I(l \in w)} = \E_w[S_n]_{j,i}.
$$
Therefore $\E_w[S_n]$, after proper permutation to gather points in same leaves together, is diagonally blocked with equal entries in each diagonal block and 0 elsewhere, thus positive semi-definite.

To show (3), notice that $S_n$ has column sums of $\leq1$ (not $=1$ due to chances of having no subsample points in the leaf), so does $\E_w[S_n]$. Thus $\norm{\E_w[S_n]}_1\leq 1$. Similarly, $\E_w[S_n]$ has rows sums of $\leq 1$ due to its symmetry therefore $\norm{\E_w[S_n]}_{\infty} \leq 1$. Therefore,
\begin{align*}
\rho(\E_w[S_n]) = \norm{\E_w[S_n]} & \leq \sqrt{\sup_{\norm{u} = 1} \E_w[S_n]^T \E_w[S_n] u} \\
& \leq  \sqrt{\sup_{\norm{u} = 1} \norm{\E_w[S_n]^T}_{\infty} \norm{\E_w[S_n]}_{\infty} \norm{u}_\infty} \\
& \leq  \sqrt{\sup_{\norm{u} = 1} \norm{\E_w[S_n]}_{1} \norm{\E_w[S_n]}_{\infty} \cdot 1} \\
& \leq 1.
\end{align*}
\end{proof}

\subsection{Stochastic Contraction}
\label{sec:app:sc}
\begin{definition}[Stochastic Contraction] Given real-valued stochastic process $\{X_t\}_{t \in \mathbb{N}}$, a sequence of $0 < \lambda_t \leq 1$, define
\begin{gather*}
\mathcal{F}_0 = \emptyset, \mathcal{F}_t = \sigma(X_1,\dots, X_t), \\
\epsilon_t =  X_t - \E[X_t | \mathcal{F}_{t-1}].
\end{gather*}
We call $X_t$ a stochastic contraction if the following is satisfied
\begin{itemize}
\item Vanishing coefficients $$\sum_{t=1}^{\infty} (1-\lambda_t) = \infty, \mbox{ i.e. } \prod_{t=1}^{\infty}\lambda_t = 0.$$
\item Mean contraction $$\lambda_tX_{t-1}I(X_{t-1}\leq 0) \leq \E[X_t | \mathcal{F}_{t-1}] \leq \lambda_{t} X_{t-1}I(X_{t-1}\geq 0), a.s..$$
\item Bounded deviation $$\sup |\epsilon_t| \to 0, \quad \sum_{t=1}^{\infty}\E[\epsilon_t^2] \leq \infty.$$
\end{itemize}
\end{definition}

\begin{lemma} If $\{X_t\}_{t \in \mathbb{N}}$ is a stochastic contraction.
\label{lemma:seriesconvergence}
\begin{itemize}
\item Almost sure convergence $$X_t \xlongrightarrow{a.s.} 0.$$
\item Kolmogorov maximal inequality. For any $T, \delta > 0$ s.t. $\beta = |X_T| + \delta - \sup_{t > T} |\epsilon_t| > 0$,
$$
P\left( \sup_{t \geq T}|X_t| \leq |X_T| + \delta  \right) \geq 1- \frac{4\sum_{t=T+1}^{\infty}\E[\epsilon_t^2]}{\min\{\delta^2, \beta^2\}}.
$$
\end{itemize}
\end{lemma}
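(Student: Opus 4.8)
The plan is to prove the two conclusions separately, both organized around the martingale structure of the deviations $\epsilon_t$. For the almost sure convergence I would work with the squared process $V_t = X_t^2$. Writing $X_t = \E[X_t\mid\mathcal{F}_{t-1}] + \epsilon_t$ and using $\E[\epsilon_t\mid\mathcal{F}_{t-1}]=0$ together with the magnitude consequence of mean contraction, $|\E[X_t\mid\mathcal{F}_{t-1}]|\le\lambda_t|X_{t-1}|$, the cross term vanishes and one obtains the one-step bound
$$
\E[V_t\mid\mathcal{F}_{t-1}] = \left(\E[X_t\mid\mathcal{F}_{t-1}]\right)^2 + \E[\epsilon_t^2\mid\mathcal{F}_{t-1}] \le \lambda_t^2 V_{t-1} + \E[\epsilon_t^2\mid\mathcal{F}_{t-1}].
$$
Since $\sum_t\E[\epsilon_t^2]<\infty$ forces $\sum_t\E[\epsilon_t^2\mid\mathcal{F}_{t-1}]<\infty$ almost surely, this is an almost-supermartingale (Robbins--Siegmund) recursion, so $V_t$ converges a.s. to a finite limit $V_\infty\ge 0$ and $\sum_t(1-\lambda_t^2)V_{t-1}<\infty$ a.s. Because $1-\lambda_t^2\ge 1-\lambda_t$ and vanishing coefficients give $\sum_t(1-\lambda_t)=\infty$, any strictly positive $V_\infty$ would make this series diverge; hence $V_\infty=0$ and $X_t\to 0$ a.s.

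For the Kolmogorov inequality I would condition on $\mathcal{F}_T$, treating $X_T$ and the barrier $b=|X_T|+\delta$ as fixed, and introduce the tail martingale $N_t=\sum_{s=T+1}^t\epsilon_s$ with $N_T=0$, which is $L^2$-bounded with $\E[N_t^2]\le\Sigma_T:=\sum_{s>T}\E[\epsilon_s^2]$. The core is a pathwise reduction: a crossing of the barrier by $|X_t|$ forces a comparable excursion of $N_t$. Here I would use the \emph{sign} form of mean contraction directly: when $X_{t-1}\ge 0$ we have $0\le\E[X_t\mid\mathcal{F}_{t-1}]\le\lambda_t X_{t-1}$, hence $X_t\le X_{t-1}+\epsilon_t$, and symmetrically for $X_{t-1}\le 0$. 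Following the positive excursion ending at the first passage time $\tau=\inf\{t>T:X_t>b\}$ back to its last nonpositive epoch $\sigma$ and telescoping these one-sided increments yields $X_\tau\le N_\tau-N_\sigma$ (or $X_\tau\le X_T+N_\tau$ if the path never revisits $0$). Thus crossing $b$ requires $N$ to rise by at least $\delta$ when there is no return, or by at least $b$ across $(\sigma,\tau]$ when there is, and symmetrically at the lower barrier $-b$. Since $\sup_{t>T}|N_t|$ controls both the one-sided rises and (up to a factor $2$) the span, the whole event $\{\sup_{t\ge T}|X_t|>b\}$ is contained in $\{\sup_{t>T}|N_t|>c\}$ for a threshold $c$ of order $\min\{\delta,\beta\}$; the reduced level $\beta=b-\sup_{t>T}|\epsilon_t|$ enters precisely because the overshoot at first passage is bounded by a single deviation, so $N$ only needs to reach $\beta$ before the final jump. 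Applying Kolmogorov's $L^2$ maximal inequality $P(\sup_{t>T}|N_t|\ge c)\le\Sigma_T/c^2$ then produces the claimed bound $1-4\Sigma_T/\min\{\delta^2,\beta^2\}$ (with $\sup_{t>T}|\epsilon_t|\to 0$ guaranteeing $\beta>0$ is attainable for large $T$).

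The main obstacle is exactly this middle reduction. Because $\E[X_t\mid\mathcal{F}_{t-1}]$ is only sign-constrained rather than a genuine linear contraction, unrolling the recursion with crude $|\epsilon_t|$ bounds is useless — those contracted sums diverge as $\lambda_t\to 1$ — so the argument must extract cancellation through the martingale $N_t$ instead. Getting the excursion accounting right, namely separating paths that do and do not return to $0$ before crossing and tracking the overshoot carefully enough that $\beta$ rather than $b$ appears in the threshold, is the delicate part; by contrast the almost-supermartingale step and the final appeal to Kolmogorov's inequality are routine. I would also remark that one replaces the conditional tail variance $\sum_{s>T}\E[\epsilon_s^2\mid\mathcal{F}_T]$ by its dominating unconditional counterpart $\sum_{s>T}\E[\epsilon_s^2]$ to match the stated form.
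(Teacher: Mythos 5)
Your proof is correct, but its two halves relate to the paper differently. For the almost sure convergence you take a genuinely different route: the paper never squares the process; instead it introduces the sign-change stopping times $T_k$, telescopes the one-sided bound $X_t \le X_{t-1}+\epsilon_t$ within each constant-sign segment to get $|X_t| \le |X_{T_k}| + |\sum_{s=T_k+1}^t \epsilon_s|$, and then treats two cases pathwise (infinitely many sign changes, where segment starting values are dominated by single deviations and tail sums vanish; finitely many, where monotonicity of $X_t-\sum\epsilon_s$ plus $\sum_t(1-\lambda_t)=\infty$ yields a contradiction). Your Robbins--Siegmund recursion on $V_t=X_t^2$ replaces all of this case analysis with one standard almost-supermartingale theorem; it is cleaner, uses only the magnitude consequence $|\E[X_t\mid\mathcal{F}_{t-1}]|\le\lambda_t|X_{t-1}|$ rather than the sign form, and does not need $\sup_{t}|\epsilon_t|\to 0$, so it proves the first bullet under weaker hypotheses (you should just record that $X_t\in L^2$ for all $t$, by induction from $\E[\epsilon_t^2]<\infty$ and the contraction, so the recursion is legitimate). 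For the maximal inequality your argument is essentially the paper's: the paper likewise decomposes the path at sign changes, bounds $|X_t|$ in the segment containing $T$ by $|X_T|+|\sum_{s=T+1}^t\epsilon_s|$ and in later segments by $\sup_{s>T}|\epsilon_s|+|\sum_{s=T'+1}^t\epsilon_s|$ (this is exactly where its $\beta$ comes from), and then applies the Kolmogorov $L^2$ maximal inequality to two-index suprema of the $\epsilon$-partial sums, with the same factor-of-two span trick producing the $4$. Your first-passage bookkeeping is, if anything, slightly sharper: the excursion ending at the first crossing either starts from $X_T>0$ (forcing $N_\tau>\delta$) or spans more than $b=|X_T|+\delta\ge\delta$, so your accounting actually yields the bound with $\delta$ alone, and the stated bound with $\min\{\delta,\beta\}$ follows a fortiori; your explanation that $\beta$ enters through the overshoot is therefore unnecessary (and differs from the paper's mechanism), but harmless. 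One small point in your favor: the $\epsilon_t$ are martingale differences, not independent random variables as the paper's write-up asserts, so your appeal to the martingale (Doob $L^2$) form of Kolmogorov's inequality is the right justification.
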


\begin{proof} Define the stopping time of sign changes
$$
T_0 = 0, T_k = \inf \{t > T_{k-1}|X_{t-1} \leq 0, X_t > 0 \mbox{ or } X_{t-1} \geq 0, X_t < 0  \}.
$$
We now look at every realized path and examine the segment of the process holding the same sign. W.o.l.g., suppose $X_t \geq 0$ for $T_k < t < T_{k+1}$. Easy to check
\begin{align}
\label{fml:semimrt}
X_t = \E[X_t | \mathcal{F}_{t-1}] + \epsilon_t \leq \lambda_t X_{t-1} + \epsilon_t  \leq X_{t-1} + \epsilon_t \leq X_{T_k} + \sum_{s=T_k+1}^t \epsilon_s.
\end{align}
Therefore $
\left|X_t\right| \leq \left|X_{T_k}\right| + \left|\sum_{s=T_k+1}^t \epsilon_s\right|,
$
same for the negative case. Since $\epsilon_t$'s are independent and $\sum_{t=1}^{\infty}\E[\epsilon_t^2] \leq \infty$, $\sum_{t=1}^{\infty}\epsilon_t$ exists a.s.. Write $N = \sup_k \{T_k \leq \infty\}$ the number of sign changes.

If there are infinitely many sign changes, i.e. $N = \infty$, by sending $k \to \infty$, $\left|X_{T_k}\right| \xlongrightarrow{a.s.} 0$ and $\left|\sum_{s=T_k+1}^{T_k+n} \epsilon_s\right| \xlongrightarrow{a.s.} 0, \forall n > 0. $ Hence $X_t \xlongrightarrow{a.s.} 0$.

If there are finitely many sign changes, we assume w.l.o.g. that for some k, $X_t \geq 0, \forall t \geq T_k$. (\ref{fml:semimrt}) can be written as $X_t - \epsilon_t \leq X_{t-1}$ which indicates $X_t - \sum_{s=T_k+1}^t \epsilon_s$ is decreasing, therefore has a limit ($-\infty$). Since $\sum_{s=T_k+1}^{\infty}\epsilon_s$ exists a.s., $X_t \xlongrightarrow{a.s.} c \geq 0$. Assume $c>0$,
$$
\sum_{s=T_k+1}^{\infty} \epsilon_s \geq \sum_{s=T_k+1}^{\infty} X_s - \lambda_s X_{s-1} = -\lambda_{T_k+1}X_{T_k} + \sum_{s=T_k+2}^{\infty}(1-\lambda_s)X_{s-1} = \infty,
$$
which is a contradiction. Therefore $X_t \xlongrightarrow{a.s.} 0$.

\

To show the maximum inequality, we take the same notations above, and also look at segmentations by sign changes. For any $t$ in the same segment as $T$, $$\left|X_t \right| \leq \left|X_T\right| + \left|\sum_{s=T+1}^t \epsilon_t \right| \leq  \left|X_T\right| + \sup_{T'>T} \left|\sum_{s=T+1}^{T'} \epsilon_s\right|.$$ For any $t$ in a different segment starting at $T'$,
$$
\left|X_t \right| \leq \left| X_T' \right| + \left| \sum_{s=T'+1}^t \epsilon_t \right| \leq  \left|X_T'\right| + \sup_{S>T'} \left|\sum_{s=T'+1}^{S} \epsilon_s\right| \leq \sup_{s > T}|\epsilon_s| + \left|\sum_{s=T'+1}^{S} \epsilon_s\right|.
$$
Now we consider any possible sequence of $\{\epsilon_t, t>T\}$ and allow $T',S$ to change. Kolmogorov maximal inequality implies
$$
P\left(\sup_{i, j>T} \left| \sum_{s=i}^{j} \epsilon_s  \right| \leq x \right) \geq P\left(\sup_{i >T} \left| \sum_{s=T}^{i} \epsilon_s  \right| \leq \frac{x}{2} \right) \geq 1- \frac{4\sum_{s=T}^{\infty}\E[\epsilon_s^2]}{x^2}.
$$
The conclusion is obtained by noticing that $|X_t| \leq |X_T| + \delta$ for any $\{\epsilon_t\}_{t>T}$ satisfying $$\sup_{i, j>T} \left| \sum_{s=i}^{j} \epsilon_s  \right| \leq \min\{\delta, \beta\}.$$
\end{proof}

\subsection{Proof of Theorem \ref{thm:stochasticcontraction}}
\begin{proof}
The idea is to define a sequence of adaptive orthonormal rotations $R_t \in \mathcal{F}_{t-1}$ to align the expected update with the previous step so that we can apply the $\mathbb{R}$ result component-wise. Define $R_t \E[Z_t | \mathcal{F}_{t-1}] = \gamma_{t-1} Z_{t-1},$ for some $\gamma_{t-1}>0, \gamma_{t-1} \in \mathcal{F}_{t-1}$. The contraction assumption also implies that $\gamma_{t-1}\leq \lambda_{t-1}$.
Define a new process $Z^*_i$ satisfying
\begin{enumerate}
\item $Z^*_1 = Z_1, R_1 = I$,
\item writing $R^*_t = \prod_{i=1}^n R_i \in \mathcal{F}_{t-1}$ s.t. $Z^*_t = R^*_t Z_t =  R^*_t \epsilon_t + R^*_t \E[Z_t|\mathcal{F}_{t-1}]. $
\end{enumerate}
Above implies $\norm{Z_t} = \norm{Z^*_t}$, thus we need to prove the equivalence that $Z^*_t \xlongrightarrow{a.s.} 0.$ Notice that
Here $\sum_{i=1}^n R^*_i \epsilon_i$ is component-wise a martingale with
$$
\sum_{i=1}^{\infty} E[\norm{R^*_i \epsilon_i}^2] =  \sum_{i=1}^{\infty} E[\norm{\epsilon_i}^2] < \infty,
$$
hence $\sum_{i=1}^n R^*_i \epsilon_i$ exists a.s.. Since the construction aligns $Z_t^*$ with $\E[Z^*_t|\mathcal{F}_{t-1}]$ we apply Lemma \ref{lemma:seriesconvergence} to obtain almost sure convergence to 0 component-wisely, thus $\norm{Z^*_t} \xlongrightarrow{a.s.} 0$.
\end{proof}

\subsection{Proof of Theorem \ref{thm:boulevardconvergence}}
\begin{proof}
Due to non-adaptivity $S_n$ is independent of $\hat{Y}_b$ for any $b$. Notice that $
Y^* = \lambda \E[S_n](Y - Y^*)
$ for $Y^* = \left[\frac{1}{\lambda}I+\E[S_n]\right]^{-1}\E [S_n] Y$.
Define the filtration $\mathcal{F}_b = \sigma(\hat{Y}_0,\dots,\hat{Y}_{b})$ and consider the sequence $Z_b = \hat{Y}_b - Y^*$. This sequence satisfies the stochastic contraction condition.
First, $\norm{Z_0} = \norm{Y^*} \leq \infty$. Notice
\begin{align*}
 \norm{\E[Z_b | \mathcal{F}_{b-1}]} & = \norm{\E\left[ \frac{b-1}{b}  \hat{Y}_{b-1} + \frac{\lambda}{b} S_n(Y - \Gamma_M(\hat{Y}_{b-1})) - Y^*\Big| \mathcal{F}_{b-1} \right]}\\
 & = \norm{\frac{b-1}{b} (\hat{Y}_{b-1}-Y^*) + \frac{\lambda}{b} \E[S_n](Y - \Gamma_M(\hat{Y}_{b-1})) - \frac{1}{b} Y^*}\\
 & \leq \frac{b-1}{b} \norm{\hat{Y}_{b-1}-Y^*} + \norm{\frac{\lambda}{b} \E[S_n](Y - \Gamma_M(\hat{Y}_{b-1})) - \frac{\lambda}{b}\E[S_n](Y - Y^*)} \\
 & \leq \frac{b-1+\lambda}{b} \norm{\hat{Y}_{b-1}-Y^*} \triangleq k_b \norm{Z_{b-1}},
\end{align*}
where $\sum_{b=1}^{\infty} (1-k_b) = \infty$. Since entries and row sums of are both $\leq 1$,
$$
\norm{S_n} \leq \sqrt{\norm{S_n}_{\infty} \norm{S_n}_1} \leq \sqrt{1 \times n} = \sqrt{n}.
$$
Therefore
\begin{align*}
\norm{\epsilon_b} & = \norm{Z_b - \E[Z_b | \mathcal{F}_{b-1}]}
 = \norm{\frac{\lambda}{b} (\E[S_n]-S_n)(Y - \Gamma_M(\hat{Y}_{b-1}))}
 \leq \frac{\lambda}{b} (1 + \sqrt{n}) 2\sqrt{n} M.
\end{align*}
Hence
$$
\sum_{b=1}^{\infty} \E[\norm{\epsilon_b}^2] \leq \left(\sum_{b=1}^{\infty} \frac{1}{b^2}\right) \cdot \lambda^2(1 + \sqrt{n})^2 4nM < \infty.
$$
We conclude that $Z_b \xlongrightarrow{a.s.} 0$, i.e. $\hat{Y}_b \xlongrightarrow{a.s.} Y^*$.
\end{proof}

\subsection{Proof of Corollary \ref{cor:prediction}}
\begin{proof} Expanding $\hat{f}(x)$ gives
\begin{align*}
\hat{f}(x) & = \lim_{B \to \infty} \frac{1}{B}\sum_{b=1}^B s_b(x)(Y-\hat{Y}_b) \\
& = \lim_{B \to \infty} \frac{1}{B}\sum_{b=1}^B s_b(x)(Y-Y^* + Y^* - \hat{Y}_b) \\
& = \lim_{B \to \infty} \frac{1}{B}\sum_{b=1}^B s_b(x)(Y-Y^*) +  \lim_{B \to \infty} \frac{1}{B}\sum_{b=1}^B s_b(x)(Y^* - \hat{Y}_b)\\
& = \E[s_b(x)] (Y - Y^*) + 0 \\
& = \E[s_n(x)] \left[\frac{1}{\lambda}I+\E[S_n]\right]^{-1} Y.
\end{align*}
\end{proof}

\subsection{Proof of Lemma \ref{lemma:expdecay}}
\begin{proof}
\blued{To prove (i), given $k_n$, we focus on the matrix multiplication.} Consider the expansion
\begin{align*}
\left[\frac{1}{\lambda}I + K_n\right]^{-1} = \lambda \sum_{i=0}^{\infty}\left((\lambda)^{2i}K_n^{2i} - (\lambda)^{2i+1}K_n^{2i+1}\right).
\end{align*}
We examine the column sums of each of the matrix powers. Start with $K_n^2$,
$$
\sum_{i=1}(K^2_n)_{i,1} = \sum_{i=1}^n \sum_{j=1}^n (K_n)_{i,j}(K_n)_{j,1} = \sum_{j=1}^n (K_n)_{j,1}\sum_{i=1}^n(K_n)_{i,j}.
$$
Since $K_n$ consists of structure vectors of sample points, for some $c>0$,
$$
1-\frac{c}{n} \leq \sum_{j=1}^n (K_n)_{i,j} = \sum_{j=1}^n (K_n)_{i,j} \leq 1, \quad i = 1, \dots, n.
$$
Given $K_n$ is nonnegative,
$$
\left(1-\frac{c}{n}\right)^2 \leq \sum_{i=1}(K^2_n)_{i,1} = \sum_{j=1}^n (K_n)_{j,1}\sum_{i=1}^n(K_n)_{i,j} \leq 1.
$$
Repeating the same discussion yields
$$
\left(1-\frac{c}{n}\right)^m \leq \sum_{i=1}(K^m_n)_{i,1}\leq 1.
$$
Therefore,
\begin{align*}
\lambda \left(\frac{1}{1-\lambda^2(1-\frac{c}{n})^2} - \frac{\lambda}{1-\lambda^2} \right) & \leq \sum_{j=1}^n \left[\frac{1}{\lambda}I + K_n\right]^{-1}_{j,1} \\
& = \lambda \left(\sum_{i=0}^{\infty}(\lambda)^{2i}(K_n^{2i})_{j,1} - (\lambda)^{2i+1}(K_n^{2i+1})_{j,1} \right) \\
& \leq \lambda \left(\frac{1}{1-\lambda^2} - \frac{\lambda}{1-\lambda^2(1-\frac{c}{n})^2} \right),
\end{align*}
where both the LHS and RHS reduce to $\frac{\lambda}{1+\lambda} + \O{\frac{1}{n}}$. So is true for any column sum of $\left[\frac{1}{\lambda}I + K_n\right]^{-1}$. Now given $k_n$ is nonnegative and $1 - \norm{k_n}_1 = \O{\frac{1}{n}}$ we reach the assertion.

\

\blued{To show (ii), under} locality, $k_{nj} = 0$ if $|x_i- x_j| >  d_n$, while $[K_n]_{i,j}=0$ if $|x_i- x_j| >  d_n$. Recursively, if $|x_i- x_j| > l_n \cdot  d_n$ then $[K_n^l]_{i,j} = 0$ for $ l \leq l_n$. As $k_n$ and $K_n$ are element-wisely nonnegative, we again expand the matrix inverse
\begin{align*}
\norm{r_n\proj{D_n^c}}_1 =  \sum_{|x - x_i|>l_n \cdot  d_n}|r_{ni}|
= &  \sum_{|x - x_i|>l_n \cdot  d_n} \left|\sum_j k_{nj} \left[\frac{1}{\lambda}I + K_n\right]^{-1}_{j,i} \right| \\
= & \sum_{|x - x_i|>l_n \cdot  d_n} \left|\sum_{|x - x_j| \leq d_n} k_{nj} \left[\frac{1}{\lambda}I + K_n\right]^{-1}_{j,i} \right| \\
\leq & \sum_{|x - x_j| \leq  d_n} k_{nj} \sum_{|x - x_i|>l_n \cdot  d_n} \left| \left[\frac{1}{\lambda}I + K_n\right]^{-1}_{j,i} \right| \\
\leq &  \sum_{|x - x_j| \leq d_n} k_{nj} \sum_{|x_i - x_j| > (l_n-1)\cdot d_n}\left| \left[\frac{1}{\lambda}I + K_n\right]^{-1}_{j,i} \right| \\
\leq & \sum_{|x - x_j| \leq  d_n} k_{nj} \sum_{|x_i - x_j| > (l_n-1)\cdot d_n} \lambda \sum_{l=l_n}^{\infty} \lambda^l [{K_n^l}]_{j,i} \\
\leq & \sum_{|x - x_j| \leq  d_n} k_{nj} \sum_{l=l_n}^{\infty} \lambda^{l+1} \\
\leq & \sum_{l=l_n}^{\infty} \lambda^{l+1} =  \frac{\lambda}{1-\lambda}\frac{1}{n}.
\end{align*}
\end{proof}

\subsection{Proof of Lemma \ref{lemma:rateofk}}
\begin{proof}
The idea is to bound $k_{nj}$ from both above and below. The condition $$ \inf_{A \in q \in Q_n} \sum_{i = 1}^n I(x_i \in A)  = \Ome{n^{\frac{1}{d+2}} } $$
implies that $k_{nj} = \O{n^{-\frac{1}{d+2}}}$. Given $\norm{k_n}_1 \leq 1$,
$$
\norm{k_n} \leq \sqrt{\norm{k_n}_1 \norm{k_n}_{\infty}} = \O{n^{-\frac{1}{2}\frac{1}{d+2}}}.
$$
On the other hand, given $\left| B_n \right|  = \Ome{ n \cdot d_n^d }$, there are at most
$$
\Ome{n \cdot d_n^d} = \Ome{n^{\frac{1}{d+1}}}
$$
$k_{nj}$'s that are positive. Since $\norm{k_n}_1 = 1-O(n^{-1})$, $$
\norm{k_{n}} = \Ome{ \sqrt{\left(n^{-\frac{1}{d+1}} \right)^2 \cdot n^{\frac{1}{d+1}}}} =  \Ome{n^{-\frac{1}{2}\frac{1}{d+1}}}.
$$
Those bounds also work for $\norm{r_n}$ given
$$
\frac{\lambda}{1+\lambda} \leq eigen\left(\left[\frac{1}{\lambda}I + K_n\right]^{-1}\right) \leq \lambda.
$$
\end{proof}

\subsection{Proof of Theorem \ref{thm:fixed}}
We first introduce a lemma regarding $|B_n|$ and $|D_n|$. They follow Binomial distributions with parameters depending on the local covariate density.
\begin{lemma}
Assume $X_1, X_2,\dots$ independent binomial random variables s.t. $X_n \sim Binom(n, p_n)$ and $np_n \to \infty$.
$$
\frac{X_n - np_n}{\sqrt{np_n(1-p_n)}} \xlongrightarrow{d} N(0,1).
$$
\end{lemma}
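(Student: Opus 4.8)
The plan is to recognize that although $p_n$ varies with $n$, for each fixed $n$ the variable $X_n$ is a sum of $n$ independent and identically distributed Bernoulli summands, so that the right tool is the Lindeberg--Feller central limit theorem for triangular arrays rather than the classical i.i.d.\ theorem. First I would write $X_n = \sum_{i=1}^n Y_{n,i}$, where for each $n$ the $Y_{n,i}$ are i.i.d.\ $\mathrm{Bernoulli}(p_n)$. Setting $\sigma_n = \sqrt{np_n(1-p_n)}$ and $\xi_{n,i} = (Y_{n,i}-p_n)/\sigma_n$, the array $\{\xi_{n,i}\}_{i=1}^n$ is row-wise independent and mean-zero with $\sum_{i=1}^n \var{\xi_{n,i}} = np_n(1-p_n)/\sigma_n^2 = 1$, and the quantity to be studied is exactly $\sum_{i=1}^n \xi_{n,i}$.

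The key observation driving the proof is that each centered Bernoulli summand is uniformly bounded: $|Y_{n,i}-p_n| \le 1$, hence $|\xi_{n,i}| \le 1/\sigma_n$. Consequently, for any fixed $\epsilon > 0$, as soon as $n$ is large enough that $1/\sigma_n < \epsilon$, the indicator $\mathbf{1}(|\xi_{n,i}| > \epsilon)$ vanishes identically, so the Lindeberg sum $\sum_{i=1}^n \E[\xi_{n,i}^2 \mathbf{1}(|\xi_{n,i}| > \epsilon)]$ is eventually $0$. Thus the Lindeberg condition is satisfied for free, provided only that $\sigma_n \to \infty$. I would then invoke the Lindeberg--Feller theorem to conclude $\sum_{i=1}^n \xi_{n,i} \xlongrightarrow{d} N(0,1)$, which is the assertion.

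The single point requiring care --- and essentially the only obstacle --- is deducing $\sigma_n^2 = np_n(1-p_n) \to \infty$ from the stated hypothesis $np_n \to \infty$. This implication is immediate when $p_n$ stays bounded away from $1$, and in our intended application the relevant probabilities come from the shrinking neighborhoods $B_n$ and $D_n$, so $p_n \to 0$ and $\sigma_n^2 \sim np_n \to \infty$; I would flag this so the lemma is applied only in that regime. Once $\sigma_n \to \infty$ is in hand, the rest is automatic.

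As an alternative that avoids the triangular-array machinery, I could argue directly with characteristic functions. The standardized variable has characteristic function $e^{-itnp_n/\sigma_n}\big(1-p_n+p_n e^{it/\sigma_n}\big)^n$; taking logarithms and expanding $e^{it/\sigma_n}$ to second order, the imaginary first-order contributions cancel against the centering factor and the real second-order term collapses to $-t^2/2$ because $np_n(1-p_n) = \sigma_n^2$, with all higher-order remainders controlled by $\sigma_n \to \infty$. Either route reduces the lemma to the same variance-growth condition, so I would present the Lindeberg--Feller version as the main argument for its brevity and keep the characteristic-function computation in reserve.
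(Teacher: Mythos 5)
Your proposal is correct, but there is essentially nothing in the paper to compare it against: the paper states this lemma without proof, as an auxiliary remark placed just before the proof of Theorem \ref{thm:fixed} (whose own Lindeberg--Feller verification concerns the coefficient vector $r_n$, not the binomial counts $|B_n|$ and $|D_n|$). Your triangular-array argument is the standard one and is carried out properly: writing $X_n = \sum_{i=1}^n Y_{n,i}$ with row-wise i.i.d.\ Bernoulli$(p_n)$ summands, the uniform bound $|\xi_{n,i}| \leq 1/\sigma_n$ makes every Lindeberg sum vanish identically once $\sigma_n = \sqrt{np_n(1-p_n)} \to \infty$, so the CLT follows. More importantly, you have caught a genuine imprecision in the statement itself: $np_n \to \infty$ does \emph{not} imply $\sigma_n \to \infty$. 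For instance, with $p_n = 1 - n^{-2}$ one has $np_n \to \infty$, yet $n - X_n \to 0$ in probability and the standardized variable converges to a point mass at $0$ rather than to $N(0,1)$, so the lemma as literally stated is false; the correct hypothesis is $np_n(1-p_n) \to \infty$. This repair is harmless for the paper, since there $p_n$ is the probability of a shrinking neighborhood, so $p_n \to 0$ and $\sigma_n^2 \sim np_n \to \infty$, exactly the regime you restrict to. Either of your two routes (Lindeberg--Feller, or the characteristic-function expansion with remainders controlled by $\sigma_n \to \infty$) then completes the proof that the paper omitted.
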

\blued{The proof of Theorem \ref{thm:fixed} follows.}
\begin{proof}
Notice that
\begin{align*}
\hat{f}_n(x) - r_n^Tf(X_n) =r_n^T \vec{\epsilon}_n.
\end{align*}
\chgx{To obtain a CLT} we check the Lindeberg-Feller condition of $r_n^T \vec{\epsilon}_n$, i.e. for any $\delta > 0$,
$$
\lim_n \frac{1}{\norm{r_n}^2 \sigma_{\epsilon}^2}\sum_{i=1}^n  \E \left[(r_{ni} \epsilon_i)^2 I(|r_{ni}\epsilon_i| > \delta \norm{r_n}\sigma_{\epsilon}) \right] = 0.
$$
Since $\norm{k_n}_{\infty} = O\left(n^{-\frac{1}{d+2}}\right)$ and $\left[\frac{1}{\lambda}I + K_n\right]^{-1}$ having row sums of $\frac{\lambda}{1+\lambda}+\O{n^{-1}}$, we have
$$
\norm{r_n}_{\infty} \leq \norm{k_n}_{\infty} \cdot \norm{\left[\frac{1}{\lambda}I + K_n\right]^{-1}}_1 = O\left(n^{-\frac{1}{d+2}}\right).
$$
Furthermore, since $\norm{r_n} = \Ome{n^{-\frac{1}{2}\frac{1}{d+1}}}$, we get
$$
\frac{\norm{r_n}_{\infty}}{\norm{r_n}} = O\left( n^{-\frac{1}{d+2} + \frac{1}{2}\frac{1}{d+1}}\right),
$$
which justifies the Lindeberg-Feller condition when $\epsilon$ is sub-Gaussian by
\begin{align*}
\sum_{i=1}^n  \E \left[(r_{ni} \epsilon_i)^2 I(|r_{ni}\epsilon_i| > \delta \norm{r_n}\sigma_{\epsilon}) \right]
& \leq \sum_{i=1}^n r_{ni}^2 \sqrt{\E[\epsilon_i^4]\cdot \E[I(|r_{ni}\epsilon_i| > \delta \norm{r_n}\sigma_{\epsilon})^2]}  \\
& \leq \sum_{i=1}^n r_{ni}^2 \sqrt{\E[\epsilon_i^4]} \cdot \sqrt{P\left( |\epsilon_i| \geq \frac{\delta \| r_n \| \sigma_{\epsilon}}{r_{ni}} \right)} \\
& \leq \sum_{i=1}^n r_{ni}^2 \sqrt{\E[\epsilon_i^4]} \sqrt{2 \exp\left(-\frac{1}{2\sigma_{\epsilon}^2}\cdot \left(\frac{\delta \| r_n \| \sigma_{\epsilon}}{r_{ni}} \right)^2\right)} \\
& \leq \norm{r_n}^2 \exp \left( - \O{n^{\frac{2}{d+2}-\frac{1}{d+1}}\right)} \longrightarrow 0,
\end{align*}
since
$$
P(\epsilon > t) \leq \exp\left(-\frac{t^2}{2\sigma^2_{\epsilon}}\right)
$$
for sub-Gaussian $\epsilon$.
\end{proof}

\subsection{Kolmogorov's Extension Theorem}
\label{sec:ket}
Define $(x_1,\dots) = \X \in [0,1]^{d\times \mathbb{N}}$ and $\epsilon = (\epsilon_1,\dots) \in \mathbb{R}^\mathbb{N}$, where the probability measures on $[0,1]^{d\times \mathbb{N}}$ and $\mathbb{R}^\mathbb{N}$ are uniquely decided by the product measures on the cylinder spaces reflecting i.i.d. sampling i.e. $y_i = f(x_i) + \epsilon_{i}$ for $i \in \mathbb{N}$. Write $\pi_i$ the cumulative coordinate projection, i.e. $\pi_i(a_1,\dots,a_n,\dots) = (a_1,\dots,a_i). $ We can calculate $k_n$ and $K_n$ with respect to $\Pi_n = (\pi_n(\X), \pi_n(\epsilon))$. Thus
$$
\rho_n(\X, \epsilon) = \frac{\hat{f}_n(x; \Pi_n) - k_n^T(x; \Pi_n)[\frac{1}{\lambda}I + K_n(\Pi_n)]^{-1}f(\Pi_n)}{\norm{k_n(x;\Pi_n)^T[\frac{1}{\lambda}I + K_n(\Pi_n)]^{-1}}}
$$
reflects the prediction after using a random sample of size $n$. Using Lemma \ref{lemma:fixtorandom}, CLT of $\rho_n$ requests an almost surely claim of Theorem \ref{thm:fixed} where the sequence of $(x_1, y_1), \dots,(x_n, y_n)$ comes from $(\pi_n (\X), \pi_n (\epsilon))$.

We also benefit from the following lemma.
\begin{lemma}
\label{lemma:fixtorandom}
Assume $X: \Omega_1 \to S$, independent of $\epsilon: \Omega_2 \to S,\{f_n: S\times S \to \mathbb{R} \}$ sequence of measurable functions. Assuming for a.s. $x \in \Omega_1$,
$$
f_n(x, \epsilon) \xrightarrow{d} N(0,1).
$$
Then
$$
f_n(X, \epsilon) \xrightarrow{d} N(0,1).
$$
\end{lemma}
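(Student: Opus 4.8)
The plan is to reduce the claim to pointwise convergence of characteristic functions and then transfer the almost-sure (in $x$) convergence to the mixture by dominated convergence. The essential observation is that the limiting law $N(0,1)$ is the same for every fixed value $x$ of $X$, so integrating over the distribution of $X$ cannot change it.

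First I would fix $t \in \mathbb{R}$ and, exploiting the independence of $X$ and $\epsilon$ via Fubini's theorem, factor the characteristic function of $f_n(X,\epsilon)$ as
$$
\E\left[e^{it f_n(X,\epsilon)}\right] = \E_X\left[\psi_n(X,t)\right], \qquad \psi_n(x,t) := \E_\epsilon\left[e^{it f_n(x,\epsilon)}\right],
$$
where $\psi_n(\cdot,t)$ is the conditional characteristic function obtained by integrating out $\epsilon$ with $x$ held fixed; its measurability in $x$ follows from measurability of $f_n$ together with Fubini.

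Second, I would invoke the hypothesis: for $P_X$-almost every realization $x$ we have $f_n(x,\epsilon) \xrightarrow{d} N(0,1)$, so by the L\'evy continuity theorem $\psi_n(x,t) \to e^{-t^2/2}$ for every $t$. Hence the integrand $\psi_n(\cdot,t)$ converges pointwise to the constant $e^{-t^2/2}$ off a $P_X$-null set. Because $\left|\psi_n(x,t)\right| \leq 1$ uniformly and $P_X$ is a probability measure, the dominated convergence theorem yields
$$
\E\left[e^{it f_n(X,\epsilon)}\right] = \E_X\left[\psi_n(X,t)\right] \longrightarrow e^{-t^2/2}.
$$
Since $t$ was arbitrary, a second application of L\'evy's continuity theorem gives $f_n(X,\epsilon) \xrightarrow{d} N(0,1)$, as required.

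The hard part here is only measure-theoretic bookkeeping rather than genuine analysis: one must verify that $\psi_n(\cdot,t)$ is measurable and that independence of $X$ and $\epsilon$ legitimately produces the Fubini factorization, so that the conditional characteristic function coincides exactly with the fixed-$x$ object appearing in the hypothesis. One also has to check that the $P_X$-null set on which the fixed-design convergence may fail does not affect the outer integral, which it does not. An equivalent route avoids characteristic functions by working with an arbitrary bounded continuous test function $g$ and applying the same Fubini-plus-dominated-convergence argument to $h_n(x) = \E_\epsilon\left[g(f_n(x,\epsilon))\right]$, concluding $\E\left[g(f_n(X,\epsilon))\right] \to \E[g(Z)]$ for $Z \sim N(0,1)$.
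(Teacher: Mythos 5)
Your proposal is correct and takes essentially the same approach as the paper: factor the distribution of $f_n(X,\epsilon)$ over the independent coordinates via Fubini, then pass the almost-sure-in-$x$ convergence through the outer integral by dominated convergence (with bound $1$), exploiting that the limit law $N(0,1)$ is the same for almost every $x$. The only cosmetic difference is that you run this argument on characteristic functions with L\'evy's continuity theorem at both ends, whereas the paper runs it directly on the distribution functions, i.e.\ $P(f_n(x,\epsilon)\leq t)\to\Phi(t)$ integrated over $\mu_x$, which suffices because $\Phi$ is continuous everywhere.
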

\begin{proof}
Probabilistic DCT guarantees that
\begin{align*}
\lim_n P(f_n(X, \epsilon) \leq t) & =\lim_n \int \int\mathbbm{1}_{\{f_n(x, \epsilon) \leq t\}} d\mu_x d\mu_{\epsilon}  \\
&= \lim_n \int P(f_n(x, \epsilon) \leq t) d\mu_x \\
&= \int \lim_n P(f_n(x, \epsilon) \leq t) d\mu_x \\
& = \int \Phi(t) d\mu_x = \Phi(t).
\end{align*}
\end{proof}


\subsection{Proof of Lemma \ref{lemma:random}}
\label{sec:lemma:random}
\begin{proof}
In order to prove the lemma, we combine Lemma \ref{lemma:rateofk}, Theorem \ref{thm:fixed} and Lemma \ref{lemma:fixtorandom} and show that all assumptions are met from a point-wise perspective on $[0,1]^{d\times\mathbb{N}}$, i.e. fixed sample sequence are given by $\pi_n \X, n\geq 1$.

\

i) We show for a.s. $\X$, $\left| B_n \right| = \left| B_n(\pi_n \X) \right|= \Ome{ n \cdot d_n^d }$. Consider random $\X$. \blued{Since $\mu(x)$ is bounded away from $0$, \blued{the expected amount of sample points in $B_n$ should be at least proportional to its volume $d_n^d$. } Define $a_n$ s.t. $na_n=\E[\left| B_n \right|] = \Ome{nd_n^d}$.} Considering \blued{the multiplicative Chernoff Bound applied to binomial random variables. For fixed $0< c < 1$,

\begin{align*}
P\left( |B_n| \leq c \cdot n a_n\right) \leq & \exp \left( -\frac{(1-c)^2 na_n}{2}\right).
\end{align*}
Further, since $na_n = \Ome{nd_n^d} = \Ome{n^{\frac{1}{d+1}}}$, there exists some $M>0$ s.t.
\begin{align*}
\sum_{n=1}^{\infty}  \exp\left(-\frac{(1-c)^2na_n}{2}\right) &\leq \int_{n=0}^{\infty} \exp\left(-\frac{(1-c)^2M}{2}\cdot n^{\frac{1}{d+1}}\right) dn\\
& \leq \int_{u=0}^{\infty}  \exp\left(-\frac{(1-c)^2M}{2}\cdot u \right)(d+1)u^d du\\
& < \infty.
\end{align*}
}
As per Borel-Contelli, since
$$
\sum_{n=1}^{\infty} P(|B_n(\pi_n \X)| \leq c \cdot na_n ) < \infty,
$$
then for a.s. $\X$, events of $|B_n(\pi_n \X)| \leq  c \cdot na_n$ happens finite times, \blued{which leads to our conclusion.}

\

ii) To show $$\inf_{A \in q \in Q_n} \sum_{i = 1}^n I(x_i \in A) = \Ome{n^{\frac{1}{d+2}}}$$ for a.s. $\X$, apply the concentration bound again to get:\blued{
\begin{align*}
P\left(\exists A \in q \in Q_n \mbox { s.t.} \sum_{i=1}^n I(x_i \in A) \leq n^{\frac{1}{d+2}}\right)
& \leq |Q_n| \cdot n^{\frac{d+1}{d+2}}\cdot P\left(  \sum_{i=1}^n I(x_i \in A) \leq n^{\frac{1}{d+2}} \right) \\
& \leq |Q_n| \cdot n^{\frac{d+1}{d+2}}\cdot \exp\left(-\frac{n^{\frac{1}{d+2} + \nu} \cdot n^{-2\nu}}{2}\right) \\
& \leq \frac{1}{n}\exp\left( \frac{1}{2} n^{\frac{1}{d+2}-\nu} - n^{\alpha}\right) \cdot n \cdot \exp\left(-\frac{1}{2}n^{\frac{1}{d+2}-\nu}\right)\\
& \leq \exp\left(-n^{\alpha}\right).
\end{align*}
Therefore, noticing that for $\alpha>0$,
$$
\sum_{n=1}^{\infty} \exp\left(-n^{\alpha}\right)< \infty,
$$}
the Borel-Cantelli theorem indicates our assertion. Hence, for a.s. $\X$, \blued{the triangular array} $\pi_n \X$ satisfies the assumptions in Theorem \ref{thm:fixed}.
\end{proof}

\subsection{Proof of Theorem \ref{thm:main} (Main Theorem)}
\begin{proof}
We first show that for given $x \in [0,1]^d$,
$$
\frac{r_n^Tf(X_n) - \frac{\lambda}{1+\lambda}f(x)}{\norm{r_n}} \xlongrightarrow{p} 0.
$$
Recall the index set $D_n = \{i : |x_i-x| \leq l_n\cdot d_n \}$. Denote $\Delta = \frac{\lambda}{1+\lambda} - \sum_{i=1}^n r_{n,i} = \O{n^{-1}}$ and $\tilde{f}(x) = (f(x),\dots, f(x))^T$ an $n$-vector. We split
\begin{align*}
\frac{r_n^Tf(X_n) - \frac{\lambda}{1+\lambda}f(x)}{\norm{r_n}} & =  \frac{r_n^T[f(X_n) - \tilde{f}(x)]}{\norm{r_n}} - \frac{\Delta \cdot f(x)}{\norm{r_n}} \\
= & - \frac{\Delta \cdot f(x)}{\norm{r_n}} + \frac{\left(r_n\proj{D_n}\right)^T [f(X_n) - \tilde{f}(x)]\proj{D_n}}{\norm{r_n}} + \frac{\left(r_n\proj{D_n^c}\right)^T [f(X_n) - \tilde{f}(x)]\proj{D_n^c}}{\norm{r_n}}.
\end{align*}
By replacing $\O{\cdot}$ in the fixed case by $\Op{\cdot}$ in the random design case, recall that
$$
\Omep{ n^{-\frac{1}{2}\frac{1}{d+1}}} = \norm{k_n}, \norm{r_n} = \Op{n^{-\frac{1}{2}\frac{1}{d+2}}}.
$$
On one hand, we notice that
$$
\left| \left(r_n\proj{D_n^c}\right)^T [f(X_n) - \tilde{f}(x)]\proj{D_n^c}\right| \leq \norm{r_n\proj{D_n^c}}_1 \cdot \norm{[f(X_n) - \tilde{f}(x)]\proj{D_n^c}}_{\infty} = \Op{\frac{1}{n} \cdot 2M_f} = \Op{n^{-1}}.
$$
Therefore
$$
\frac{\left(r_n\proj{D_n^c} \right)^T  [f(X_n) - \tilde{f}(x)]\proj{D_n^c}}{\norm{r_n}} \xlongrightarrow{p} 0.
$$
And similarly since $| \Delta | = \O{n^{-1}}$,
$$
\frac{\Delta \cdot f(x)}{\norm{r_n}} \xlongrightarrow{p} 0.
$$
On the other hand, we can show similarly as $|B_n|$ that $|D_n| = \Omep{n \cdot (l_n\cdot d_n)^d)}$ a.s. and therefore
\begin{align*}
\frac{\left| \left(r_n\proj{D_n} \right)^T  [f(X_n) - \tilde{f}(x)]\proj{D_n}\right|}{\norm{r_n}} & \leq \frac{\norm{r_n\proj{D_n}}\cdot\norm{[f(X_n) - \tilde{f}(x)]\proj{D_n}}}{\norm{r_n}} \\
& \leq \norm{[f(X_n) - \tilde{f}(x)]\proj{D_n}} \\
& = \Op{\sqrt{n\cdot (l_nd_n)^d\cdot (l_nd_n\cdot{\alpha})^2}} \\
& = \Op{\sqrt{n\cdot \log_n^{d+2} \cdot d_n^{d+2}}} \\
& = \Op{\sqrt{n\cdot \log_n^{d+2} \cdot n^{-\frac{d+2}{d+1}}}}\\
& = \Op{(\log n)^{\frac{d+2}{2}} n^{-\frac{1}{2}\frac{1}{d+1}}}.
\end{align*}
Therefore
$$
\frac{\left(r_n\proj{D_n}\right)^T [f(X_n) - \tilde{f}(x)]\proj{D_n}}{\norm{r_n}} \xlongrightarrow{p} 0.
$$
Combining \chgx{the above calculations} gives the result that
$$
\frac{r^T_nf(X_n) - \frac{\lambda}{1+\lambda}f(x)}{\norm{r_n}} \xlongrightarrow{p} 0.
$$
Therefore by Slutsky's Theorem,
$$
\frac{\hat{f}_n(x) - \frac{\lambda}{1+\lambda}f(x)}{\norm{r_n}} = \frac{\hat{f}_n(x) - r_n^Tf(X_n)}{\norm{r_n}} + \frac{r_n^Tf(X_n) - \frac{\lambda}{1+\lambda}f(x)}{\norm{r_n}}\xlongrightarrow{d} N(0,\sigma^2_{\epsilon}).
$$
\end{proof}

\subsection{Proof of Theorem \ref{thm:escape}}
\begin{proof} We refer to Theorem \ref{thm:stochasticcontraction}. Choose $\delta = r$, and choose $T$ s.t. $\forall t > T,$ $$\frac{\lambda}{t} (1 + \sqrt{n}) 2\sqrt{n} M \leq \frac{r}{\sqrt{d}}, \mbox{ i.e. } \sup \norm{\epsilon_t} \leq \frac{r}{\sqrt{d}},$$
In this case, $\beta = \norm{\hat{Y}_t} + \delta - \sqrt{d}\sup_{t\geq T}\norm{\epsilon_t} \geq \delta = r$. By the conditional independence of $\hat{Y}_t$ and $\epsilon_b, b>t$ in the contraction region,
\begin{align*}
P\left( \hat{Y}_b \in C, \forall b \geq t \big| \hat{Y}_t \in B(Y^*, r)\right) & \geq P\left( \sup_{b > t}\norm{\hat{Y}_b - Y^*} \leq \norm{\hat{Y}_t - Y^*} + \delta   \Big| \hat{Y}_t \in B(Y^*, r)\right)\\
& = P\left( \sup_{b > t}\norm{\hat{Y}_b - Y^*} \leq \norm{\hat{Y}_t - Y^*} + \delta \right)\\
& \geq 1- \frac{4\sqrt{d}\sum_{b=t+1}^{\infty}\E[\epsilon_b^2]}{r^2} \longrightarrow 1.
\end{align*}
\end{proof}

\section{On Empirical Study}
\label{sec:emp}
\subsection{Empirical Study Details}
\label{sec:appemp}
\begin{itemize}
\item Figure \ref{fig:dist_truth} shows the distribution of true responses for our simulated training set.
\item \blued{The 4 test points used for comparing Boulevard and kernel ridge regression are:
(0.1, 0.1, 0.1, 0.1, 0.1), (0.6, 0.9, 0.8, 0.9, 0.7), (0.1, 0.1, 0.9, 0.9, 0.9), (0.9, 0.1, 0.1, 0.1, 0.9).}
\item The 10 test points \blued{used for showing limiting distributions and generating reproduction intervals} are:
(0.5, 0.5, 0.5, 0.5, 0.5),
(0.2, 0.2, 0.2, 0.2, 0.2),
(0.1, 0.9, 0.1, 0.9, 0.1),
(0.1, 0.1, 0.9, 0.9, 0.9),
(0.9, 0.1, 0.1, 0.1, 0.9),
(0.5, 0.1, 0.9, 0.1, 0.5),
(0.3, 0.2, 0.7, 0.8, 0.6),
(0.4, 0.2, 0.3, 0.6, 0.7),
(0.2, 0.7, 0.8, 0.3, 0.5),
(0.3, 0.6, 0.4, 0.9, 0.5).
\item Table \ref{tab:es} shows the settings we use in studies. The labels are: MSE for Figure \ref{fig:sim_syn}, MSE-DatasetName for Figure \ref{fig:sim_real}, Limiting for Figure \ref{fig:pred_dist}, Variance for Table \ref{tab:pd_var} and RI for Figure \ref{fig:pi}. Abbreviations: $n$ for sample size, $\theta$ for subsample rate, ntree for ensemble size, $k$ for terminal leave size after subsample (which has to be corrected when no subsample is involved, i.e. GBT), $\lambda$ as in Boulevard iterations.
\end{itemize}

\begin{figure}[htbp] 
   \centering
   \includegraphics[width=0.6\textwidth]{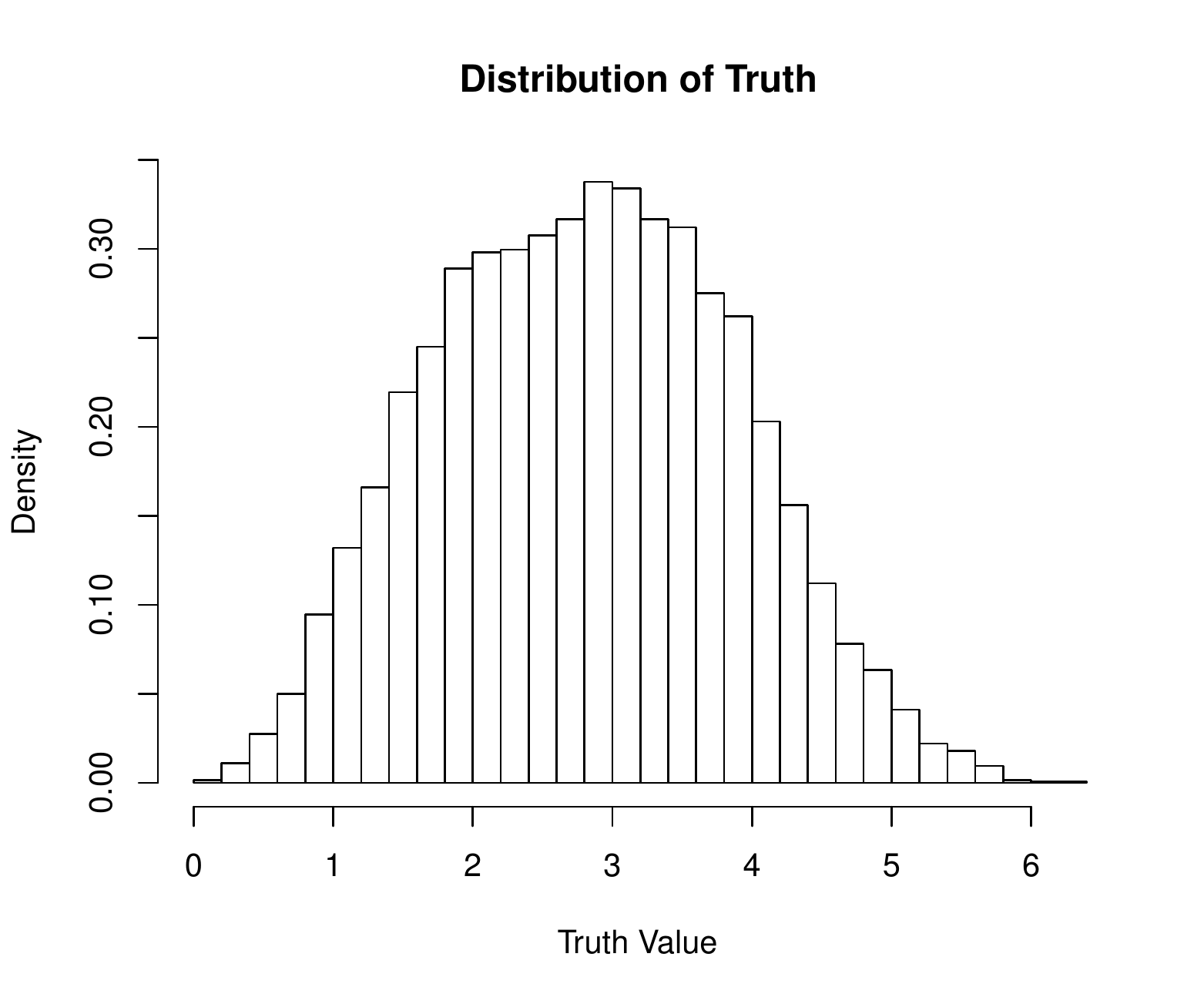}
   \caption{Distribution of truth used when assessing limiting distribution and reproduction interval.}
   \label{fig:dist_truth}
\end{figure}

\begin{table}[htbp]
   \centering

   \begin{tabular}{cccccc} 
      \toprule
	 label 				& n 		& $\theta$ & ntree 	& 		k 		& $\lambda$  \\
	 \midrule
	 MSE-(1-4) 		& 5000	& 0.3 		& 1000		& 		20		& 0.8	 			\\
	 MSE-Boston 	& 506	& 0.8 & 1000 & 5 & 0.8 \\
	 MSE-CCPP 		& 9568 	& 0.5	 &  1000 & 50 &  0.8 \\
	 MSE-CASP 		&20000	& 0.5			& 1000		& 	50			& 0.8			\\
	 MSE-Airfoil 		& 1503	&	0.8		& 1000 		&  40				& 0.8			\\
	 Limiting-(1-4) 	& 1000 & 0.8 & 2000 & 10 & 0.5 \\
	 Variance-(1-4) 	& 5000 & 0.8 & 3000 & 20 & 0.5 \\
	 RI-(1-2)			& 1000 & 0.8 & 2000 & 10 & 0.5 \\
	 RI-(3-4)			& 5000 & 0.8 & 2000 & 10 & 0.5 \\
     \bottomrule
   \end{tabular}
   \caption{Parameters used in empirical study.}
   \label{tab:es}
\end{table}

\subsection{Scaling of Prediction Error with Response Error}
Table \ref{tab:pd_var} shows the experiment in which we apply symmetric uniform errors and observe the scaling of prediction standard deviation along with the increase of error standard deviation.
\begin{table}[htbp]
   \centering
   \begin{tabular}{c|cccccccccc} 
      \toprule
      Error\textbackslash  Fixed Point& 1&2&3&4&5&6&7&8&9&10\\
      \midrule
0 &0.030&0.044&0.044&0.049&0.050&0.037&0.038&0.033&0.032&0.040 \\
Unif[-1,1] &0.067&0.089&0.096&0.087&0.096&0.083&0.081&0.074&0.071&0.082 \\
Unif[-2,2] &0.119&0.154&0.172&0.158&0.162&0.152&0.122&0.139&0.137&0.145 \\
Unif[-4,4] &0.243&0.271&0.278&0.278&0.288&0.317&0.284&0.289&0.318&0.254 \\
      \bottomrule
   \end{tabular}
   \caption{Prediction standard deviations scale with error standard deviations.}
   \label{tab:pd_var}
\end{table}

\end{document}